\theoremstyle{thmstyleone}%
\theoremstyle{thmstyletwo}%
\newtheorem{Example}{Example}%
\newtheorem{Remark}{Remark}%
\theoremstyle{thmstyleone}%
\newtheorem{Theorem}{Theorem}[section]
\newtheorem{Corollary}[Theorem]{Corollary}
\newtheorem{definition}[Theorem]{Definition}
\newtheorem{Lemma}[Theorem]{Lemma}
\newtheorem{Proposition}[Theorem]{Proposition}
\DeclareMathOperator{\Ker}{Ker}
\DeclareMathOperator{\Attr}{Attr}
\DeclareMathOperator\supp{supp}
\DeclareMathOperator\Fix{Fix}
\DeclareMathOperator{\Ran}{Ran}
\DeclareMathOperator{\spec}{spec}
\def\Nstar{\mathcal{N}_{\star}}
\begin{document}

\title[Decoherence-free algebras]{\center{Decoherence-free algebras in} \\quantum dynamics}

\author*[1,2]{\fnm{Daniele} \sur{Amato}}\email{daniele.amato@uniba.it}

\author[1,2]{\fnm{Paolo} \sur{Facchi}}\email{paolo.facchi@ba.infn.it}

\author[1,2,3]{\fnm{Arturo} \sur{Konderak}}\email{akonderak@cft.edu.pl}

\affil[1]{\orgdiv{Dipartimento di Fisica}, \orgname{Universit\`a di Bari}, \orgaddress{\postcode{70126} \city{Bari}, \country{Italy}}}

\affil[2]{\orgdiv{INFN Sezione di Bari},  \orgaddress{\postcode{70126} \city{Bari}, \country{Italy}}}

\affil[3]{\orgdiv{Center for Quantum-Enabled Computing, Center for Theoretical Physics, Polish Academy of Sciences},  \orgaddress{\street{Aleja Lotnik\'{o}w 32/46,} \postcode{02-688} \city{Warsaw}, \country{Poland}}}



\abstract{In this Article we analyze the algebraic properties of the asymptotic dynamics  of finite-dimensional open quantum systems in the Heisenberg picture. In particular, a natural product (Choi-Effros product) can be defined in the asymptotic regime. Motivated by this structure, we introduce a new space called the Choi-Effros decoherence-free algebra. Interestingly, this space is both a \textit{C*}-algebra with respect to the composition product, and a \textit{B*}-algebra with respect to the Choi-Effros product. Moreover, such space admits a direct-sum decomposition revealing a clear relationship with the attractor subspace of the dynamics. In particular, the equality between the attractor subspace and the Choi-Effros decoherence-free algebra is a necessary and sufficient condition for a faithful dynamics. Finally, we show how all the findings do not rely on complete positivity but on the much weaker Schwarz property.}

\maketitle

\section{Introduction}\label{intro}

Quantum technologies are evolving so rapidly over the last few years that a solid understanding of open-system theory, the theoretical framework of such applications, is crucially needed. In particular, asymptotics of open quantum systems has attracted a lot of interest by theoretical physicists~\cite{jex_st_2012,jex_st_2018,albert2019asymptotics}, in light of possible applications in quantum information processing~\cite{zanardi2014coherent}, quantum reservoir engineering~\cite{Zoller_res_eng,Wolf_res_eng}, and matrix product states~\cite{perez2006matrix}.
  
At the same time, long-time behavior of open quantum systems has drawn the attention of several mathematicians since the birth of the field in the early seventies~\cite{kossakowski1972quantum}. In particular, after the seminal works by Gorini, Kossakowski, Sudarshan~\cite{GKS_76}, and Lindblad~\cite{Lindblad_76} on quantum dynamical semigroups, several works~\cite{evans1977irreducible,Spohn_77,Frigerio_78,Frigerio_Verri_82} focused on the existence and the uniqueness of a stationary state of this kind of dynamics. 
The more recent papers by Fagnola and Rebolledo~\cite{fagnola_2001,fagnola2002subharmonic} can be framed in this research direction.

Motivated by the works on environmental decoherence by Blanchard and Olkiewicz~\cite{blanchard2003quantum,blanchard2003decoherence,blanchard2006decoherence} (see also~\cite{bulinskii1995some}), the asymptotic Heisenberg dynamics of open quantum systems was also studied in terms of its \textit{multiplicative} properties~\cite{dhahri2010decoherence,fagnola2008algebraic,fagnola2019role,carbone2020period}, using the notion of decoherence-free algebra~\cite{evans1977irreducible,Frigerio_Verri_82}. 

In particular, a decomposition of the dynamics into reversible and stable parts was sought~\cite{fagnola2019role,batkai2012decomposition}, following the seminal works by  Jacobs~\cite{jacobs1957fastperiodizitatseigenschaften}, DeLeeuw and Glicksberg~\cite{deleeuw1959almost,deleeuw1961applications}. Incidentally, multiplicative properties of the dynamics also turned out to have interesting connections with quantum error correction~\cite{choi2009multiplicative,rahaman2017multiplicative}. 
Also, multiplicative features of more general maps, including positive ones, were also taken into account in the literature~\cite{stormer2007multiplicative}.

With few exceptions~\cite{carbone2020period}, the majority of works adopting this approach study Markovian continuous-time evolutions. Moreover, the dynamics is usually supposed to be faithful.

A \textit{spectral} approach was instead adopted by Wolf and Perez-Garcia~\cite{wolf2010inverse}, Albert~\cite{albert2019asymptotics}, and the authors~\cite{AFK_asympt,AFK_asympt_2} in the Schr\"odinger picture and by Bhat \textit{et al.} \cite{rajarama2022peripheral} and again the authors~\cite{AFK_asympt_4} in the Heisenberg picture. This strategy defines the attractor subspace, where the asymptotic dynamics takes place, in terms of the peripheral (unimodular) eigenvalues of the evolution. Notice that those works tackled the problem of the asymptotics of open quantum systems \emph{without} the assumption of faithfulness or Markovianity. Moreover, Bhat~\textit{et al.}~\cite{rajaramaperipheral_2} and the authors~\cite{AFK_asympt_3} started to relate the multiplicative and spectral approaches for the asymptotic evolution.

In this Article we focus on the asymptotic discrete-time dynamics of a finite-dimensional open quantum system in the Heisenberg picture. 
Before summarizing the findings of the present work, let us recall the guiding principle behind the multiplicative strategy to the asymptotics addressed by the mathematical community over the last fifty years.

The seminal papers that introduce the decoherence-free algebra~\cite{blanchard2003quantum,blanchard2003decoherence,blanchard2006decoherence,bulinskii1995some} are based on the observation that a $\ast$-automorphism on the space of bounded operators on a Hilbert space is necessarily a unitary conjugation~\cite{stormer2012positive}. Since one expects that the large time-dynamics of an open quantum system is a unitary-like evolution, in the multiplicative perspective it is reasonable to identify the asymptotic subspace with the largest subspace on which the dynamics acts as a $\ast$-homomorphism, i.e. the decoherence-free algebra. Indeed, for Markovian evolutions, taken into account in~\cite{blanchard2003decoherence,bulinskii1995some}, the restriction of the dynamics to the decoherence-free algebra is a unitary conjugation, related to the Hamiltonian part of the generator of the evolution~\cite{dhahri2010decoherence}.

However, it is worthwhile to note that this is not generally true in the non-Markovian (even faithful) case as a consequence of a residual classical asymptotic dynamics~\cite{AFK_asympt,AFK_asympt_2}. This is in line with the fact that a $\ast$-automorphism on a ${C}^\ast$-algebra is not necessarily a unitary conjugation.

Moreover, another complication emerges in the non-faithful case: the composition product equipping the initial algebra of observables may not be appropriate in the asymptotic limit~\cite{contraction,contraction2} and, instead, the Choi-Effros product~\cite{choi1977injectivity} is more suitable. Therefore, we will study the multiplicative properties of a Heisenberg dynamics at large times with respect to this product.

In particular, we will introduce the Choi-Effros decoherence-free algebra, see Eq.~\eqref{star_dec_def}, which generalizes the standard one by replacing the composition product with the Choi-Effros product defined by Eq.~\eqref{star_prod_H}. Interestingly, it turns out that this space is a ${C}^\ast$-algebra with respect to the composition product, and a ${B}^\ast$-algebra with respect to the Choi-Effros product, see Theorems~\ref{sum_dec_N_th} and~\ref{C*-dec_th}, respectively. 


In particular, we will see that the Choi-Effros decoherence-free algebra admits a direct-sum decomposition (Theorem~\ref{sum_dec_N_th}), which naturally relates it to the attractor subspace defined by Eq.~\eqref{attr_def}, where the large-time evolution of the system occurs in the spectral perspective. Therefore, the Choi-Effros decoherence-free algebra always contains the attractor subspace [condition \textit{b)} of Theorem~\ref{mul_ch_UCP}] and, interestingly, equality characterizes faithful evolutions (Theorem~\ref{faith_mul_ch}). These properties reveal that, according to the algebraic structure of the attractor subspace (Theorem~\ref{C_star_UCP}), the Choi-Effros product is suitable for the study of the multiplicative properties of the dynamics in the asymptotic limit. Also, in the spirit of \cite{rajaramaperipheral_2}, these findings further bridge the gap between the spectral and multiplicative methods to address the asymptotics.

Finally, we will show in Section~\ref{Schw_CP} that all the results obtained in the present Article exploit a much weaker property than complete positivity, called the operator Schwarz inequality, see Eq.~\eqref{op_Schw_ineq}. Roughly speaking, the Schwarz property is sufficient for the description of the asymptotics of an open quantum system~\cite{Frigerio_Verri_82,wolf2010inverse}. Similarly, Schwarz positivity, rather than complete positivity, turns out to be sufficient to obtain a number of monotonicity results~\cite{carlen2022monotonicity,carlen2022characterizing} such as the data processing inequality~\cite{lindblad1975completely}, crucial in quantum information theory~\cite{nielsen2002quantum}. Notice also that complete positivity may be replaced by this property when we address the reversibility problem of quantum channels~\cite{jenvcova2012reversibility}.

The Article is organised as follows. Section~\ref{prel_sec} recalls some preliminary concepts needed in the following. In Section~\ref{star_dec_subsec} a new space related to Heisenberg asymptotics, called the Choi-Effros decoherence-free algebra, is introduced and its properties, as well as its relation with the attractor subspace, are discussed in detail. In Subsection~\ref{sec:qubit} we analyze the qubit case, while in Section~\ref{examples_sec} we show the application of the general results to several examples of qubit and qutrit UCP maps. Section~\ref{Schw_CP} shows that all the findings discussed in the previous sections for Heisenberg evolutions continue to be valid for the much larger class of Schwarz maps. After the conclusions (Section~\ref{concl}), we present in Appendix~\ref{Nstar_seminorm} an alternative way, based on seminorms, to recover the attractor subspace from the Choi-Effros decoherence-free algebra.
\section{Preliminaries}
\label{prel_sec}
In this section we set up the notation and summarize the main results on the algebraic properties of the asymptotic dynamics. 

%


First, remember that quantum mechanics can be described in different frameworks. According to the standard Dirac-Schr\"odinger picture, the fundamental mathematical objects of the theory are states, described as unit rays of a Hilbert space. Observables are then introduced as self-adjoint (linear) operators on such space. One of the more fundamental descriptions is the algebraic picture of quantum theory~\cite{John_God_Neumann_on_the,bratteli2012operator}. In this case, one starts with the algebra of observables, and then introduces states as linear functionals over it, detached from any Hilbert space structure~\cite{fgk}. Other descriptions of quantum mechanics could be considered, like the Schwinger picture~\cite{Schwinger_2018}, where quantum mechanics is encoded in the transitions~\cite{groupoidI,groupoidII,groupoidIII}.
		
		In this Article, we will mainly focus on the algebraic description, which turns out to be convenient for analyzing the mathematical structure of the asymptotics in the Heisenberg picture. In this picture, observables are defined as the self-adjoint elements of a $C^*$-algebra $\mathcal A$, namely a Banach space with a product $\cdot$, an involution $*$ and a norm $\norm{\ }$ satisfying the $C^\ast$-identity $\| a^\ast \!\cdot a \| = \| a \|^2, \, a \in \mathcal{A}$. We will frequently use the shorthand $( \mathcal{A} , \cdot , \| \cdot \| , \ast)$. 
%

		One of the main results in the theory of $C^\ast$-algebras, which motivated their definition, is the Gel'fand-Neumark theorem~\cite{gelfandNormierte,gelfand1943imbedding}. It shows that every $C^\ast$-algebra may be faithfully represented as a norm-closed $\ast$-subalgebra of bounded operators on some Hilbert space. Throughout this work, we will mostly consider the $C^\ast$-algebra $\mathcal{B}(\mathcal{H})$ of bounded operators on a finite-dimensional Hilbert space $\mathcal{H}$ and its $C^\ast$-subalgebras. From now on we will drop the symbol $\cdot$ in the equations for the composition product between bounded operators over $\mathcal{H}$. Moreover, in all the $\ast$-algebras we shall consider, the involution will be the adjoint in $\mathcal{B}(\mathcal{H})$, therefore we will simply denote a $B^\ast$- or $C^\ast$-algebra $\mathcal{A}$ as $( \mathcal{A} , \cdot , \| \cdot \|)$. 
		
		
		Once the observables have been defined, states are introduced as linear functionals over $\mathcal{A}$, $\omega: \mathcal A\rightarrow \mathbb C$ with $\omega(a^*a)\geqslant 0$, $a \in \mathcal{A}$, and $\norm{\omega}=1$. It is always possible to represent a state as a vector in a proper \textit{GNS} (or cyclic) representation $\pi_\omega$ on a Hilbert space
		$\mathcal H_\omega$ with $\omega(a)=\braket{\Omega_\omega}{\pi_\omega(a)\Omega_\omega}$.
		In particular, this representation is unique (up to a unitary equivalence). We will see in the following sections how this representation can be used to obtain (up to a $\ast$-automorphism) the attractor subspace as a suitable GNS Hilbert space of the Choi-Effros decoherence-free algebra.

	\subsection{Unital completely positive maps} 
	\label{asymptotic_definition_section}
		
		The present subsection is devoted to the basic properties of the dynamics in the Heisenberg picture, 
		which is natural for the algebraic picture of quantum theory, where observables are the primary objects. 
		 
		In the discrete-time description of the dynamics, the evolution in the Heisenberg picture of a finite-dimensional open quantum system is described by a unital completely positive (UCP) map $\Phi$ on $\mathcal{B}(\mathcal{H})$, where $\mathcal B(\mathcal H)$ is the algebra of bounded operators on a $d$-dimensional Hilbert space $\mathcal{H}$. Let $\braket{\cdot}{\cdot}$ be the scalar product in $\mathcal{H}$, which we will assume to be conjugate-linear in the first entry. 
		Here $\mathcal{B}(\mathcal{H})$ plays the role of the algebra of observables of the system undergoing the evolution. For a survey on positive and completely positive maps in the finite-dimensional setting see e.g.~\cite{chruscinski2022dynamical}. If the system observable at time $t=0$ is $A$, its evolution at time $t=n\in \mathbb{N}$ will be given by the action of the $n$-fold composition $\Phi^n$ of the map $\Phi$, i.e. $A(n)=\Phi^{n}(A)$.

		The spectrum $\spec(\Phi)$ of a UCP map $\Phi$, i.e. the set of its eigenvalues, satisfies the following three properties
		\begin{enumerate}[a)]
			\item $1 \in \spec(\Phi)$;
			\item $\lambda \in \spec(\Phi) \Rightarrow \bar{\lambda}\in \spec(\Phi)$;
			\item $\spec(\Phi)\subseteq \{ \lambda \in \mathbb{C} \,\vert \, \abs{\lambda} \leqslant 1 \}$.
		\end{enumerate}
		\begin{Remark}
			\label{spectr_UP}
			These properties still hold for the larger class of positive and unital maps~\cite{wolf2012quantum,Asorey2008}.
		\end{Remark}
		Following the spectral approach employed by the physics community \cite{wolf2010inverse,albert2019asymptotics,AFK_asympt,AFK_asympt_2}, the asymptotic dynamics is obtained in the large $n$-limit and it ends inside the asymptotic, peripheral or \textit{attractor subspace} of $\Phi$, defined as
		\begin{equation}
			\label{attr_def}
			\Attr(\Phi):= \mbox{span} \{ X \in \mathcal{B}(\mathcal{H}) \,,\, \Phi(X)=\lambda X \mbox{ for some } \lambda \in \spec_{\mathrm{P}}(\Phi) \} ,
		\end{equation}
		where 		
		\begin{equation}
			\spec_{\mathrm{P}}(\Phi):=\{ \lambda \in \spec(\Phi) \,,\, \abs{\lambda}=1 \}
		\end{equation}
		is the \textit{peripheral spectrum} of $\Phi$.
		
		A crucial property of the attractor subspace 
		of $\Phi$ which we will heavily use in the following is~\cite[Proposition 6.12]{wolf2012quantum}:
		\begin{equation}
			\label{Attr_inv_sub}
			\Phi\Attr(\Phi)=\Attr(\Phi),
		\end{equation}
		i.e. the attractor subspace $\Attr(\Phi)$ is invariant under $\Phi$. A relevant subspace of $\Attr(\Phi)$ is the \emph{fixed-point space} of $\Phi$:
		\begin{equation}
			\Fix(\Phi)= \{ X \in \mathcal{B}(\mathcal{H}) \,,\, \Phi(X)=X \},
		\end{equation}
		namely the eigenspace of $\Phi$ corresponding to the eigenvalue $\lambda = 1$. The space $\Fix(\Phi)$ contains the \textit{stationary observables} of the dynamics described by $\Phi$.
		
		Consider the spectral or \textit{Jordan} decomposition of a UCP map $\Phi$
		\begin{equation}
			\Phi=\sum_{\lambda_k \in \spec(\Phi)} (\lambda_k\mathcal{P}_k + \mathcal{N}_k),
			\label{Jordan_dec}
		\end{equation}
		where $\mathcal{P}_{k}$ and $\mathcal{N}_{k}$ are the eigenprojections and eigenilpotents of $\Phi$ corresponding to the $k$th eigenvalue $\lambda_{k}$~\cite{kato2013perturbation}. 
		Importantly, all peripheral eigenvalues $\lambda_k \in \spec_{\mathrm{P}}(\Phi)$ are semisimple, i.e. the corresponding eigennilpotents $\mathcal{N}_k$ vanish~\cite[Proposition 6.2]{wolf2012quantum}. Define the \textit{peripheral projection} $\mathcal{P}_{\mathrm{P}}$ of $\Phi$ as
		\begin{align}
			\label{P_p_def}
			\mathcal{P}_{\mathrm{P}}&=\sum_{\lambda_k\in\spec_{\mathrm{P}}(\Phi)}\mathcal{P}_k.
		\end{align}
		Note that it is a projection onto the attractor subspace $\Attr(\Phi)$ of $\Phi$. Also, the map $\mathcal{P}_{\mathrm{P}}$ can be written in terms of the UCP map $\Phi$ as 
			
			\begin{equation}
			\label{P_p_for}
			\mathcal{P}_{\mathrm{P}}=\lim_{i\rightarrow \infty}\Phi^{n_i},
			\end{equation}
		for some increasing sequence $( n_i)_{i\in\mathbb{N}}$. 
		From~\eqref{P_p_for} it is clear that $\mathcal{P}_{\mathrm{P}}$ is a UCP map, since the set of UCP maps is convex, compact, and closed under composition. In the subsequent sections, we will largely apply the following property to the peripheral projection $\mathcal{P}_{\mathrm{P}}$ of a UCP map $\Phi$~\cite[Lemma 2.4]{hamana1979injective}.
	\begin{Lemma}
			\label{lemma_Hamana}
			Let $\Phi$ be an idempotent UCP map, namely $\Phi^2=\Phi$. Then
			\begin{equation}
				\Phi(\Phi(X)\Phi(Y))=\Phi(\Phi(X)Y)=\Phi(X\Phi(Y)), \quad X,Y \in\mathcal{B}(\mathcal{H}).
			\end{equation} 
		\end{Lemma}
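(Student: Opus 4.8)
The plan is to reduce all three identities to a single bimodule-type statement and then to establish that statement with a two-step Schwarz argument. First I would record the two elementary consequences of idempotency that drive everything. Since $\Phi^2=\Phi$, every element of the range is fixed, $\Ran(\Phi)=\Fix(\Phi)$, so $a:=\Phi(X)$ and $b:=\Phi(Y)$ satisfy $\Phi(a)=a$ and $\Phi(b)=b$; and for every $Y$ the correction $c:=Y-\Phi(Y)$ lies in $\Ker(\Phi)$, because $\Phi(c)=\Phi(Y)-\Phi^2(Y)=0$. Writing the differences of the three quantities as $\Phi(\Phi(X)\Phi(Y))-\Phi(\Phi(X)Y)=-\Phi(ac)$ and $\Phi(\Phi(X)\Phi(Y))-\Phi(X\Phi(Y))=-\Phi(c'b)$, with $c':=X-\Phi(X)\in\Ker(\Phi)$, the Lemma becomes equivalent to the claim that $\Phi(ac)=0$ and $\Phi(c'b)=0$ whenever $a,b\in\Fix(\Phi)$ and $c,c'\in\Ker(\Phi)$.

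The key step is to prove $\Phi(ac)=0$ for $a\in\Fix(\Phi)$ and $c\in\Ker(\Phi)$. Invoking $2$-positivity (implied by complete positivity) in its matrix Schwarz form for the pair $(a^*,c)$, and using $\Phi(a^*)=a^*$ together with $\Phi(c)=0$, one obtains the positive block matrix
\[
\begin{pmatrix}
\Phi(a a^*) - a a^* & \Phi(a c) \\
\Phi(c^* a^*) & \Phi(c^* c)
\end{pmatrix}
\geqslant 0 .
\]
The natural temptation is to argue that $a$ belongs to the multiplicative domain of $\Phi$, which would require $\Phi(aa^*)=aa^*$; this is exactly where the obstacle lies, since the range of $\Phi$ need \emph{not} be closed under the composition product — precisely the reason the Choi-Effros product is needed — so $\Phi(aa^*)\neq aa^*$ in general and the $(1,1)$ entry above does not vanish on its own. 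The resolution is to apply $\mathrm{id}_2\otimes\Phi$ to the matrix: positivity is preserved by $2$-positivity, while idempotency collapses the $(1,1)$ entry, $\Phi(\Phi(aa^*)-aa^*)=\Phi(aa^*)-\Phi(aa^*)=0$. A positive block matrix with vanishing $(1,1)$ entry has vanishing off-diagonal entry, hence $\Phi(\Phi(ac))=0$, and idempotency once more gives $\Phi(ac)=\Phi^2(ac)=0$.

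Finally I would deduce the right-handed identity by symmetry rather than repeating the computation. Since $\Phi$ is positive it is $\ast$-preserving, so $\Fix(\Phi)$ and $\Ker(\Phi)$ are both $\ast$-closed; applying the established left identity to $b^*\in\Fix(\Phi)$ and $c'^*\in\Ker(\Phi)$ yields $\Phi(b^*c'^*)=0$, and taking adjoints returns $\Phi(c'b)=0$. Combining the two, all three differences vanish and the three expressions coincide, proving the Lemma. The only genuinely delicate point is the second application of $\Phi$ to the Schwarz matrix: it is what replaces the failed multiplicative-domain argument and crucially exploits idempotency twice.
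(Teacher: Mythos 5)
Your proof is correct. The reduction to showing $\Phi(ac)=0$ for $a\in\Fix(\Phi)=\Ran(\Phi)$ and $c\in\Ker(\Phi)$ is exact; the matrix Schwarz inequality for the pair $(a^*,c)$ is valid for UCP maps; applying $\mathrm{id}_2\otimes\Phi$ kills the $(1,1)$ entry by idempotency; a positive block matrix with vanishing $(1,1)$ corner has vanishing off-diagonal entries; and the adjoint symmetry legitimately yields the right-handed identity since positive maps are $\ast$-preserving. However, your route differs from the one the paper relies on: the paper does not prove this lemma at all but cites Hamana~\cite[Lemma 2.4]{hamana1979injective}, whose argument — mirrored in the paper's own proof of Lemma~\ref{lemma_lemme} — is a \emph{scalar} polarization: apply the Schwarz inequality~\eqref{op_Schw_ineq} to $\Phi(X)+tY$, $t\in\mathbb{R}$, apply $\Phi$ once more so that idempotency annihilates the $t$-independent term, and let $t\to 0^{\pm}$ to force the linear coefficient to vanish. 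Both arguments exploit idempotency at the same two places, but they consume different amounts of positivity: your block-matrix step needs $\mathrm{id}_2\otimes\Phi$ to preserve positivity (2-positivity), and the matrix Schwarz form itself is standardly derived from 3-positivity or complete positivity, whereas the polarization proof uses nothing beyond the operator Schwarz inequality. This difference is not cosmetic for this paper: Section~\ref{Schw_CP} extends every result to Schwarz maps, which need not be 2-positive, and in that setting the peripheral projection $\mathcal{P}_P$ is only guaranteed to be a Schwarz map, so your proof would not apply as written while Hamana's does. The repair is cheap: run your same reduction, but instead of the block matrix apply the scalar Schwarz inequality to $a^*+\lambda c$, $\lambda\in\mathbb{C}$, obtaining
\begin{equation}
\Phi(aa^*)-aa^*+\lambda\,\Phi(ac)+\bar{\lambda}\,\Phi(ac)^*+\abs{\lambda}^2\,\Phi(c^*c)\geqslant 0 ,
\end{equation}
then apply $\Phi$, use $\Phi^2=\Phi$ to delete the first term, and let $\lambda\to 0$ along real and imaginary directions to conclude $\Phi(ac)=0$; this recovers your argument using Schwarz positivity alone.
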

It is often useful to regard $\mathcal{B}(\mathcal{H})$ as a Hilbert space with respect to the \textit{Hilbert-Schmidt scalar product}
		\begin{equation}
			\braket{A}{B}_{\mathrm{HS}}= \tr(A^\ast B), \quad A,B \in \mathcal{B}(\mathcal{H}).
		\end{equation}
		The Schrödinger dynamics is given by $\Phi^\dagger$, the Hilbert-Schmidt adjoint of $\Phi$. In particular, $\Phi^\dagger$ is a completely positive trace-preserving (CPTP) map on $\mathcal{B}(\mathcal{H})$, called in the physics literature a \textit{quantum channel}~\cite{nielsen2002quantum}. Therefore, if $\rho$ is a state of the system, i.e. a density operator on $\mathcal{H}$, then $\Phi^\dagger(\rho)$ describes the corresponding evolved state. A state $\rho$ is said to be \textit{stationary} whenever it is invariant under the Schrödinger evolution
		\begin{equation}
			\Phi^\dagger(\rho)=\rho.
		\end{equation}
We can define the attractor subspace $\Attr(\Phi^\dagger)$ of the Schrödinger dynamics $\Phi^\dagger$ as in Eq.~\eqref{attr_def} as well as the peripheral projection, which turns out to be the adjoint $\mathcal{P}_{\mathrm{P}}^\dagger$ of $\mathcal{P}_{\mathrm{P}}$ by~\eqref{P_p_for}.
	
Within UCP maps, the \textit{faithful} ones are particularly relevant for physical reasons. They are also referred to as stationary maps~\cite{rajaramaperipheral_2}, and they are defined as follows.
		\begin{definition}
			A UCP map $\Phi$ is said to be \textit{faithful} if $\Phi^\dagger$, the Hilbert-Schmidt adjoint of $\Phi$, admits an invertible stationary state $\rho$, i.e.
			\begin{equation}
				\Phi^\dagger (\rho)=\rho >0.
			\end{equation} 
		\end{definition}		
		We will see in the following sections how faithful maps are very well-behaved in the asymptotic limit. 
		
		Before going on, observe that in the previous discussion we always assumed the map $\Phi$ to be UCP. In particular, all such maps satisfy the \textit{operator Schwarz inequality}:
		\begin{equation}
			\label{op_Schw_ineq}
			\Phi(X^\ast X) \geqslant \Phi(X)^\ast \Phi(X), \quad X \in \mathcal{B}(\mathcal{H}). 
		\end{equation}
				
		Importantly,~\eqref{op_Schw_ineq} is the only property used to prove all the results of the present work (see Section~\ref{Schw_CP} for a discussion on this point).
		\subsection{Algebraic structure of the asymptotics}
		\label{algebraic_structure_attractor_section}
		
		In this section we will summarize the main algebraic properties of the asymptotic dynamics.

 It is possible to introduce the structure of a $C^*$-algebra on the attractor subspace. This is done by means of the \textit{Choi-Effros product}, see~\cite[Theorem~3.1]{choi1977injectivity},~\cite[Proposition~5.3]{fidaleo2022spectral}, and~\cite[Theorem~2.3]{rajarama2022peripheral}. See also~\cite[Theorem 3]{AFK_asympt_4} for the explicit structure of the attractor subspace.
		\begin{Theorem}
			\label{C_star_UCP}
			Let $\Phi$ be a UCP map. Then its attractor subspace $\Attr(\Phi)$ is a unital ${C}^\ast$-algebra with respect to the Choi-Effros product
			\begin{equation}
				\label{star_prod_H}
				X \star Y := \mathcal{P}_{\mathrm{P}}(XY),	
			\end{equation}
		and the operator norm $\|\cdot \|$. We will denote this structure by $(\Attr(\Phi) , \star , \| \cdot \|)$.
		\end{Theorem}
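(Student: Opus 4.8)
The plan is to realize $\Attr(\Phi)$ as the range of the idempotent UCP map $\mathcal{P}_P$ and to import the Choi--Effros construction, verifying each $C^\ast$-algebra axiom using only the idempotency of $\mathcal{P}_P$, Hamana's Lemma~\ref{lemma_Hamana}, and the operator Schwarz inequality~\eqref{op_Schw_ineq}. First I would set up the underlying Banach $\ast$-space. Since $\mathcal{H}$ is finite-dimensional, $\Attr(\Phi)$ is a closed subspace of $\mathcal{B}(\mathcal{H})$, hence a Banach space for $\|\cdot\|$. It is self-adjoint: if $\Phi(X)=\lambda X$ with $\abs{\lambda}=1$, then taking adjoints and using that $\Phi$ is $\ast$-preserving gives $\Phi(X^\ast)=\bar\lambda X^\ast$ with $\abs{\bar\lambda}=1$, so $X^\ast\in\Attr(\Phi)$. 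Because $\mathcal{P}_P$ projects onto $\Attr(\Phi)$, every $X\in\Attr(\Phi)$ satisfies $\mathcal{P}_P(X)=X$; in particular the product $X\star Y=\mathcal{P}_P(XY)$ lands in $\Attr(\Phi)$ and is manifestly bilinear, so $\star$ is a well-defined internal operation. The unit will be the identity $I$, which lies in $\Attr(\Phi)$ since $\Phi(I)=I$ by unitality, and indeed $I\star X=\mathcal{P}_P(X)=X=X\star I$.

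The two substantive points are associativity and the $C^\ast$-identity. For associativity I would reduce both triple products to $\mathcal{P}_P(XYZ)$. Applying Lemma~\ref{lemma_Hamana} to the idempotent UCP map $\mathcal{P}_P$ in the form $\mathcal{P}_P(\mathcal{P}_P(U)\,W)=\mathcal{P}_P(U\,\mathcal{P}_P(W))$, together with $\mathcal{P}_P(X)=X$ and $\mathcal{P}_P(Z)=Z$, one obtains
\begin{equation}
(X\star Y)\star Z=\mathcal{P}_P\!\big(\mathcal{P}_P(XY)\,Z\big)=\mathcal{P}_P(XYZ)=\mathcal{P}_P\!\big(X\,\mathcal{P}_P(YZ)\big)=X\star(Y\star Z).
\end{equation}
Compatibility of the involution, $(X\star Y)^\ast=Y^\ast\star X^\ast$, is immediate from $\mathcal{P}_P$ being $\ast$-preserving, and submultiplicativity $\|X\star Y\|\leqslant\|X\|\,\|Y\|$ follows from the contractivity of the UCP map $\mathcal{P}_P$ and submultiplicativity of the operator norm.

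The crux, and the step I expect to be most delicate, is the $C^\ast$-identity $\|X^\ast\star X\|=\|X\|^2$, since it is here that the special interplay between the projection and the product is needed. The Schwarz inequality~\eqref{op_Schw_ineq} for $\mathcal{P}_P$ is essential: using $\mathcal{P}_P(X)=X$, it gives
\begin{equation}
X^\ast\star X=\mathcal{P}_P(X^\ast X)\geqslant \mathcal{P}_P(X)^\ast\,\mathcal{P}_P(X)=X^\ast X\geqslant 0.
\end{equation}
Monotonicity of the operator norm on positive operators then yields $\|\mathcal{P}_P(X^\ast X)\|\geqslant\|X^\ast X\|=\|X\|^2$, while the contractivity of $\mathcal{P}_P$ gives the reverse bound $\|\mathcal{P}_P(X^\ast X)\|\leqslant\|X^\ast X\|=\|X\|^2$. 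Combining the two proves the $C^\ast$-identity and completes the verification that $(\Attr(\Phi),\star,\|\cdot\|)$ is a unital $C^\ast$-algebra. It is precisely this last step that makes transparent why complete positivity can be weakened to the Schwarz property throughout, as anticipated.
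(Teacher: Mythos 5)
Your proposal is correct, and every step holds up: the self-adjointness and unitality observations, the associativity reduction to $\mathcal{P}_P(XYZ)$ via Lemma~\ref{lemma_Hamana} together with $\mathcal{P}_P\vert_{\Attr(\Phi)}=\mathrm{id}$, and the $C^\ast$-identity obtained by sandwiching $\|\mathcal{P}_P(X^\ast X)\|$ between $\|X\|^2$ (from the Schwarz inequality $\mathcal{P}_P(X^\ast X)\geqslant X^\ast X\geqslant 0$ plus monotonicity of the norm on positive operators) and $\|X\|^2$ (from contractivity of the unital positive map $\mathcal{P}_P$). The comparison point is that the paper never proves this theorem inline: it imports it from the literature (Choi--Effros, Fidaleo \emph{et al.}, Bhat \emph{et al.}, and, for the Schwarz-map generalization in Proposition~\ref{Schw_alg_attr}, Hamana). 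What you have done is reconstruct that external proof from ingredients the paper itself supplies, and your reconstruction is faithful to how the paper argues elsewhere: your associativity computation is essentially the same as the paper's proof of Lemma~\ref{asso_N_CE_prop} (only simpler, since here $\mathcal{P}_P$ fixes the factors, whereas there one needs $\Nstar$ to be closed under composition), and your closing observation that only the Schwarz inequality is used is exactly the content of the Remark following Proposition~\ref{Schw_alg_attr}. What the citation buys the paper is brevity; what your direct verification buys is transparency about which axiom needs which hypothesis --- in particular that idempotency plus Hamana's identity handle associativity, while positivity enters only through the Schwarz inequality and contractivity in the $C^\ast$-identity --- which is precisely the fact the paper needs later to claim the Schwarz-map generalization.
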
	
		Now, Eq.~\eqref{P_p_for} suggests that the Choi-Effros product is the suitable one in order to equip $\Attr(\Phi)$ with an algebraic structure, differently from the composition product $\cdot$.
		Observe that we can define the Choi-Effros product for all $X,Y \in \mathcal{B}(\mathcal{H})$, using Eq.~\eqref{star_prod_H}. In this way, we get a bilinear but non-associative product. For instance, take $d \geqslant 2$, and consider the idempotent UCP map on the space of $d$-dimensional matrices $\mathcal{M}_{d}$ defined as
		\begin{equation}
			\label{example_non_associative_choi}
			{\Phi}(X)=\frac{\tr(X)}{d}\mathbb{I}.
		\end{equation}
		Then, even for diagonal matrices $X,Y,Z$ the equality
		\begin{equation}
			X \star (Y \star Z) = (X \star Y) \star Z
		\end{equation}
		does not hold in general. In Section~\ref{star_dec_subsec} we will introduce a larger space $\Nstar$ in which the Choi-Effros product is associative, see Lemma~\ref{asso_N_CE_prop}.
		
		
%
		Following~\cite{AFK_asympt,AFK_asympt_2}, in which we consider the equivalent Schr\"odinger picture of the evolution, we can describe the asymptotic Heisenberg dynamics of an open quantum system by introducing the \textit{asymptotic map} $\Phi_{\mathrm{as}}$ of $\Phi$ as
		\begin{equation}
			\label{as_map_def}
			\Phi_{\mathrm{as}}= \Phi \vert_{\Attr(\Phi)}.
		\end{equation}
		Several general properties of the asymptotic map $\Phi_{\mathrm{as}} $ are collected in the next result, see again~\cite[Theorem 3]{AFK_asympt_4} for the explicit expression of $\Phi_{\mathrm{as}}$.
				\begin{Theorem}
			\label{aut_th_UCP}
			Let $\Phi$ be a UCP map and consider the $C^\ast$-algebra $(\Attr(\Phi) , \star , \| \cdot \| )$, where $\| \cdot \|$ denotes the operator norm on $\mathcal{B}(\mathcal{H})$.	
			
			Then the asymptotic map $\Phi_{\mathrm{as}}$ is a $\ast$-automorphism on $(\Attr(\Phi) , \star , \| \cdot \| )$,		
			 i.e. $\Phi_{\mathrm{as}}$ is invertible and
			\begin{equation}
				\label{aut_Phi_dagger}
				\Phi_{\mathrm{as}} (X \star Y) = \Phi_{\mathrm{as}}(X) \star \Phi_{\mathrm{as}}(Y),\quad X,Y\in \Attr(\Phi).
			\end{equation}
			Moreover, $\Phi_{\mathrm{as}}$ 
			is a UCP map on $(\Attr(\Phi) , \star , \| \cdot \|)$, and
\begin{equation}
				\label{iso_UCP}
				\| \Phi_{\mathrm{as}}(X) \| = \| X \|, \quad X \in \Attr(\Phi).
			\end{equation}
\end{Theorem}
		\begin{proof}
		The $\ast$-automorphic property of $\Phi_{\mathrm{as}}$ was proved in~\cite[Theorem 2.12]{rajarama2022peripheral}, while its complete positivity is a consequence of the first part of the claim and~\cite[Lemma 1.2.2 (ii)]{stormer2012positive}. Finally, $\Phi_{\mathrm{as}}$ is an isometry on~$(\Attr(\Phi) , \star , \| \cdot \|)$ 	by~\cite[Theorem 4.8]{conway2019course}. 
		\end{proof}

		The Choi-Effros product is in general different from the composition product. A UCP map will be called \textit{peripherally automorphic}~\cite{rajaramaperipheral_2} whenever the two products coincide within its attractor subspace.
		\begin{definition}
			\label{per_aut_def}
			Let $\Phi$ be a UCP map. Then $\Phi$ is said to be peripherally automorphic if and only if 
			\begin{equation}
				\label{per_aut_eq}
				X \star Y = XY, \quad X,Y \in \Attr(\Phi).
			\end{equation}
		\end{definition}
		Two necessary and sufficient conditions for this property to hold are provided in the next theorem~\cite[Theorem 2.10]{rajaramaperipheral_2}. In particular, condition \textit{c)}  explains the name given to this class of maps.
		
		\begin{Theorem}
			\label{ch_per_aut_1}
			Let $\Phi$ be a UCP map. The following conditions are equivalent:
			\begin{enumerate}[a)]
				\item $\Phi$ is peripherally automorphic;
				\item $\Attr(\Phi)$ is closed under the composition product;
				\item $\Phi(XY)=\Phi(X)\Phi(Y)$, $X,Y\in \Attr(\Phi)$.
			\end{enumerate}
		\end{Theorem}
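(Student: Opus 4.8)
The plan is to run the equivalences through the cycle $a)\Leftrightarrow b)$, then $a)\Rightarrow c)$, and finally $c)\Rightarrow a)$. First I would dispose of $a)\Leftrightarrow b)$, which is essentially a reformulation. By definition $X\star Y=\mathcal{P}_P(XY)$ always lies in $\Attr(\Phi)=\Ran(\mathcal{P}_P)$, and $\mathcal{P}_P$ restricts to the identity on $\Attr(\Phi)$, being the spectral projection onto it. Hence, if $\Attr(\Phi)$ is closed under composition then $XY\in\Attr(\Phi)$ forces $X\star Y=\mathcal{P}_P(XY)=XY$, which is $a)$; conversely, if $X\star Y=XY$ then $XY=X\star Y\in\Attr(\Phi)$, which is $b)$.

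Next, for $a)\Rightarrow c)$, I would take $X,Y\in\Attr(\Phi)$ and use that $\Attr(\Phi)$ is closed under composition (by $b)$), so $XY\in\Attr(\Phi)$ and $\Phi(XY)=\Phi_{\mathrm{as}}(XY)=\Phi_{\mathrm{as}}(X\star Y)$, where the last equality is $a)$. Theorem~\ref{aut_th_UCP} then gives $\Phi_{\mathrm{as}}(X\star Y)=\Phi_{\mathrm{as}}(X)\star\Phi_{\mathrm{as}}(Y)$, since $\Phi_{\mathrm{as}}$ is a $\ast$-automorphism for $\star$. Finally, $\Phi_{\mathrm{as}}(X)=\Phi(X)$ and $\Phi_{\mathrm{as}}(Y)=\Phi(Y)$ again lie in $\Attr(\Phi)$, so applying $a)$ once more converts the $\star$-product back into the composition product, yielding $\Phi(XY)=\Phi(X)\Phi(Y)$.

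For $c)\Rightarrow a)$, the idea is to propagate multiplicativity to all iterates and then freeze it in the asymptotic limit. I would first prove by induction that $\Phi^n(XY)=\Phi^n(X)\Phi^n(Y)$ for all $X,Y\in\Attr(\Phi)$ and all $n$; the inductive step uses the invariance $\Phi\Attr(\Phi)=\Attr(\Phi)$, so that $\Phi^n(X),\Phi^n(Y)\in\Attr(\Phi)$ and $c)$ can be reapplied. I would then evaluate this identity along the subsequence $n_i$ for which $\Phi^{n_i}\to\mathcal{P}_P$ [Eq.~\eqref{P_p_for}]. Since the composition product is jointly continuous in finite dimension, the right-hand side converges to $\mathcal{P}_P(X)\mathcal{P}_P(Y)=XY$ (using $\mathcal{P}_P\vert_{\Attr(\Phi)}=\mathrm{id}$), while the left-hand side converges to $\mathcal{P}_P(XY)=X\star Y$, giving $X\star Y=XY$, which is $a)$.

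The hard part will be precisely the $c)\Rightarrow a)$ implication: one must notice that the algebraic hypothesis on $\Attr(\Phi)$ is stable under iteration and then combine it with the spectral formula $\mathcal{P}_P=\lim_i\Phi^{n_i}$ and continuity of multiplication. The other two implications are comparatively light, as they ride directly on the already-established $\ast$-automorphism structure of $(\Attr(\Phi),\star,\|\cdot\|)$ from Theorems~\ref{C_star_UCP} and~\ref{aut_th_UCP}; it is only in this last step that the interplay between the spectral description of $\mathcal{P}_P$ and the multiplicative hypothesis genuinely enters.
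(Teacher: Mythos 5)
Your proof is correct. A preliminary remark: the paper itself offers no proof of this theorem --- it is quoted from Bhat--Kar--Talwar \cite[Theorem 2.10]{rajaramaperipheral_2} --- so there is no internal argument to compare against, and your proposal must be judged on its own. It holds up: the equivalence \textit{a)}$\Leftrightarrow$\textit{b)} follows exactly as you say from $\Ran(\mathcal{P}_P)=\Attr(\Phi)$ together with $\mathcal{P}_P\vert_{\Attr(\Phi)}=\mathrm{id}$ (which holds because $\mathcal{P}_P$ is the sum of the peripheral spectral projections); the implication \textit{a)}$\Rightarrow$\textit{c)} correctly combines Theorem~\ref{aut_th_UCP} with the invariance~\eqref{Attr_inv_sub}, applying \textit{a)} once to rewrite $\Phi(XY)=\Phi_{\mathrm{as}}(X\star Y)$ and once more to convert $\Phi(X)\star\Phi(Y)$ back to $\Phi(X)\Phi(Y)$; and \textit{c)}$\Rightarrow$\textit{a)} is sound, since the induction only ever applies \textit{c)} to the pair $\Phi^n(X),\Phi^n(Y)$, which stays inside $\Attr(\Phi)$ by~\eqref{Attr_inv_sub}, and the passage to the limit along the subsequence of Eq.~\eqref{P_p_for} is unproblematic in finite dimension. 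It is worth noting that this iterate-and-take-the-limit device is precisely the one the paper uses elsewhere, e.g.\ in the proof of Theorem~\ref{attr_ch_faith}, to upgrade identities for $\Phi^n$ into identities for $\mathcal{P}_P$, so your argument is stylistically consonant with the paper. One caveat: Theorems~\ref{ch_per_aut_1} and~\ref{aut_th_UCP} are Theorems 2.10 and 2.12 of the \emph{same} reference, so deriving the former from the latter would be circular within that reference's own development; relative to the present paper, however, where Theorem~\ref{aut_th_UCP} is stated beforehand as an established result, your derivation is a legitimate and self-contained alternative to citing \cite{rajaramaperipheral_2}.
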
 

	Finally, it is worthwhile observing that the peripheral automorphism property~\eqref{per_aut_eq} is characterized by many mathematical properties, that are only necessary conditions for faithful UCP maps. See Theorem~\ref{mul_ch_pa_UCP}.

\subsection{Decoherence-free algebra}
\label{dec_free_known_results}
In this section we will recall the definition and several relevant properties of the decoherence-free algebra. Inspired by these properties, we will introduce in Section~\ref{star_dec_subsec} the Choi-Effros decoherence-free algebra. We will see that this algebra has many interesting mathematical features, and in particular a clear relationship with the attractor manifold.

Let us try to motivate the introduction of the decoherence-free algebra. It is well-known that a unitary UCP map on $\mathcal{B}(\mathcal{H})$, which describes the Heisenberg dynamics of a closed quantum system, is a $\ast$-automorphism and, indeed, the latter condition characterizes this class of maps, see e.g. \cite[Proposition 3.1.7]{stormer2012positive}.
Thus, since the asymptotic evolution should resemble a closed-system dynamics, the \textit{decoherence-free algebra} of a UCP map $\Phi$ is defined precisely as the subset of $\mathcal{B}(\mathcal{H})$ where $\Phi$ acts as a $\ast$-homomorphism, namely:

\begin{equation}
\begin{split}
\mathcal{N}:= \{ X \in \mathcal{B}(\mathcal{H}) \,,\, \Phi^{n}(Y X) &= \Phi^{n}(Y)\Phi^{n}(X),\, \\ \Phi^{n}(XY) &= \Phi^{n}(X)\Phi^{n}(Y), \; \forall Y \in \mathcal{B}(\mathcal{H}),\forall n \in \mathbb{N} \}.
\label{dec_def}
\end{split}
\end{equation}
It is well-known that $\mathcal{N}$ is a ${C}^\ast$-subalgebra of $\mathcal{B}(\mathcal{H})$, see e.g.~\cite{carbone2020period}.
Indeed, $\mathcal{N}$ is the largest subalgebra of $\mathcal{B}(\mathcal{H})$ on which $\Phi^n \vert_{\mathcal{N}}$ is a $\ast$-homomorphism. However, as already observed in~\cite[Remark 1]{carbone2020period}, $\Phi\vert_{\mathcal{N}}$ (and, consequently, $\Phi^n \vert_{\mathcal{N}}$) is \emph{not}  a $\ast$-automorphism in general.

The space $\mathcal{N}$ can be characterized in a simpler way by using~\cite[Theorem 5.4]{wolf2012quantum}.
\begin{Proposition}
\label{ch_N}
Let $\Phi$ be a UCP map. Then, its decoherence-free algebra is given by 
\begin{equation}
\begin{split}
	\mathcal{N}=\{ X \in \mathcal{B}(\mathcal{H}) \,,\, \Phi^{n}(X^\ast X)
	&= \Phi^{n}(X)^\ast\Phi^{n}(X),\\ \Phi^{n}(XX^\ast ) 
	&= \Phi^{n}(X)\Phi^{n}(X)^\ast, \,\forall n \in \mathbb{N} \}.
\end{split}
\end{equation}
\end{Proposition}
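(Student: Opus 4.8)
The plan is to reduce the claim to the classical \emph{multiplicative domain} theorem for maps satisfying the operator Schwarz inequality, which is exactly the content of \cite[Theorem~5.4]{wolf2012quantum} invoked in the text. For a unital map $T$ obeying \eqref{op_Schw_ineq}, that result states that its multiplicative domain, i.e.\ the set of $X$ with $T(YX)=T(Y)T(X)$ and $T(XY)=T(X)T(Y)$ for every $Y\in\mathcal{B}(\mathcal{H})$, coincides with the set of $X$ for which the Schwarz inequality is saturated on both sides:
\[
T(X^\ast X)=T(X)^\ast T(X)\quad\text{and}\quad T(XX^\ast)=T(X)T(X)^\ast.
\]
Since UCP maps are closed under composition, each iterate $\Phi^n$ is UCP and hence satisfies \eqref{op_Schw_ineq}, so the theorem applies verbatim to every $T=\Phi^n$.

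Granting this, I would prove the two inclusions by intersecting the pointwise statements over all $n\in\mathbb{N}$. The inclusion of $\mathcal{N}$ into the right-hand side is elementary and does not even require the full theorem: since $\Phi$ is positive it preserves adjoints, so $\Phi^n(X^\ast)=\Phi^n(X)^\ast$, and choosing $Y=X^\ast$ (respectively $Y=X$) in the defining homomorphism identities \eqref{dec_def} immediately yields the two saturation equalities for every $n$. The reverse inclusion is where the substance lies: if $X$ saturates the Schwarz inequality for $\Phi^n$ in the stated two-sided sense, then by the multiplicative domain theorem $X$ belongs to the multiplicative domain of $\Phi^n$, i.e.\ $\Phi^n(XY)=\Phi^n(X)\Phi^n(Y)$ and $\Phi^n(YX)=\Phi^n(Y)\Phi^n(X)$ for all $Y$; holding this for every $n$ places $X$ in $\mathcal{N}$.

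The only real obstacle is therefore the nontrivial direction of the multiplicative domain theorem itself, whose standard proof upgrades a one-sided saturation identity to bilinearity against an arbitrary argument, typically through a Cauchy--Schwarz--type argument (e.g.\ applying the Schwarz inequality to $X+tY$ and comparing, or using a $2\times2$ operator-matrix dilation of $\Phi^n$). Because we are entitled to cite \cite[Theorem~5.4]{wolf2012quantum}, this step is supplied externally, and what remains on our side is purely the bookkeeping of the adjoint relation $\Phi^n(X^\ast)=\Phi^n(X)^\ast$ and of the universal quantifier over $n$, both of which are routine.
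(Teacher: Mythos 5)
Your proposal is correct and follows essentially the same route as the paper: the paper states this proposition with no written proof beyond the remark that $\mathcal{N}$ ``can be characterized in a simpler way by using \cite[Theorem 5.4]{wolf2012quantum}'', i.e.\ precisely your reduction to the multiplicative domain theorem applied to each iterate $\Phi^n$ (each being UCP, hence Schwarz), intersected over $n$. The elementary direction you spell out (taking $Y=X^\ast$ and $Y=X$ and using that $\Phi^n$ preserves adjoints) is the same routine bookkeeping the paper leaves implicit, and the nontrivial direction is supplied by the cited theorem in both cases, so there is no substantive difference.
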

The following result elucidates the connection between faithfulness and peripheral automorphism, focusing on the inclusion relationship between $\mathcal N$ and $\Attr(\Phi)$ in the faithful case. The second part of the claim follows by combining Lemma 1 and Theorem 1 of~\cite{carbone2020period}, where the authors work in the infinite-dimensional setting. Nevertheless, we provide a simple proof of the result in the finite-dimensional case for completeness, see also~\cite[Proposition 8]{carbone2013Markov} for the Markovian case.
\begin{Theorem}
\label{attr_ch_faith}
Let $\Phi$ be a faithful UCP map with decoherence-free algebra $\mathcal{N}$. Then $\Phi$ is peripherally automorphic. Moreover, 
\begin{equation}
\label{eq_Attr_N_faith}
\Attr(\Phi)=\mathcal{N}.
\end{equation}
\end{Theorem} 
\begin{proof}
The first part of the assertion was already proved in~\cite[Corollary 2.8]{rajaramaperipheral_2}.

Let us now prove that $\Attr(\Phi) \subseteq \mathcal{N}$ in the faithful case.
Given $X\in \Attr(\Phi)$, from the first part of the claim and Theorem~\ref{ch_per_aut_1}
\begin{equation}
\Phi(X^\ast X)=\Phi(X)^\ast\Phi(X),
\end{equation}
and, since $\Attr(\Phi)$ is invariant under $\Phi$, see Eq.~\eqref{Attr_inv_sub}, we have:
\begin{equation}
	\Phi^n(X^\ast X)=\Phi^n(X)^\ast\Phi^n(X), \quad n\in\mathbb{N}.
\end{equation}
By exchanging $X \leftrightarrow X^\ast$, we get $\Attr(\Phi)\subseteq \mathcal N $.

For the opposite inclusion, let $X \in \mathcal{N}$. From~\eqref{P_p_for}
\begin{equation}
\Phi^n(X^\ast X)=\Phi^n(X)^\ast\Phi^n(X) \Rightarrow \mathcal{P}_{\mathrm{P}}(X^\ast X)=\mathcal{P}_{\mathrm{P}} (X)^\ast\mathcal{P}_{\mathrm{P}} (X).
\end{equation}
Thus, by applying Lemma~\ref{lemma_Hamana} to $\mathcal{P}_{\mathrm{P}}$ we obtain 
\begin{equation}
\mathcal{P}_{\mathrm{P}} ((X- \mathcal{P}_{\mathrm{P}}(X))^\ast (X-\mathcal{P}_{\mathrm{P}}(X)))=0.
\end{equation}
Let $Y = X- \mathcal{P}_{\mathrm{P}}(X)$, and let  $\rho$ be an  invertible invariant state for $\Phi^\dagger$. Then
\begin{equation}
\tr(\mathcal{P}_{\mathrm{P}}(Y^\ast Y) \rho)=\mathrm{tr}(Y^\ast Y \mathcal{P}_{\mathrm{P}}^\dagger(\rho))=\mathrm{tr}(Y^\ast Y \rho)=0,
\end{equation}
which implies $Y^\ast Y=0$, and so $Y=0$. Thus $X=\mathcal P_{\mathrm{P}}(X)$, and $\mathcal N\subseteq \Attr(\Phi)$. 
\end{proof}
\begin{Remark}
	It was already noted in~\cite[Remark 2.3, Example 2.4]{rajaramaperipheral_2}, a peripherally automorphic UCP map is not faithful in general. 
\end{Remark}
\begin{Remark}
	As a consequence of Theorem~\ref{aut_th_UCP}, in the faithful case $\Phi \vert_{\mathcal{N}}$ is a $\ast$-automorphism. The converse is not true, since any invertible (as a linear map) non-faithful UCP map {clearly satisfies this property too}, see e.g. Example~\ref{ex_2}.
\end{Remark}
Theorem~\ref{attr_ch_faith} can be generalized to peripherally automorphic UCP maps~\cite[Theorem 2.6]{rajaramaperipheral_2}.
\begin{Theorem}
\label{mul_ch_pa_UCP}
Let $\Phi$ be a UCP map and $\mathcal{N}$ its decoherence-free algebra. Then, the following conditions are equivalent.
\begin{enumerate}[a)] 
\item $\Phi$ is peripherally automorphic;
\item $\Attr(\Phi) \subseteq\mathcal{N}$;
\item $\Phi(X)=\lambda X$, $\lambda \in \spec_{\mathrm{P}}(\Phi)$ $\Rightarrow$ $\Phi(X^\ast  X)=X^\ast  X$.
\end{enumerate}
\end{Theorem}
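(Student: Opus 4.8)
The plan is to establish the three equivalences through the cyclic chain $a)\Rightarrow c)\Rightarrow b)\Rightarrow a)$, leaning at each step on the results already collected for the attractor subspace and the decoherence-free algebra. Two of the three implications, $a)\Rightarrow c)$ and $b)\Rightarrow a)$, are short; the real content sits in $c)\Rightarrow b)$, which I expect to be the main obstacle.

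For $a)\Rightarrow c)$ I would take a peripheral eigenvector $X$, $\Phi(X)=\lambda X$ with $\abs{\lambda}=1$. Since $\Phi$ is $\ast$-linear, $X^\ast$ is again a peripheral eigenvector (eigenvalue $\bar\lambda$), so $X,X^\ast\in\Attr(\Phi)$. Peripheral automorphy, through condition \textit{c)} of Theorem~\ref{ch_per_aut_1}, then gives $\Phi(X^\ast X)=\Phi(X^\ast)\Phi(X)=\Phi(X)^\ast\Phi(X)=\abs{\lambda}^2 X^\ast X=X^\ast X$, which is exactly \textit{c)}. For $b)\Rightarrow a)$ I would invoke the defining property \eqref{dec_def} of $\mathcal{N}$ directly: if $\Attr(\Phi)\subseteq\mathcal{N}$, then every $X\in\Attr(\Phi)$ lies in $\mathcal{N}$, so taking $n=1$ in \eqref{dec_def} yields $\Phi(XY)=\Phi(X)\Phi(Y)$ for all $Y\in\mathcal{B}(\mathcal{H})$, in particular for $Y\in\Attr(\Phi)$. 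This is precisely condition \textit{c)} of Theorem~\ref{ch_per_aut_1}, whence $\Phi$ is peripherally automorphic.

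The crux is $c)\Rightarrow b)$, where the difficulty is that $\mathcal{N}$ is characterized by equalities that must hold for \emph{all} powers $\Phi^n$ (Proposition~\ref{ch_N}), whereas \textit{c)} only supplies information at the first step. The observation that unlocks this is that \textit{c)} forces $X^\ast X$ to be a fixed point: for a peripheral eigenvector $X$, condition \textit{c)} reads $\Phi(X^\ast X)=X^\ast X$, so $X^\ast X\in\Fix(\Phi)$ and therefore $\Phi^n(X^\ast X)=X^\ast X$ for every $n$. On the other hand $\Phi^n(X)=\lambda^n X$, so $\Phi^n(X)^\ast\Phi^n(X)=\abs{\lambda}^{2n}X^\ast X=X^\ast X$, and the two sides match for all $n$; applying \textit{c)} to the peripheral eigenvector $X^\ast$ gives the companion identity for $XX^\ast$. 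By Proposition~\ref{ch_N} this places every peripheral eigenvector in $\mathcal{N}$, and since $\mathcal{N}$ is a linear subspace while $\Attr(\Phi)$ is by \eqref{attr_def} the span of the peripheral eigenvectors, linearity propagates the inclusion to all of $\Attr(\Phi)$, giving \textit{b)}. I would close by noting that this route deliberately avoids any appeal to a multiplicative-domain theorem, so the argument is purely algebraic once the cited structural results are in hand.
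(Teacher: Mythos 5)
Your proof is correct, but note that the paper never proves Theorem~\ref{mul_ch_pa_UCP} internally: it imports it wholesale from~\cite[Theorem 2.6]{rajaramaperipheral_2}, so your cyclic argument $a)\Rightarrow c)\Rightarrow b)\Rightarrow a)$ is a self-contained substitute rather than a variant of an existing proof. All three steps check out. In $a)\Rightarrow c)$, $\ast$-preservation of $\Phi$ makes $X^\ast$ a peripheral eigenvector for $\bar\lambda$, and condition \textit{c)} of Theorem~\ref{ch_per_aut_1} gives $\Phi(X^\ast X)=\Phi(X)^\ast\Phi(X)=\abs{\lambda}^2X^\ast X=X^\ast X$. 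In the key step $c)\Rightarrow b)$, your observation that condition \textit{c)} places $X^\ast X$ (and, applied to the eigenvector $X^\ast$, also $XX^\ast$) in $\Fix(\Phi)$, while $\Phi^n(X)^\ast\Phi^n(X)=\abs{\lambda}^{2n}X^\ast X=X^\ast X$, verifies both identities of Proposition~\ref{ch_N} for every $n$; since $\mathcal{N}$ is a linear subspace (linearity is manifest from~\eqref{dec_def}) and $\Attr(\Phi)$ is by~\eqref{attr_def} the span of the peripheral eigenvectors, the inclusion $\Attr(\Phi)\subseteq\mathcal{N}$ follows. Finally, $b)\Rightarrow a)$ is immediate from~\eqref{dec_def} at $n=1$ together with Theorem~\ref{ch_per_aut_1}. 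The one caveat worth flagging is that Theorem~\ref{ch_per_aut_1} is itself cited from the very same reference~\cite{rajaramaperipheral_2} (its Theorem 2.10), so your argument is only as independent of that source as that theorem is; if you want to eliminate even this dependence, both short implications can be run from Definition~\ref{per_aut_def} and Eq.~\eqref{P_p_for} alone: for instance, for $b)\Rightarrow a)$ take $X,Y\in\Attr(\Phi)$, apply $\Phi^{n}(XY)=\Phi^{n}(X)\Phi^{n}(Y)$ along the subsequence in~\eqref{P_p_for}, and use that $\mathcal{P}_{P}$ acts as the identity on $\Attr(\Phi)$ to conclude $X\star Y=\mathcal{P}_{P}(XY)=\mathcal{P}_{P}(X)\mathcal{P}_{P}(Y)=XY$.
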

In Example~\ref{ex_2} we are going to see that inclusion \textit{b)} can be strict. However, according to Theorem~\ref{attr_ch_faith}, it is saturated for faithful UCP maps.

\section{Choi-Effros decoherence-free algebra}\label{star_dec_subsec}

This section, in which we state the main results of our work, is based on the following crucial observation: as a consequence of the operator Schwarz inequality~\eqref{op_Schw_ineq}, any UCP map $\Phi$ satisfies 
\begin{equation}
\label{star_op_Schw_ineq}
\Phi(X^\ast \star X) \geqslant \Phi(X)^\ast \star \Phi(X), \quad X\in\mathcal{B}(\mathcal{H}),
\end{equation}
since its peripheral projection $\mathcal{P}_{\mathrm{P}}$ is a UCP map and commutes with $\Phi$ by~\eqref{P_p_for}. We will call~\eqref{star_op_Schw_ineq} the \textit{$\star$-operator Schwarz inequality}, because it is  clearly an analogue of~\eqref{op_Schw_ineq} with respect to the $\star$ product.
Motivated by the algebraic structure of $\Attr(\Phi)$, let us define the \textit{Choi-Effros decoherence-free algebra}~$\Nstar$ 
\begin{equation}
\begin{split}
\label{star_dec_def}
\Nstar:= \{ X \in \mathcal{B}(\mathcal{H})  \,,\, \Phi^{n}(Y \star X) &= \Phi^{n}(Y)\star \Phi^{n}(X), \\ \Phi^{n}(X \star Y) &= \Phi^{n}(X) \star \Phi^{n}(Y), \, \forall Y \in \mathcal{B}(\mathcal{H}),\forall n \in \mathbb{N} \}.
\end{split}
\end{equation}
The term ``algebra" attributed to this space will be justified precisely in Theorem~\ref{sum_dec_N_th}. $\Nstar$ may be regarded as the analogue of the attractor subspace in the multiplicative approach to the asymptotic evolution. Indeed, we will see in Theorem~\ref{sum_dec_N_th} that there is a clear inclusion relation between these two spaces. 
\begin{Remark}
	We introduced the Choi-Effros product in Theorem~\ref{C_star_UCP} for elements in the attractor subspace $\Attr(\Phi)$ of $\Phi$. Nevertheless, as discussed before Eq.~\eqref{example_non_associative_choi}, we could extend this definition to all elements $X,Y$ in $\mathcal B(\mathcal H)$, obtaining a non-associative product. It will turn out that this product is associative on $\Nstar$, see Lemma~\ref{asso_N_CE_prop}.
\end{Remark}
Let us start the analysis of $\Nstar$ with the following lemma, which can be proved along the lines of~\cite[Theorem 5.4]{wolf2012quantum}.
\begin{Lemma}
\label{lemma_lemme}
Let $\Phi$ be a UCP map. Then
\begin{align}
\label{impl}
&\Phi (X^\ast \star X) = \Phi (X)^\ast\star \Phi (X) \quad \Leftrightarrow \quad \Phi (Y \star X) = \Phi (Y)\star \Phi (X) , \quad \forall Y \in \mathcal{B}(\mathcal{H}),\nonumber\\
&\Phi (X \star X^\ast) = \Phi (X)\star \Phi (X)^\ast \quad \Leftrightarrow \quad \Phi (X \star Y) = \Phi (X)\star \Phi (Y) , \quad \forall Y \in \mathcal{B}(\mathcal{H}).
\end{align}
\end{Lemma}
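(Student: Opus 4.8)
The plan is to mimic the classical multiplicative-domain argument for Schwarz maps, replacing the ordinary operator Schwarz inequality~\eqref{op_Schw_ineq} by its $\star$-analogue~\eqref{star_op_Schw_ineq}, which is exactly what keeps the reasoning within the Schwarz regime. First I would dispose of the two ``$\Leftarrow$'' implications, which are immediate: choosing $Y=X^\ast$ in the right-hand identity of the first line and recalling that every positive, hence UCP, map is Hermiticity-preserving, $\Phi(X^\ast)=\Phi(X)^\ast$, yields the left-hand identity; the second line is analogous with $Y=X^\ast$. So the whole content lies in the ``$\Rightarrow$'' direction.

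For the first ``$\Rightarrow$'' I would introduce the operator-valued defect form
\[
D(A,B):=\Phi(A^\ast \star B)-\Phi(A)^\ast \star \Phi(B),
\]
which is conjugate-linear in $A$ and linear in $B$. Two observations drive the argument. First, by the $\star$-operator Schwarz inequality~\eqref{star_op_Schw_ineq} the diagonal $D(A,A)=\Phi(A^\ast \star A)-\Phi(A)^\ast \star \Phi(A)$ is a self-adjoint positive operator for every $A$, and the hypothesis is precisely $D(X,X)=0$. Second, writing $Y=A^\ast$ and again using $\Phi(A^\ast)=\Phi(A)^\ast$, the target identity $\Phi(Y\star X)=\Phi(Y)\star\Phi(X)$ for all $Y$ is equivalent to $D(A,X)=0$ for all $A$. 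Thus everything reduces to proving that a vanishing diagonal at $X$ forces the whole ``column'' $D(\cdot,X)$ to vanish.

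To close this I would pass to scalars: for each fixed vector $\xi\in\mathcal{H}$ set $s_\xi(A,B):=\langle \xi, D(A,B)\,\xi\rangle$. By the first observation this is a positive semidefinite sesquilinear form on $\mathcal{B}(\mathcal{H})$, so the ordinary Cauchy--Schwarz inequality $|s_\xi(A,X)|^2\le s_\xi(A,A)\,s_\xi(X,X)$ holds. Since $s_\xi(X,X)=\langle\xi,D(X,X)\xi\rangle=0$, we get $\langle\xi,D(A,X)\,\xi\rangle=0$ for every $\xi$ and every $A$; as $\mathcal{H}$ is complex, polarization then gives $D(A,X)=0$, which is the claim. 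The second equivalence I would deduce from the first by the substitution $X\to X^\ast$ combined with the involutive compatibility $(U\star V)^\ast=V^\ast\star U^\ast$ (valid on all of $\mathcal{B}(\mathcal{H})$ because $\mathcal{P}_P$ is Hermiticity-preserving) and $\Phi(X^\ast)=\Phi(X)^\ast$: taking adjoints turns the left- and right-hand identities of the first line into those of the second.

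The one delicate point is the reduction from the operator inequality to a usable scalar Cauchy--Schwarz. The temptation would be to invoke block positivity of $[D(A_i,A_j)]_{ij}$, but that implicitly needs $2$-positivity and would betray the paper's claim of relying only on~\eqref{op_Schw_ineq}. The remedy is to use nothing beyond the diagonal positivity $D(A,A)\ge 0$ guaranteed by~\eqref{star_op_Schw_ineq}, restrict to the scalar forms $s_\xi$, and recover the operator equality $D(A,X)=0$ purely from $\langle\xi,D(A,X)\xi\rangle=0$ for all $\xi$ via polarization over $\mathbb{C}$. This is exactly the step where the complex-field structure does the work that $2$-positivity would otherwise be needed for.
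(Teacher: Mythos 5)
Your proof is correct, and it respects the paper's key constraint: only the diagonal positivity $D(A,A)\geqslant 0$ furnished by the $\star$-operator Schwarz inequality~\eqref{star_op_Schw_ineq} is ever used, never block positivity of $[D(A_i,A_j)]_{ij}$, which would indeed smuggle in $2$-positivity. The route differs from the paper's in its technical packaging. The paper works entirely at the operator level: it applies~\eqref{star_op_Schw_ineq} to $\tilde X = tX+Y$ with $t\in\mathbb{R}$, obtaining $0\leqslant Z+tW$ with $Z=D(Y,Y)\geqslant 0$ and $W$ the self-adjoint cross term; since $t$ is arbitrary, an eigenvector argument forces $W=0$, and the substitution $Y\to iY$ then isolates $D(X,Y)=0$. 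You instead compress the defect with vector states, $s_\xi(A,B)=\langle\xi, D(A,B)\xi\rangle$, note that each $s_\xi$ is a positive semidefinite sesquilinear form (hence automatically Hermitian over $\mathbb{C}$ and subject to the scalar Cauchy--Schwarz inequality), deduce $s_\xi(A,X)=0$ from $s_\xi(X,X)=0$, and recover the operator identity $D(A,X)=0$ because an operator on a complex Hilbert space with vanishing quadratic form is zero. At bottom the two proofs run on the same mechanism --- unwinding the abstract Cauchy--Schwarz inequality reproduces exactly the paper's $tX+Y$ expansion, and your polarization step does the work of the paper's $Y\to iY$ substitution --- but yours delegates the bookkeeping to standard scalar facts, at the price of invoking (correctly) the complex-field lemma that diagonal positivity implies Hermitian symmetry, whereas the paper's argument stays self-contained at the operator level. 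Your treatment of the trivial $\Leftarrow$ directions and of the passage from the first line of~\eqref{impl} to the second via $X\leftrightarrow X^\ast$ and $(U\star V)^\ast=V^\ast\star U^\ast$ matches the paper's.
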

The following result discusses several properties of the space $\Nstar$. In particular, it shows its relation with $\mathcal{N}$, and its invariance under $\Phi$, a property which is shared with $\mathcal N$.

\begin{Proposition}
\label{ch_N_star_th}
Let $\Phi$ be a UCP map. Then, its Choi-Effros decoherence-free algebra $\Nstar$ is given by
\begin{equation}
\begin{split}
\label{ch_N_star}
\Nstar=\{ X \in \mathcal{B}(\mathcal{H}) \,, \, \Phi^n (X^\ast \star X) &= \Phi^n (X)^\ast\star \Phi^n (X) ,\\ \Phi^n (X \star X^\ast ) &= \Phi^n (X)\star \Phi^n (X)^\ast, \,\forall n \in \mathbb{N} \}.
\end{split}
\end{equation}
Moreover, the following properties hold:
\begin{enumerate}[a)]
\item $\mathcal{N} \subseteq \Nstar$;
\item $\Phi(\Nstar) \subseteq \Nstar$.
\end{enumerate}
\end{Proposition}
\begin{proof}
Equation \eqref{ch_N_star} follows from~\eqref{star_dec_def} and Lemma~\ref{lemma_lemme}.
Then, property~\textit{a)} simply follows from
\begin{equation}
\Phi^n(X^\ast X) = \Phi^n(X)^\ast \Phi^n(X) \Rightarrow \Phi^n(X^\ast \star X) = \Phi^n(X)^\ast \star \Phi^n(X),
\end{equation}
for any $n \in \mathcal{N}$, which is a consequence of the commutativity between $\mathcal P_{\mathrm{P}}$ and~$\Phi$. 
Concerning property~\textit{b)}, given $X\in \Nstar$ and $n\in \mathbb{N}$, we have
\begin{equation}
\Phi^{n}(\Phi(X)^\ast \star \Phi(X))=\Phi^{n+1}(X^\ast \star X)=\Phi^{n+1}(X)^\ast \star \Phi^{n+1}(X),
\end{equation}
and $\Phi(X)\in\Nstar$ by~\eqref{ch_N_star}.
\end{proof}
The first part of Proposition~\ref{ch_N_star_th} characterizes $\Nstar$ as Proposition~\ref{ch_N} does to $\mathcal{N}$.
Indeed, we will go beyond this result, and we will see that we can simply characterize this space in terms of the peripheral projection $\mathcal P_{\mathrm{P}}$ of $\Phi$, see Corollary~\ref{N_star_in_N_star_P}. Again, this reveals the natural interplay between the space $\Nstar$ and $\Attr(\Phi)=\Ran(\mathcal{P}_{\mathrm{P}})$, the analogue of $\Nstar$ in the spectral approach to the asymptotics.
We are now going to show that the multiplicative properties characterizing peripherally automorphic UCP maps in Theorem~\ref{mul_ch_pa_UCP} hold in the general case when the composition product is replaced by the Choi-Effros product~\eqref{star_prod_H}.
\begin{Theorem}
Let $\Phi$ be a UCP map and $\Nstar$ its Choi-Effros decoherence-free algebra. Then the following conditions hold
\begin{enumerate}[a)] 
\item $\Phi(X \star Y)=\Phi(X) \star \Phi(Y)$, $\forall X,Y \in \Attr(\Phi)$;
\item $\Attr(\Phi) \subseteq\Nstar$;
\item $\Phi(X)=\lambda X$, $\lambda \in \spec_{\mathrm{P}}(\Phi)$ $\Rightarrow$ $\Phi(X^\ast \star X)=X^\ast \star X$.
\end{enumerate}
\label{mul_ch_UCP}
\end{Theorem}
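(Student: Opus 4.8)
The plan is to derive all three statements from the automorphism property of the asymptotic map (Theorem~\ref{aut_th_UCP}), the invariance of $\Attr(\Phi)$ under $\Phi$ (Eq.~\eqref{Attr_inv_sub}), and the characterization of $\Nstar$ in Proposition~\ref{ch_N_star_th}. The standing facts I would use repeatedly are that $\Phi$ is $\ast$-preserving (being positive) and that, by Theorem~\ref{C_star_UCP}, $\Attr(\Phi)$ is a $C^\ast$-algebra for $\star$, hence closed both under $\star$ and under the involution.

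Part a) is essentially a restatement of Theorem~\ref{aut_th_UCP}. Since $X,Y\in\Attr(\Phi)$ and $\Attr(\Phi)$ is $\star$-closed, $X\star Y\in\Attr(\Phi)$, so $\Phi(X\star Y)=\Phi_{\mathrm{as}}(X\star Y)$; the automorphism identity then gives $\Phi_{\mathrm{as}}(X\star Y)=\Phi_{\mathrm{as}}(X)\star\Phi_{\mathrm{as}}(Y)=\Phi(X)\star\Phi(Y)$, where in the last equality I use $\Phi(X),\Phi(Y)\in\Attr(\Phi)$ so that the restricted and extended products agree. Part c) then follows at once: if $\Phi(X)=\lambda X$ with $\lambda\in\spec_P(\Phi)$ then $X\in\Attr(\Phi)$, and applying a) with $Y=X$ together with $\Phi(X^\ast)=\Phi(X)^\ast$ gives $\Phi(X^\ast\star X)=\Phi(X)^\ast\star\Phi(X)=|\lambda|^2\,X^\ast\star X=X^\ast\star X$.

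The only part requiring an actual argument is b), where the single-step identity must be promoted to every power. First I would record, for arbitrary $X\in\Attr(\Phi)$, the two base identities $\Phi(X^\ast\star X)=\Phi(X)^\ast\star\Phi(X)$ and $\Phi(X\star X^\ast)=\Phi(X)\star\Phi(X)^\ast$, which are a) with $Y=X$ (resp. $Y=X^\ast$) combined with $\ast$-preservation. Then I would induct on $n$: assuming the $n$-fold identity holds for all elements of $\Attr(\Phi)$, I write $\Phi^{n+1}(X^\ast\star X)=\Phi^n\big(\Phi(X^\ast\star X)\big)=\Phi^n\big(\Phi(X)^\ast\star\Phi(X)\big)$ and feed the element $\Phi(X)$, which lies in $\Attr(\Phi)$ by Eq.~\eqref{Attr_inv_sub}, into the inductive hypothesis to obtain $\Phi^{n+1}(X)^\ast\star\Phi^{n+1}(X)$; the computation for $X\star X^\ast$ is identical. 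By Proposition~\ref{ch_N_star_th} this shows $X\in\Nstar$, i.e. $\Attr(\Phi)\subseteq\Nstar$.

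I do not anticipate a genuine obstacle. The one subtlety is the inductive step in b): the hypothesis must be applied to $\Phi(X)$ rather than to $X$, which is legitimate precisely because $\Attr(\Phi)$ is $\Phi$-invariant, so that the iterates never leave the algebra where $\star$ is well-behaved.
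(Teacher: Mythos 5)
Your proposal is correct and follows essentially the same route as the paper: part a) is read off from Theorem~\ref{aut_th_UCP}, part c) is the specialization of a) to a peripheral eigenvector, and part b) is obtained by iterating a) $n$ times using the $\Phi$-invariance of $\Attr(\Phi)$ [Eq.~\eqref{Attr_inv_sub}] and then invoking the characterization of $\Nstar$ in Proposition~\ref{ch_N_star_th}. Your explicit induction is just a more detailed writing of what the paper compresses into the phrase ``repeating $n$ times condition a)''.
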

\begin{proof} We use the notation from the theorem.
		
\textit{a)} It is just the assertion of Theorem~\ref{aut_th_UCP}.

\textit{b)} Given $X \in \Attr(\Phi)$ and repeating $n$ times condition \textit{a)}:
\begin{align}
&\Phi^n (X^\ast \star X) = \Phi^n(X)^\ast \star \Phi^n(X), \nonumber\\
&\Phi^n (X \star X^\ast) = \Phi^n(X) \star \Phi^n(X)^\ast.
\end{align} 
In particular, we used the invariance of the attractor subspace $\Attr(\Phi)$, see Eq.~\eqref{Attr_inv_sub}. 
%

%
\textit{c)} If $\Phi(X)=\lambda X$, with $\lambda \in \spec_{\mathrm{P}}(\Phi)$, then $X \in \Attr(\Phi)$, so the claim follows from \textit{a)}. 
\end{proof}
We are now going to study the algebraic properties of $\Nstar$ with respect to the composition product $\cdot$ of $\mathcal B(\mathcal H)$. To this purpose, we shall find a (vector space) direct-sum decomposition of $\Nstar$, see~\cite[Theorem 6]{AFK_asympt_4} for a similar result.
\begin{Theorem}[Structure  of $\Nstar$]
\label{sum_dec_N_th}
Let $\Phi$ be a UCP map with peripheral projection $\mathcal{P}_{\mathrm{P}}$. Then its Choi-Effros decoherence-free algebra $\Nstar$ admits the following direct-sum decomposition
\begin{equation}
\label{sum_dec_N_star}
\Nstar=\Attr(\Phi) \oplus \mathcal{K}({\mathcal{P}_{\mathrm{P}}}).
\end{equation}
Here, for a UCP map $\Psi$, we defined
\begin{equation}
\label{eq:definition_K_idempotent_channel}
\mathcal{K}({\Psi})=\{ X \in \mathcal{B}(\mathcal{H}) \,,\, \Psi( X^\ast X)= \Psi( X X^\ast )=0  \} = \Ker(\Psi) \cap \mathcal{N}({\Psi}),
\end{equation}
with $\mathcal{N}({\Psi})$ being the decoherence-free algebra of $\Psi$.  

Moreover, $\Nstar$ is a ${C}^\ast$-algebra with respect to the composition product $\cdot$ and the operator norm $\| \cdot \|$. Also, $\mathcal{K}({\mathcal{P}_{\mathrm{P}}})$ is a $\ast$-ideal of $\Nstar$ and, consequently, the quotient space
\begin{equation}
\label{Q_def}
\mathcal{Q} := \Nstar / \mathcal{K}({\mathcal{P}_{\mathrm{P}}})
\end{equation}
is a ${C}^\ast$-algebra $\ast$-automorphic to $(\Attr(\Phi),\star,\norm{\ \cdot\ })$. 
\end{Theorem}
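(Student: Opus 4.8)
The plan is to reduce everything to the idempotent UCP map $\mathcal{P}_{P}$ and to lean on three identities. First, the characterization of $\Nstar$ purely in terms of $\mathcal{P}_{P}$: letting $n=n_i\to\infty$ in~\eqref{P_p_for} inside Proposition~\ref{ch_N_star_th}, and using that $\mathcal{P}_{P}$ commutes with $\Phi$, every $X\in\Nstar$ satisfies $\mathcal{P}_{P}(X^\ast X)=\mathcal{P}_{P}(\mathcal{P}_{P}(X)^\ast\mathcal{P}_{P}(X))$ together with its adjoint. Second, the bimodule property of $\mathcal{N}(\mathcal{P}_{P})$, which by definition~\eqref{dec_def} applied to the idempotent $\mathcal{P}_{P}$ reads $\mathcal{P}_{P}(KY)=\mathcal{P}_{P}(K)\mathcal{P}_{P}(Y)$ and $\mathcal{P}_{P}(YK)=\mathcal{P}_{P}(Y)\mathcal{P}_{P}(K)$ for $K\in\mathcal{N}(\mathcal{P}_{P})$ and all $Y$. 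Third, the \emph{range–module} form of Hamana's Lemma~\ref{lemma_Hamana}, namely $\mathcal{P}_{P}(AY)=\mathcal{P}_{P}(A\,\mathcal{P}_{P}(Y))$ and $\mathcal{P}_{P}(YA)=\mathcal{P}_{P}(\mathcal{P}_{P}(Y)A)$ for every $A\in\Attr(\Phi)=\Ran(\mathcal{P}_{P})$, valid because $\mathcal{P}_{P}(A)=A$. I would first settle $\mathcal{K}(\mathcal{P}_{P})=\Ker(\mathcal{P}_{P})\cap\mathcal{N}(\mathcal{P}_{P})$: the operator Schwarz inequality~\eqref{op_Schw_ineq} turns $\mathcal{P}_{P}(X^\ast X)=0$ both into $\mathcal{P}_{P}(X)=0$ and into saturation of~\eqref{op_Schw_ineq}, i.e. membership in $\mathcal{N}(\mathcal{P}_{P})$, the reverse inclusion being immediate.

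For the decomposition~\eqref{sum_dec_N_star} I would verify both summands lie in $\Nstar$ and meet only in $\{0\}$. The inclusion $\Attr(\Phi)\subseteq\Nstar$ is condition \textit{b)} of Theorem~\ref{mul_ch_UCP}; for $\mathcal{K}(\mathcal{P}_{P})\subseteq\Nstar$ one uses $\Phi^{n}(K)^\ast\Phi^{n}(K)\leqslant\Phi^{n}(K^\ast K)$, whence $\mathcal{P}_{P}(\Phi^{n}(K)^\ast\Phi^{n}(K))\leqslant\Phi^{n}(\mathcal{P}_{P}(K^\ast K))=0$, so $K$ meets the conditions of Proposition~\ref{ch_N_star_th}. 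Since $\mathcal{K}(\mathcal{P}_{P})\subseteq\Ker(\mathcal{P}_{P})$ while $\Attr(\Phi)=\Ran(\mathcal{P}_{P})$, idempotency forces trivial intersection. Finally, for $X\in\Nstar$ I split $X=\mathcal{P}_{P}(X)+(X-\mathcal{P}_{P}(X))$ with $\mathcal{P}_{P}(X)\in\Attr(\Phi)$, and compute $\mathcal{P}_{P}\bigl((X-\mathcal{P}_{P}(X))^\ast(X-\mathcal{P}_{P}(X))\bigr)$: the cross terms collapse by the range–module identity to $\mathcal{P}_{P}(\mathcal{P}_{P}(X)^\ast\mathcal{P}_{P}(X))$, and the $\mathcal{P}_{P}$-characterization makes the expression vanish, placing the remainder in $\mathcal{K}(\mathcal{P}_{P})$.

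The heart of the argument is closure of $\Nstar$ under the composition product (the defining relations are linear in $X$, so $\Nstar$ is already a subspace). Writing $X=A+K$, $Y=B+L$ with $A,B\in\Attr(\Phi)$ and $K,L\in\mathcal{K}(\mathcal{P}_{P})$, I would show $XY=AB+R$ with $R=AL+KB+KL\in\mathcal{K}(\mathcal{P}_{P})$ and $AB\in\Nstar$. Each piece of $R$ is tested against both defining conditions: the condition carrying the kernel factor on the appropriate outer slot vanishes at once by the bimodule property, whereas the genuinely delicate term $\mathcal{P}_{P}(A\,LL^\ast A^\ast)$, with $A\in\Attr(\Phi)$ on \emph{both} outer slots, is where the product could a priori escape $\Nstar$. \textbf{This is the main obstacle}, since $A\notin\mathcal{N}(\mathcal{P}_{P})$ in general and the bimodule property is unavailable on the outer factors. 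I resolve it with the range–module identity: $\mathcal{P}_{P}(A\,LL^\ast A^\ast)=\mathcal{P}_{P}\bigl(A\,\mathcal{P}_{P}(LL^\ast A^\ast)\bigr)=\mathcal{P}_{P}\bigl(A\,\mathcal{P}_{P}(LL^\ast)\mathcal{P}_{P}(A^\ast)\bigr)=0$, using $LL^\ast\in\mathcal{N}(\mathcal{P}_{P})$ with $\mathcal{P}_{P}(LL^\ast)=0$. That $AB\in\Nstar$ follows from the same tools: iterating the range–module identity and the associativity of $\star$ on $\Attr(\Phi)$ gives $\mathcal{P}_{P}(B^\ast A^\ast A B)=B^\ast\star A^\ast\star A\star B=(A\star B)^\ast\star(A\star B)=\mathcal{P}_{P}\bigl(\mathcal{P}_{P}(AB)^\ast\mathcal{P}_{P}(AB)\bigr)$, which is precisely the $\mathcal{P}_{P}$-condition for $AB$. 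Adjoint closure is built into the symmetric definition of $\Nstar$, and in finite dimension $\Nstar$ is automatically norm-closed, so $(\Nstar,\cdot,\|\cdot\|)$ is a norm-closed $\ast$-subalgebra of $\mathcal{B}(\mathcal{H})$, hence a $C^\ast$-algebra.

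It then remains to identify the ideal and the quotient. I would introduce $\Theta:=\mathcal{P}_{P}\vert_{\Nstar}\colon(\Nstar,\cdot)\to(\Attr(\Phi),\star)$ and show it is a $\ast$-homomorphism: multiplicativity $\Theta(XY)=\mathcal{P}_{P}(XY)=A\star B=\Theta(X)\star\Theta(Y)$ drops out of the computation of $\mathcal{P}_{P}(XY)$ above, since all mixed terms vanish by the bimodule property, while $\Theta(X^\ast)=\Theta(X)^\ast$ is clear from positivity. By the decomposition its kernel is exactly $\mathcal{K}(\mathcal{P}_{P})$, so the latter is a $\ast$-ideal of $\Nstar$, and its image is $\mathcal{P}_{P}(\Nstar)=\Attr(\Phi)$ because $\Attr(\Phi)\subseteq\Nstar$ and $\Ran(\mathcal{P}_{P})=\Attr(\Phi)$. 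The first isomorphism theorem for $C^\ast$-algebras then yields the (automatically isometric) $\ast$-isomorphism $\mathcal{Q}=\Nstar/\mathcal{K}(\mathcal{P}_{P})\cong(\Attr(\Phi),\star,\|\cdot\|)$, completing the proof.
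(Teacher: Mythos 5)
Your argument is correct, and on the decomposition \eqref{sum_dec_N_star} it essentially coincides with the paper's proof: both directions (splitting $X=\mathcal{P}_{P}(X)+(X-\mathcal{P}_{P}(X))$ and showing the remainder lies in $\mathcal{K}({\mathcal{P}_{P}})$ via Lemma~\ref{lemma_Hamana}; showing $\Attr(\Phi)+\mathcal{K}({\mathcal{P}_{P}})\subseteq\Nstar$ by iterating the Schwarz inequality on the kernel part) are the same computations. Where you genuinely diverge is on the structural claims. The paper obtains both the $C^\ast$-closure of $\Nstar$ under the composition product and the $\ast$-ideal property of $\mathcal{K}({\mathcal{P}_{P}})$ by citing Hamana's structure theorem for idempotent UCP maps \cite[Theorem 2.3]{hamana1979injective} applied to $\mathcal{P}_{P}$, and then constructs the quotient isomorphism explicitly ($\phi(X)=\lfloor X\rfloor$, with multiplicativity coming from $XY-X\star Y\in\mathcal{K}({\mathcal{P}_{P}})$). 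You instead prove these facts from scratch: closure under products via $XY=AB+R$, where the only delicate term $\mathcal{P}_{P}(A\,LL^\ast A^\ast)$ is killed by combining your range--module identity (a consequence of Lemma~\ref{lemma_Hamana}) with the bimodule property of $\mathcal{N}({\mathcal{P}_{P}})$, and $AB\in\Nstar$ via associativity of $\star$ on $\Attr(\Phi)$ (Theorem~\ref{C_star_UCP}); the ideal property and the isomorphism $\mathcal{Q}\simeq(\Attr(\Phi),\star,\|\cdot\|)$ then come simultaneously from the first isomorphism theorem applied to $\Theta=\mathcal{P}_{P}\vert_{\Nstar}$. Your route is longer but self-contained, needing only the small Lemma~\ref{lemma_Hamana} rather than Hamana's full theorem, and the homomorphism $\Theta$ packages the ideal and quotient claims more conceptually; the paper's route is shorter because it outsources exactly the hardest part.

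One step you should make explicit: to conclude $AB\in\Nstar$ from the $\mathcal{P}_{P}$-condition $\mathcal{P}_{P}\bigl((AB)^\ast AB\bigr)=\mathcal{P}_{P}\bigl(\mathcal{P}_{P}(AB)^\ast\mathcal{P}_{P}(AB)\bigr)$ (and its adjoint), you need this condition to be \emph{sufficient} for membership in $\Nstar$, whereas your first paragraph only derives it as a necessary consequence of \eqref{P_p_for} and Proposition~\ref{ch_N_star_th}. The sufficiency does follow from your own machinery: the splitting computation shows that any $Z$ satisfying the condition has $Z-\mathcal{P}_{P}(Z)\in\mathcal{K}({\mathcal{P}_{P}})$, hence $Z\in\Attr(\Phi)\oplus\mathcal{K}({\mathcal{P}_{P}})\subseteq\Nstar$. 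But as written it reads as an unproven equivalence --- in effect it is Corollary~\ref{N_star_in_N_star_P}, which the paper deduces only \emph{after} this theorem --- so one sentence is needed to close the loop.
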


\begin{proof}
The result was already proved by Hamana for idempotent UCP maps, see~\cite[Theorem 2.3]{hamana1979injective}.  
By applying this result to $\mathcal{P}_{\mathrm{P}}$, the peripheral projection of $\Phi$, we obtain that $\Attr(\Phi) \oplus \mathcal{K}({\mathcal{P}_{\mathrm{P}}}) = \mathcal{N}_{\star}(\mathcal{P}_{{\mathrm{P}}})$, the Choi-Effros decoherence-free algebra of $\mathcal{P}_{\mathrm{P}}$, is a ${C}^\ast$-algebra with respect to the composition product $\cdot$ and the operator norm $\| \cdot \|$. That being said, let us prove that $\Nstar=\Attr(\Phi)\oplus\mathcal{K}({\mathcal{P}_{\mathrm{P}}})$.

We start by proving $\Nstar \subseteq \Attr(\Phi) \oplus \mathcal{K}({\mathcal{P}_{\mathrm{P}}})$. To this purpose, we only need to show that, given $X \in\Nstar$, we have $X-\mathcal{P}_{\mathrm{P}}(X) \in \mathcal{K}({\mathcal{P}_{\mathrm{P}}})$. From Eq.~\eqref{P_p_for}, the condition
\begin{equation}
\Phi^n(X^\ast \star X)=\Phi^n(X)^\ast \star \Phi^n(X), \quad n\in\mathbb{N}, 
\end{equation}
implies that
\begin{equation}
	\label{N_in_M_P}
	\mathcal{P}_{\mathrm{P}}(X^\ast \star X)=X^\ast\star X=\mathcal{P}_{\mathrm{P}}(X)^\ast \star \mathcal{P}_{\mathrm{P}}(X),
\end{equation}
as well as the analogous equality with $X \leftrightarrow X^\ast$.
Therefore
\begin{equation}
\begin{split}
&\mathcal P_{\mathrm{P}}((X-\mathcal{P}_{\mathrm{P}}(X))^\ast  (X-\mathcal{P}_{\mathrm{P}}(X))) \\&= X^\ast \star X+\mathcal{P}_{\mathrm{P}}(X)^\ast\star \mathcal{P}_{\mathrm{P}}(X)- \mathcal{P}_{\mathrm{P}}(X^\ast \mathcal{P}_{\mathrm{P}}(X))-\mathcal{P}_{\mathrm{P}}(\mathcal{P}_{\mathrm{P}}(X)^\ast  X)\\
&= 2 \mathcal{P}_{\mathrm{P}}(X)^\ast\star \mathcal{P}_{\mathrm{P}}(X) - 2 \mathcal{P}_{\mathrm{P}}(X)^\ast\star \mathcal{P}_{\mathrm{P}}(X) =0,
\end{split}
\end{equation}
where in the second last step we used~\eqref{N_in_M_P} and Lemma~\ref{lemma_Hamana}. The proof of  
\begin{equation}
\mathcal{P}_{\mathrm{P}}((X-\mathcal{P}_{\mathrm{P}}(X))\star (X-\mathcal{P}_{\mathrm{P}}(X))^\ast) =0
\end{equation} 
is analogous.



Conversely, consider $X\in\Attr(\Phi)$ and $K\in\mathcal K({\mathcal P_{\mathrm{P}}})$, and let us prove that $X+K\in\Nstar$. We first prove that
\begin{equation}
	\label{aut_K_P}
	\Phi^n(K^\ast \star K)=\Phi^n(K)^\ast \star \Phi^n(K)=0,
\end{equation}
The left-hand side vanishes because $K^\ast\star K=0$ as $K\in\mathcal{K}({\mathcal{P}_{\mathrm{P}}})$, while the second term is $0$ by applying $n$ times the $\star$-operator Schwarz inequality~\eqref{star_op_Schw_ineq}:
\begin{equation}
\label{n_Schwarz_arg}
\begin{split}
	0&=\Phi^{n}(K^\ast \star K) \geqslant \Phi^{n-1}(\Phi(K)^\ast \star \Phi(K))\geqslant \dots\geqslant \Phi^n(K)^\ast \star \Phi^n(K) \\&\Rightarrow \Phi^n(K)^\ast \star \Phi^n(K)=0.
\end{split}
\end{equation}
Therefore, we can write
\begin{equation}
\begin{split}
	&\Phi^n( (X + K)^\ast \star (X + K))
	= \Phi^{n}(X^\ast \star X + X^\ast \star K + K^\ast \star X + K^\ast \star K)\\
	&=\Phi^{n}(X)^\ast \star \Phi^{n}(X) + \Phi^{n}(X)^\ast \star \Phi^{n}(K) +\Phi^{n}(K)^\ast \star \Phi^{n}(X)\\
	&= \Phi^n(X+K)^\ast \star \Phi^n(X + K), \quad n \in\mathbb{N}, 
\end{split}
\end{equation}
where we used Theorem~\ref{aut_th_UCP}, condition \textit{(b)} of Theorem~\ref{mul_ch_UCP}, and Eq.~\eqref{aut_K_P}.
Analogously, we can prove that
\begin{equation}
	\Phi^n( (X  + K) \star (X + K)^\ast) = \Phi^n(X+K)\star \Phi^n(X + K)^\ast.
\end{equation} 
Therefore, we get $\Nstar = \Attr(\Phi)\oplus \mathcal K({\mathcal P_{\mathrm{P}}})$.

Also, $\mathcal{K}({\mathcal{P}_{\mathrm{P}}})$ is a $\ast$-ideal of $\Nstar$ by \cite[Theorem 2.3]{hamana1979injective}.
Now, consider the quotient space $\mathcal{Q}$ defined by Eq.~\eqref{Q_def}, whose elements have the form
\begin{equation}
\label{equiv_cl}
\lfloor X \rfloor = \{  X + K\,,\, K\in \mathcal{K}({\mathcal{P}_{\mathrm{P}}}) \},
\end{equation}
with $X$ in $\Nstar$. Also, a product, an involution, and a norm are naturally defined for all  $\lfloor X \rfloor,\lfloor Y \rfloor$ in $\mathcal{Q}$ as
\begin{align}
\lfloor X\rfloor \cdot \lfloor Y \rfloor &:= \lfloor X Y \rfloor,\label{prod_Q_def}\\
\lfloor X \rfloor^\ast &:= \lfloor X^\ast \rfloor,\nonumber\\
\| \lfloor X \rfloor \| &:= \inf \{ \| X+K \|\,,\, K\in \mathcal{K}({\mathcal{P}_{\mathrm{P}}}) \}.
\label{norm_Q_def}
\end{align}
The map $\phi$
\begin{equation}
\label{iso_Attr}
\begin{split}
\phi:& \Attr(\Phi) \rightarrow \mathcal{Q} \\
& X \mapsto \phi(X)=\lfloor X \rfloor 
\end{split}
\end{equation}
is a $\ast$-automorphism between the $C^\ast$-algebras $(\Attr(\Phi) , \star , \| \cdot \|)$ and $(\mathcal{Q} , \cdot , \| \cdot \|)$. In particular, given $X,Y \in\Attr(\Phi)$, we have
\begin{equation}
\label{phi_auto}
\begin{split}
\phi(X \star Y)
&= \{ X \star Y + K\,,\, K\in \mathcal{K}({\mathcal{P}_{\mathrm{P}}}) \} = \{ XY - XY + X \star Y +K\,,\, K\in \mathcal{K}({\mathcal{P}_{\mathrm{P}}})\} \\
&=  \{ XY + K\,,\, K\in \mathcal{K}({\mathcal{P}_{\mathrm{P}}}) \} = \phi(X)\cdot \phi(Y).
\end{split}
\end{equation}
Here the third equality follows from the fact that $XY \in \Nstar$, so
\begin{equation}
XY - X \star Y = XY - \mathcal{P}_{\mathrm{P}}(XY) \in \mathcal{K}({\mathcal{P}_{\mathrm{P}}})
\end{equation}
by the decomposition~\eqref{sum_dec_N_star}. 
In particular, $\Attr (\Phi)$ and $\mathcal Q$ are isomorphic $C^\ast$-algebras, as $\phi$ is faithful. Indeed, if $\phi(X)=0$, then $\lfloor X \rfloor=\lfloor 0 \rfloor$, and $X\in\mathcal{K}({\mathcal{P}_{\mathrm{P}}})$. Thus, $X=0$ by~\eqref{sum_dec_N_star}.
\end{proof}
An immediate and important corollary of Theorem~\ref{sum_dec_N_th} is the following.
\begin{Corollary}
\label{N_star_in_N_star_P}
Let $\Phi$ be a UCP map with peripheral projection $\mathcal{P}_{\mathrm{P}}$, and Choi-Effros decoherence-free algebra $\Nstar$. Then
\begin{equation}
	\label{eq:N_star_equal_N}
\Nstar 
=  \mathcal{N}_{\star}(\mathcal{P}_{\mathrm{P}}),
\end{equation}
where $\mathcal N_{\star}(\mathcal{P}_{\mathrm{P}})$ is the Choi-Effros decoherence-free algebra of the UCP map $\mathcal P_{\mathrm{P}}$.
\end{Corollary}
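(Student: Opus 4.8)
The plan is to recognize the claim as an immediate comparison of the direct-sum decomposition from Theorem~\ref{sum_dec_N_th} applied to two different maps: the original $\Phi$ and its own peripheral projection $\mathcal{P}_{P}$, which is itself an idempotent UCP map.

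First I would observe that, since $\mathcal{P}_{P}^{2}=\mathcal{P}_{P}$, the map $\mathcal{P}_{P}$ is an idempotent UCP map whose peripheral spectrum is $\{1\}$ and whose peripheral projection is $\mathcal{P}_{P}$ itself; in particular, by Eq.~\eqref{P_p_for} applied to $\mathcal{P}_{P}$, one has $\Attr(\mathcal{P}_{P})=\Ran(\mathcal{P}_{P})=\Attr(\Phi)$. The point deserving care here is precisely that the Choi-Effros product attached to $\mathcal{P}_{P}$, namely $X\star Y=\mathcal{P}_{P}(XY)$, is literally the same $\star$ product used throughout for $\Phi$. Hence the defining conditions of $\mathcal{N}_{\star}(\mathcal{P}_{P})$ via Eq.~\eqref{ch_N_star} involve exactly the same product as those of $\Nstar$, and moreover, by idempotency, $\mathcal{P}_{P}^{n}=\mathcal{P}_{P}$ for all $n\geqslant 1$, so no genuine $n$-dependence enters on the $\mathcal{P}_{P}$ side.

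Next I would apply Theorem~\ref{sum_dec_N_th} twice. Applied to $\Phi$, it yields $\Nstar=\Attr(\Phi)\oplus\mathcal{K}(\mathcal{P}_{P})$. Applied to the UCP map $\mathcal{P}_{P}$, whose peripheral projection is itself, it yields $\mathcal{N}_{\star}(\mathcal{P}_{P})=\Attr(\mathcal{P}_{P})\oplus\mathcal{K}(\mathcal{P}_{P})=\Attr(\Phi)\oplus\mathcal{K}(\mathcal{P}_{P})$, using the identification $\Attr(\mathcal{P}_{P})=\Attr(\Phi)$ from the previous step. Since the two right-hand sides coincide, I conclude $\Nstar=\mathcal{N}_{\star}(\mathcal{P}_{P})$. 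I would note that this identity was in fact already established in passing inside the proof of Theorem~\ref{sum_dec_N_th}, where the string $\Attr(\Phi)\oplus\mathcal{K}(\mathcal{P}_{P})=\mathcal{N}_{\star}(\mathcal{P}_{P})$ is obtained from Hamana's result~\cite[Theorem~2.3]{hamana1979injective}; so the corollary is, strictly speaking, a matter of recording that equality.

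I do not expect any real obstacle here: the result is a bookkeeping consequence of the structure theorem, and its conceptual content is simply that the Choi-Effros decoherence-free algebra of $\Phi$ depends on $\Phi$ only through its peripheral projection $\mathcal{P}_{P}$. The only step requiring a line of justification is the coincidence of the peripheral projection of $\mathcal{P}_{P}$ with $\mathcal{P}_{P}$ itself, which guarantees that the two Choi-Effros products (and hence the two defining sets) agree; this is immediate from idempotency together with Eq.~\eqref{P_p_for}.
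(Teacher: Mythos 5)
Your proof is correct and follows essentially the same route as the paper: Corollary~\ref{N_star_in_N_star_P} is exactly the observation that both $\Nstar$ and $\mathcal{N}_{\star}(\mathcal{P}_{P})$ equal $\Attr(\Phi)\oplus\mathcal{K}({\mathcal{P}_{P}})$, the latter identification being Hamana's theorem for the idempotent map $\mathcal{P}_{P}$ (which is what Theorem~\ref{sum_dec_N_th} reduces to in that case, as you note). Your only addition---checking that the peripheral projection of $\mathcal{P}_{P}$ is $\mathcal{P}_{P}$ itself, so that $\Attr(\mathcal{P}_{P})=\Attr(\Phi)$ and the two $\star$ products coincide---is a worthwhile explicit justification of a step the paper leaves implicit.
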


The previous corollary highlights the connection between the algebra $\Nstar$ and the peripheral projection $\mathcal{P}_{\mathrm{P}}$ and, consequently, $\Attr(\Phi)$.
	
\begin{Remark}
Corollary~\ref{N_star_in_N_star_P} does not hold for $\mathcal{N}$ and $\mathcal{N}({\mathcal{P}_{\mathrm{P}}})$, the decoherence-free algebra of $\Phi$ and $\mathcal{P}_{\mathrm{P}}$, respectively. More precisely,
\begin{equation}
\label{N_in_N_P}
\mathcal{N} \subseteq \mathcal{N}({\mathcal{P}_{\mathrm{P}}}),
\end{equation}
where the inclusion may be strict, as it can be seen with the map~\eqref{Mark_pa_ex} in Example~\ref{ex_2}. We will see that the equality holds in the faithful case, see Theorem~\ref{faith_mul_ch}.
\end{Remark}

The previous construction shows that the attractor subspace is ultimately connected with the Choi-Effros decoherence-free algebra. We are now going to show that this connection can also be expressed in terms of a suitable GNS construction. To this purpose, we will 
construct the GNS Hilbert space of $\Nstar$ corresponding to a suitable state.

Specifically, define the functional $\omega : \Nstar \rightarrow \mathbb{C} $ as
\begin{equation}
\omega(X)=\frac{1}{d} \mathrm{tr}(\mathcal{P}_{\mathrm{P}}^\dagger(\mathbb{I})X)= \frac{1}{d} \tr(\mathcal{P}_{\mathrm{P}}(X)), 
\end{equation} 
being the state on $\Nstar$ associated with the density operator $ \frac{1}{d} \mathcal{P}_{\mathrm{P}}^\dagger (\mathbb{I})$. Recall that it is a maximum-rank asymptotic state of $\Phi^\dagger$, the Hilbert-Schmidt adjoint of $\Phi$. The corresponding GNS sesquilinear form,
\begin{equation}
\braket{X}{Y}_\omega= \omega(X^\ast Y) = \frac{1}{d} \mathrm{tr}(\mathcal{P}_{\mathrm{P}}^\dagger(\mathbb{I})X^\ast Y),
\end{equation} 
is not generally a scalar product since it may be degenerate. More precisely, we can introduce 
\begin{equation}
\mathcal{K} = \{ X \in \Nstar \,,\, \omega(X^\ast X) = 0 \},
\end{equation}
which is not trivial in general. Notice that $\mathcal{K}$ is equal to $\mathcal{K}({\mathcal{P}_{\mathrm{P}}})$ as defined in~\eqref{eq:definition_K_idempotent_channel}. Indeed, given $X \in \Nstar$, we have
\begin{equation}
\begin{split}
X \in \mathcal{K} &\Leftrightarrow \omega(X^\ast X) = \frac{1}{d} \mathrm{tr}(\mathcal{P}_{\mathrm{P}}^\dagger (\mathbb{I}) X^\ast X) = \frac{1}{d} \tr(\mathcal{P}_{\mathrm{P}}(X^\ast X))=0 \\&\Leftrightarrow \mathcal{P}_{\mathrm{P}}(X^\ast X) =0 \Leftrightarrow X \in \mathcal{K}({\mathcal{P}_{\mathrm{P}}}), 
\end{split}
\end{equation}
where in the second last equivalence we used the positivity of $\mathcal{P}_{\mathrm{P}}$. Therefore, as we did in the proof of Theorem~\ref{sum_dec_N_th},  consider the quotient space
\begin{equation}
\mathcal{Q} = \Nstar / \mathcal{K}({\mathcal{P}_{\mathrm{P}}}) = \Nstar / \mathcal{K} \simeq \Attr(\Phi).
\end{equation}
The scalar product will explicitly become
\begin{equation}
\braket{ \lfloor X \rfloor }{ \lfloor Y \rfloor}_{\omega} = \omega({X}^\ast {Y}).  
\end{equation}
Here, $\lfloor X \rfloor$ is the equivalence class in $\mathcal Q$ containing $X$. Therefore the quotient space $\mathcal{Q}$ equipped with $\braket{\cdot }{ \cdot }_{\omega}$ is the GNS Hilbert space $\mathcal{H}_{\omega}$ corresponding to the state $\omega$ and the algebra $\Nstar$. 
The GNS representation $\pi_\omega$ of $\Nstar$ 
reads
\begin{equation}
\pi_{\omega}(X)\lfloor Y \rfloor = \lfloor XY \rfloor = \lfloor X \rfloor \cdot \lfloor Y \rfloor, \quad X,Y\in \Nstar.
\end{equation}
Therefore we find again the natural product $\cdot$ on $\mathcal{Q}$ induced by the composition product endowing $\Nstar$. In particular, recall that, according to~\eqref{phi_auto}, this corresponds to the Choi-Effros product in $\Attr(\Phi)$ via the isomorphism $\phi$ defined by Eq.~\eqref{iso_Attr}. 

Notice that, if $\Phi$ is faithful, then $\mathcal{P}_{\mathrm{P}}^\dagger(\mathbb{I})>0$, $\mathcal{K} = \{ 0 \}$, and $\Attr(\Phi)=\Nstar$, and, in fact, as we will show in Theorem~\ref{faith_mul_ch}, this condition characterizes the faithfulness of $\Phi$.

A third way to recover, up to a $\ast$-automorphism, the algebraic structure of $\Attr(\Phi)$ from the algebra $\Nstar$ will be discussed in Appendix~\ref{Nstar_seminorm}. In particular, the product~\eqref{prod_Q_def} turns out to be equal to an analogous one~\eqref{Choi_quot_prod} induced by the Choi-Effros product on $\Nstar$. 
 
Now, in the discussion after Theorem~\ref{C_star_UCP}, we saw that the Choi-Effros product does not behave well on $\mathcal B(\mathcal H)$, as it is not associative in general. However, we are now going to see that this is true when restricted to $\Nstar$, making it a $\ast$-algebra with respect to the Choi-Effros product.
\begin{Lemma}
\label{asso_N_CE_prop}
Let $\Phi$ be a UCP map. Then the Choi-Effros product defined by Eq.~\eqref{star_prod_H} is associative within its Choi-Effros decoherence-free algebra $\Nstar$, i.e.
\begin{equation}
(X \star Y) \star Z = X \star (Y \star Z), \quad X,Y,Z\in \Nstar.
\end{equation}
\end{Lemma}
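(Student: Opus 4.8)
The plan is to reduce associativity of $\star$ on $\Nstar$ to associativity on the attractor subspace by means of the direct-sum decomposition
\begin{equation}
\Nstar = \Attr(\Phi) \oplus \mathcal{K}({\mathcal{P}_{P}})
\end{equation}
furnished by Theorem~\ref{sum_dec_N_th}. Since $(\Attr(\Phi),\star,\|\cdot\|)$ is a $C^\ast$-algebra by Theorem~\ref{C_star_UCP}, the product $\star$ is already associative on $\Attr(\Phi)$, and the attractor subspace is closed under $\star$. Hence the whole problem boils down to controlling how the summand $\mathcal{K}({\mathcal{P}_{P}})$ interacts with $\star$. I expect the crux, and the main obstacle, to be the following annihilation property, which I would isolate as the key lemma: for every $K \in \mathcal{K}({\mathcal{P}_{P}})$ and every $A \in \mathcal{B}(\mathcal{H})$,
\begin{equation}
A \star K = K \star A = 0 .
\end{equation}

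To prove this annihilation property I would use that $\mathcal{P}_{P}$ is UCP, hence $2$-positive. For $K \in \mathcal{K}({\mathcal{P}_{P}})$ one has $\mathcal{P}_{P}(K^\ast K) = \mathcal{P}_{P}(KK^\ast) = 0$ by the definition~\eqref{eq:definition_K_idempotent_channel}. Applying $\mathcal{P}_{P}$ entrywise to the manifestly positive matrix $\begin{pmatrix} K^\ast \\ A^\ast \end{pmatrix}\begin{pmatrix} K & A \end{pmatrix} \geqslant 0$ and invoking $2$-positivity yields the positive block matrix
\begin{equation}
\begin{pmatrix} \mathcal{P}_{P}(K^\ast K) & \mathcal{P}_{P}(K^\ast A) \\ \mathcal{P}_{P}(A^\ast K) & \mathcal{P}_{P}(A^\ast A) \end{pmatrix} \geqslant 0 ,
\end{equation}
whose $(1,1)$ block vanishes; positivity of a block operator matrix with a zero diagonal block forces its off-diagonal blocks to vanish, so $\mathcal{P}_{P}(K^\ast A) = 0$ for all $A \in \mathcal{B}(\mathcal{H})$. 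Replacing $K$ by $K^\ast$ (using $\mathcal{P}_{P}(KK^\ast)=0$) gives $K \star A = \mathcal{P}_{P}(KA) = 0$, and taking adjoints together with the $\ast$-preserving property of $\mathcal{P}_{P}$ gives $A \star K = \mathcal{P}_{P}(AK) = 0$.

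With the annihilation property at hand, I would conclude by decomposing $X = X_0 + K_X$, $Y = Y_0 + K_Y$, $Z = Z_0 + K_Z$, with $X_0,Y_0,Z_0 \in \Attr(\Phi)$ and $K_X,K_Y,K_Z \in \mathcal{K}({\mathcal{P}_{P}})$. By bilinearity of $\star$, every term carrying a factor in $\mathcal{K}({\mathcal{P}_{P}})$ drops out, so $X \star Y = X_0 \star Y_0 \in \Attr(\Phi)$ (the attractor subspace being $\star$-closed). Iterating once more gives $(X\star Y)\star Z = (X_0\star Y_0)\star Z_0$ and $X\star(Y\star Z) = X_0\star(Y_0\star Z_0)$, and these coincide because $\star$ is associative on the $C^\ast$-algebra $\Attr(\Phi)$. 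The only genuinely nontrivial ingredient is the annihilation property; everything else is bookkeeping resting on Theorems~\ref{C_star_UCP} and~\ref{sum_dec_N_th}.
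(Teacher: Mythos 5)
Your proof is correct, but it follows a genuinely different route from the paper's. The paper computes directly: combining Corollary~\ref{N_star_in_N_star_P} (i.e.\ $\Nstar=\mathcal{N}_{\star}(\mathcal{P}_{P})$) with Lemma~\ref{lemma_Hamana} gives the identity $\mathcal{P}_{P}(XY)=\mathcal{P}_{P}(\mathcal{P}_{P}(X)Y)=\mathcal{P}_{P}(X\mathcal{P}_{P}(Y))$ for $X,Y\in\Nstar$, and then, since $\Nstar$ is closed under the composition product (Theorem~\ref{sum_dec_N_th}), both $(X\star Y)\star Z$ and $X\star(Y\star Z)$ collapse to the single expression $\mathcal{P}_{P}(XYZ)$. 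You instead exploit the other half of Theorem~\ref{sum_dec_N_th}, namely the decomposition $\Nstar=\Attr(\Phi)\oplus\mathcal{K}({\mathcal{P}_{P}})$, together with your annihilation lemma $A\star K=K\star A=0$ for $K\in\mathcal{K}({\mathcal{P}_{P}})$ and $A\in\mathcal{B}(\mathcal{H})$, to reduce associativity to the attractor subspace, where it is guaranteed by Theorem~\ref{C_star_UCP}. Your key lemma is sound (the block-matrix argument, and the fact that a positive $2\times 2$ block operator matrix with vanishing $(1,1)$ block has vanishing off-diagonal blocks, are both standard) and is of independent interest: it makes explicit that the $\star$-product vanishes identically on the ideal $\mathcal{K}({\mathcal{P}_{P}})$, so the entire $\star$-structure of $\Nstar$ is carried by $\Attr(\Phi)$; this also illuminates why the seminorm of Appendix~\ref{Nstar_seminorm} is degenerate exactly on $\mathcal{K}({\mathcal{P}_{P}})$. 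What the paper's route buys is the explicit formula $X\star Y\star Z=\mathcal{P}_{P}(XYZ)$ and the use of only those tools that survive the weakening from complete positivity to the Schwarz property advocated in Section~\ref{Schw_CP}; your block-matrix step invokes $2$-positivity, which is strictly stronger, so in the Schwarz-only setting it would have to be replaced, e.g.\ by the polarization argument of Lemma~\ref{lemma_lemme} adapted to the composition product: $0=\mathcal{P}_{P}(K^{\ast}K)\geqslant\mathcal{P}_{P}(K)^{\ast}\mathcal{P}_{P}(K)$ forces $\mathcal{P}_{P}(K)=0$, so $K$ saturates the operator Schwarz inequality and hence $\mathcal{P}_{P}(K^{\ast}A)=\mathcal{P}_{P}(K)^{\ast}\mathcal{P}_{P}(A)=0$ for all $A$.
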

\begin{proof}

First, given $X,Y \in \Nstar$, a direct consequence of Corollary~\ref{N_star_in_N_star_P} and Lemma~\ref{lemma_Hamana} is the following:
\begin{equation}
\mathcal{P}_{\mathrm{P}}(X Y)=\mathcal{P}_{\mathrm{P}}(\mathcal{P}_{\mathrm{P}}(X) \mathcal{P}_{\mathrm{P}}(Y))= \mathcal{P}_{\mathrm{P}}(\mathcal{P}_{\mathrm{P}}(X) Y)=\mathcal{P}_{\mathrm{P}}(X\mathcal{P}_{\mathrm{P}}(Y)).
\end{equation}
Thus, it follows that
\begin{equation}
\begin{split}
X \star (Y \star Z)&= \mathcal{P}_{\mathrm{P}}(X\mathcal{P}_{\mathrm{P}}(YZ))=\mathcal{P}_{\mathrm{P}}(\mathcal{P}_{\mathrm{P}}(X)YZ)=\mathcal{P}_{\mathrm{P}}(XYZ)\\&= \mathcal{P}_{\mathrm{P}}(XY\mathcal{P}_{\mathrm{P}}(Z))= \mathcal{P}_{\mathrm{P}}(\mathcal{P}_{\mathrm{P}}(XY)Z)=(X \star Y) \star Z.
\end{split}
\end{equation}
for any $X,Y,Z \in \mathcal{N}_{\star}$.
Here, we employed the fact that the composition product makes $\Nstar$ a $C^\ast$-algebra.
\end{proof}
In the next theorem, we are going to see that $\Nstar$ carries a $B^\ast$-algebraic structure with respect to the Choi-Effros product.

\begin{Theorem}
\label{C*-dec_th}
Let $\Phi$ be a UCP map. Then its Choi-Effros decoherence-free algebra $\Nstar$ is a Banach $\ast$-algebra with respect to the Choi-Effros product $\star$ defined by Eq.~\eqref{star_prod_H} and the operator norm $\| \cdot \|$, or $(\Nstar , \star , \| \cdot \|)$ in short. 
\end{Theorem}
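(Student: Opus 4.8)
The plan is to verify the defining axioms of a Banach $\ast$-algebra one at a time, drawing almost entirely on structure already established. Recall that by Theorem~\ref{sum_dec_N_th} the triple $(\Nstar , \cdot , \|\cdot\|)$ is a $C^\ast$-algebra; in particular $\Nstar$ is a norm-closed $\ast$-subalgebra of the finite-dimensional space $\mathcal{B}(\mathcal{H})$, hence a Banach space under the operator norm and closed under the adjoint. This immediately supplies completeness and the isometry of the involution, $\|X^\ast\| = \|X\|$, so the only remaining work concerns how $\star$ interacts with the linear, normed, and $\ast$ structures.

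First I would check that $\star$ is a well-defined internal operation on $\Nstar$: for $X,Y \in \Nstar$ one has $X \star Y = \mathcal{P}_{P}(XY) \in \Ran(\mathcal{P}_{P}) = \Attr(\Phi) \subseteq \Nstar$, where the last inclusion is condition \textit{b)} of Theorem~\ref{mul_ch_UCP}. Bilinearity is inherited from the bilinearity of the composition product together with the linearity of $\mathcal{P}_{P}$, and associativity is precisely the content of Lemma~\ref{asso_N_CE_prop}. Hence $(\Nstar , \star)$ is an associative algebra over $\mathbb{C}$.

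Next comes compatibility of $\star$ with the involution. Since $\mathcal{P}_{P}$ is a positive (indeed UCP) map, it is $\ast$-preserving, so for $X,Y \in \Nstar$
\begin{equation}
(X \star Y)^\ast = \mathcal{P}_{P}(XY)^\ast = \mathcal{P}_{P}\big((XY)^\ast\big) = \mathcal{P}_{P}(Y^\ast X^\ast) = Y^\ast \star X^\ast ,
\end{equation}
which, together with the already available conjugate-linearity and involutivity of the adjoint, shows that $\ast$ is a genuine involution for the $\star$-product. It then remains to establish submultiplicativity of the operator norm with respect to $\star$; here I would use that $\mathcal{P}_{P}$ is unital and positive, hence a norm contraction with $\|\mathcal{P}_{P}\| = 1$, giving
\begin{equation}
\|X \star Y\| = \|\mathcal{P}_{P}(XY)\| \leqslant \|XY\| \leqslant \|X\|\,\|Y\| , \quad X,Y \in \Nstar .
\end{equation}
Collecting these facts — an associative bilinear product, a compatible isometric involution, a submultiplicative norm, and completeness — yields that $(\Nstar , \star , \|\cdot\|)$ is a Banach $\ast$-algebra.

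I do not expect a genuine obstacle: all the hard structural input (the direct-sum decomposition making $\Nstar$ a $C^\ast$-algebra under composition, and the associativity of $\star$ on $\Nstar$) has been isolated in Theorem~\ref{sum_dec_N_th} and Lemma~\ref{asso_N_CE_prop}, so the theorem reduces to assembling these with the contraction property of $\mathcal{P}_{P}$. The one point meriting care is that one must \emph{not} expect the $C^\ast$-identity to hold for $\star$: any $X \in \mathcal{K}(\mathcal{P}_{P})$ satisfies $X^\ast \star X = \mathcal{P}_{P}(X^\ast X) = 0$ while $X \neq 0$, so $\|X^\ast \star X\| = 0 \neq \|X\|^2$. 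This is exactly why the statement asserts only a Banach $\ast$-algebra structure rather than a $C^\ast$-algebra one, in contrast with Theorem~\ref{C_star_UCP} for $\Attr(\Phi)$.
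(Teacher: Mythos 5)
Your proof is correct and follows essentially the same route as the paper's: closure of $\star$ on $\Nstar$ via condition \textit{b)} of Theorem~\ref{mul_ch_UCP}, associativity via Lemma~\ref{asso_N_CE_prop}, and submultiplicativity from the contractivity of $\mathcal{P}_{P}$. The extra details you supply (explicit verification that $(X\star Y)^\ast = Y^\ast \star X^\ast$, and the observation that the $C^\ast$-identity fails on $\mathcal{K}(\mathcal{P}_{P})$) are correct and consistent with the paper's Remark~\ref{Nstar_Cast_rem} and Example~\ref{ex_1}, but do not constitute a different argument.
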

\begin{proof}

Condition \textit{b)} of Theorem~\ref{mul_ch_UCP} and Lemma~\ref{asso_N_CE_prop} imply that $\Nstar$ is a $\ast$-algebra with respect to the Choi-Effros product $\star$. Also, $\Nstar$ is a Banach space with respect to the operator norm $\| \cdot \|$. Therefore it only remains to prove the submultiplicativity of $\| \cdot \|$ with respect to the Choi-Effros product, which follows from the contractivity of $\mathcal{P}_{\mathrm{P}}$
\begin{equation}
\| X \star Y \| = \| \mathcal{P}_{\mathrm{P}}(XY) \| \leqslant \| X \| \| Y \|, \quad X,Y \in \Nstar.
\end{equation}  
\end{proof}

\begin{Remark}
\label{Nstar_Cast_rem}
Importantly, $(\Nstar , \star , \| \cdot \|)$ is not generally a $C^\ast$-algebra, as it does not satisfy the $C^\ast$-identity, see Example~\ref{ex_1}.
\end{Remark}


Once the algebraic structure of $\Nstar$ has been understood, we can delve into its connection with $\Attr(\Phi)$ and $\mathcal{N}$.
In this respect, the next theorem shows that the space $\Nstar$ reduces to $\mathcal{N}$ in the faithful case and, importantly, $\Nstar$ allows to characterize faithful UCP maps in terms of their multiplicative properties.
\begin{Theorem}[Multiplicative characterization of faithfulness]
\label{faith_mul_ch}
Let $\Phi$ be a UCP map with attractor subspace $\Attr(\Phi)$ and Choi-Effros decoherence-free algebra $\Nstar$. Then
\begin{equation}
\label{mul_ch_faith}
\Phi \mbox{ faithful} \quad \Leftrightarrow \quad \Attr(\Phi)=\Nstar. 
\end{equation}
Furthermore, in such case, $\Nstar=\mathcal{N} = \mathcal{N}(\mathcal{P}_{\mathrm{P}})$.
\end{Theorem} 
\begin{proof}
Let us first prove that $\Nstar = \mathcal{N}$ in the faithful case.
From property~\textit{a)} of Proposition~\ref{ch_N_star_th}, we only need to prove $\Nstar\subseteq \mathcal N$. Given $X\in \Nstar$ and $n \in \mathbb{N}$, we can write
\begin{equation}
\Phi^n(X^\ast\star X)=\Phi^n(X)^\ast \star \Phi^n(X).
\end{equation}
Therefore, given $\rho$ invertible invariant state of $\Phi^\dagger$, namely $\mathcal{P}_{\mathrm{P}}^\dagger(\rho)=\Phi^\dagger(\rho)=\rho$, we have
\begin{equation}
\begin{split}
0&=\tr(\rho(\Phi^n(X^\ast\star X)-\Phi^n(X)^\ast \star \Phi^n(X)))= \mathrm{tr}(\rho\mathcal{P}_{\mathrm{P}}(\Phi^n(X^\ast X)-\Phi^n(X)^\ast\Phi^n(X)))\\
&= \mathrm{tr}(\mathcal P_{\mathrm{P}}^\dagger(\rho)(\Phi^n(X^\ast X)-\Phi^n(X)^\ast\Phi^n(X)))=\tr(\rho(\Phi^n(X^\ast X)-\Phi^n(X)^\ast\Phi^n(X))).
\end{split}
\end{equation}
Since $\Phi^n(X^\ast X)\geqslant \Phi^n(X^\ast)\Phi^n(X)$ and $\rho$ is invertible, this implies
\begin{equation}
\Phi^n(X^\ast X)=\Phi^n(X)^\ast\Phi^n(X).
\end{equation}
The other equality follows by exchanging $X$ with $X^\ast$. Therefore, the implication $\Rightarrow)$ follows by combining the just proved equality and Theorem~\ref{attr_ch_faith}.

Let us now prove the $\Leftarrow )$ implication. 
%
%
From the direct-sum decomposition~\eqref{sum_dec_N_star} the hypothesis is equivalent to 
\begin{equation}
	\label{eq:faithful_condition}
\mathcal{K}({\mathcal{P}_{\mathrm{P}}})=\{ 0 \}.
\end{equation}

Remember that a channel is faithful if and only if $\supp(\mathcal P_{\mathrm{P}}^\dagger(\mathbb I))$, the support of $\mathcal P_{\mathrm{P}}^\dagger(\mathbb I)$, is equal to $\mathcal H$ (see~\cite[Proposition 6.9]{wolf2012quantum} and~\cite[Appendix A.2]{AFK_asympt_2}). Suppose that $\supp(\mathcal P_{\mathrm{P}}^\dagger(\mathbb I))\subset \mathcal H$, and take $X=\ketbra{\varphi}$, with $\ket{\varphi}$ a normalized vector such that $\mathcal P_{\mathrm{P}}^\dagger(\mathbb I) \ket{\varphi}=0$. Then 
\begin{equation}
\tr(\mathcal{P}_{\mathrm{P}}(X^\ast X))=\mathrm{tr}(\mathcal{P}_{\mathrm{P}}^\dagger (\mathbb{I}) X^\ast X) = \mel{\varphi}{\mathcal P_{\mathrm{P}}^\dagger(\mathbb I)}{\varphi}=0.
\end{equation}
As a result, $\mathcal P_{\mathrm{P}}(X^\ast X)=\mathcal P_{\mathrm{P}}(XX^\ast)=\mathcal P_{\mathrm{P}}(X)=0$, i.e. $X\in \mathcal{K}({\mathcal{P}_{\mathrm{P}}})$, which contradicts~\eqref{eq:faithful_condition}.
 This necessarily implies that $\mathcal{P}_{\mathrm{P}}^\dagger(\mathbb{I})>0$, meaning that $\Phi$ is faithful. Finally, the equality $\mathcal{N} = \mathcal{N}(\mathcal{P}_{\mathrm{P}})$ is a simple consequence of Eq.~\eqref{eq_Attr_N_faith} and the definition~\eqref{P_p_def} of $\mathcal{P}_{\mathrm{P}}$.
\end{proof}
\begin{Remark}
We will see in Example~\ref{ex_2} that $\Attr(\Phi)=\mathcal{N}$ does not imply the faithfulness of $\Phi$.
\end{Remark}
\begin{Remark}
\label{Phi_N_not_aut}
Clearly, $\Phi\vert_{\Nstar}$ is a $\ast$-homomorphism of the $B^\ast$-algebra $(\Nstar , \star , \| \cdot \|)$. However, as we will see later in Example~\ref{ex_1}, it is not generally an automorphism. As it happens for $\mathcal{N}$, $\Phi \vert_{\Nstar}$ is always invertible in the faithful case as a consequence of Theorem~\ref{faith_mul_ch}. However, it should be noted that there are non-faithful maps, including invertible ones, that also possess this property, as illustrated in Example~\ref{ex_2}.
\end{Remark}
From the previous theorem, the following corollary holds.
\begin{Corollary}
\label{faith_cor}
Let $\Phi$ be a UCP map. If $\Attr(\Phi)=\Nstar$, then we have
\begin{equation}
\Attr(\Phi)=\Nstar =\mathcal{N}.
\end{equation}
\end{Corollary}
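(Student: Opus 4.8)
The plan is to recognize that the hypothesis $\Attr(\Phi)=\Nstar$ is exactly the condition appearing in Theorem~\ref{faith_mul_ch}, so the corollary reduces to stitching together results already in hand. First I would invoke Theorem~\ref{faith_mul_ch}: since the backward implication there states that $\Attr(\Phi)=\Nstar$ implies that $\Phi$ is faithful, the assumption immediately gives us faithfulness of $\Phi$.

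Once faithfulness is established, I would simply read off the conclusion from the chain of implications~\eqref{faith_mul_impl}, namely $\Phi \text{ faithful} \Rightarrow \Attr(\Phi)=\Nstar=\mathcal{N}$. Equivalently, one can cite its two constituent results directly: Proposition~\ref{star_space_faith} gives $\Nstar=\mathcal{N}$ in the faithful case, while Theorem~\ref{attr_ch_faith} gives $\Attr(\Phi)=\mathcal{N}$. Combining these two equalities yields $\Attr(\Phi)=\Nstar=\mathcal{N}$, which is precisely the desired conclusion.

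I do not anticipate any genuine obstacle here, since this is an immediate corollary: the only content is the observation that the hypothesis coincides verbatim with the multiplicative characterization of faithfulness just proved, after which the equalities among $\Attr(\Phi)$, $\Nstar$, and $\mathcal{N}$ all collapse by the faithful-case results. The single point worth stating explicitly is that the forward implication of Theorem~\ref{faith_mul_ch} is not needed; only its converse direction (faithfulness \emph{from} the equality $\Attr(\Phi)=\Nstar$) is used, together with the faithful-case identifications.
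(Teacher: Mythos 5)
Your proposal is correct and follows essentially the same route as the paper: the hypothesis yields faithfulness via the converse direction of Theorem~\ref{faith_mul_ch}, after which the faithful-case identifications (Proposition~\ref{star_space_faith}, together with Theorem~\ref{attr_ch_faith} or equivalently the chain~\eqref{faith_mul_impl}) give the stated equalities. No gaps.
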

\begin{proof}
Any UCP map satisfying the assumption is faithful by~\eqref{mul_ch_faith}. Thus, the assertion follows from the second part of Theorem~\ref{faith_mul_ch}.
\end{proof}
Let us summarize the inclusions between $\Attr(\Phi)$, $\mathcal{N}$ and $\Nstar$ in the following theorem.
\begin{Theorem}
\label{sum_up_incl}
Let $\Phi$ be a UCP map. Let $\mathcal{N}$ be its decoherence-free algebra and let $\Nstar$ be its Choi-Effros decoherence-free algebra. Then
\begin{align}
&\Phi \mbox{ faithful} &\Leftrightarrow\quad &\Attr(\Phi)=\mathcal{N}=\Nstar,\nonumber\\
&\Phi \mbox{ peripherally automorphic} &\Leftrightarrow\quad &\Attr(\Phi)\subseteq \mathcal{N} \subseteq \Nstar,\nonumber\\
&\Phi \mbox{ not peripherally automorphic} &\Rightarrow\quad &\Attr(\Phi)  \subset \Nstar, \Attr(\Phi) \not\subseteq \mathcal{N}, \; \mathcal N\subseteq \Nstar.
\end{align}

\end{Theorem}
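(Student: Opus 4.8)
The plan is to observe that all three biconditionals and the final implication are essentially bookkeeping consequences of results already established, so the work consists in assembling them in the correct order. The inclusions available unconditionally are $\mathcal{N}\subseteq\Nstar$ from Proposition~\ref{prop_relation_N_Nstar} and $\Attr(\Phi)\subseteq\Nstar$ from condition \textit{b)} of Theorem~\ref{mul_ch_UCP}; these two facts will be invoked repeatedly and reduce each claimed chain of inclusions to a single nontrivial assertion.

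For the first line, the forward implication is exactly Eq.~\eqref{faith_mul_impl}, itself obtained by combining Proposition~\ref{star_space_faith} with Theorem~\ref{attr_ch_faith}. For the converse I would use only the equality $\Attr(\Phi)=\Nstar$ extracted from the hypothesis $\Attr(\Phi)=\mathcal{N}=\Nstar$ and feed it directly into Theorem~\ref{faith_mul_ch}, which yields faithfulness at once.

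For the second line, since $\mathcal{N}\subseteq\Nstar$ always holds, the chain $\Attr(\Phi)\subseteq\mathcal{N}\subseteq\Nstar$ is logically equivalent to the single inclusion $\Attr(\Phi)\subseteq\mathcal{N}$. This is precisely condition \textit{b)} of Theorem~\ref{mul_ch_pa_UCP}, which characterizes peripheral automorphy, so the biconditional is immediate.

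For the third line I would argue clause by clause under the standing assumption that $\Phi$ is not peripherally automorphic. The inclusion $\mathcal{N}\subseteq\Nstar$ is again Proposition~\ref{prop_relation_N_Nstar}, and $\Attr(\Phi)\not\subseteq\mathcal{N}$ is the negation of condition \textit{b)} of Theorem~\ref{mul_ch_pa_UCP}, available precisely because $\Phi$ fails to be peripherally automorphic. The only genuinely substantive point — and the step I expect to be the crux — is the strictness $\Attr(\Phi)\subset\Nstar$. The inclusion itself is Theorem~\ref{mul_ch_UCP}~\textit{b)}; to rule out equality I would suppose $\Attr(\Phi)=\Nstar$, apply Theorem~\ref{faith_mul_ch} to conclude that $\Phi$ is faithful, and then invoke Corollary~\ref{faih_impl_per_aut} to deduce that $\Phi$ is peripherally automorphic, contradicting the hypothesis. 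Hence equality is impossible and the inclusion is strict, completing the argument.
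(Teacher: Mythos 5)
Your proposal is correct, and it assembles the theorem exactly as the paper intends: this is a summary statement whose proof the paper leaves implicit, being precisely the combination of Theorem~\ref{attr_ch_faith}, Proposition~\ref{star_space_faith}, Theorem~\ref{faith_mul_ch}, Theorem~\ref{mul_ch_pa_UCP}, Theorem~\ref{mul_ch_UCP}~\textit{b)}, Proposition~\ref{prop_relation_N_Nstar}, and Corollary~\ref{faih_impl_per_aut} that you give. In particular, your contradiction argument for the strictness $\Attr(\Phi)\subset\Nstar$ (equality would force faithfulness, hence peripheral automorphy) is the right way to close the one step not literally stated earlier.
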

We conclude the present section with the following lemma, which shows that the equality $\Nstar=\mathcal{N}$ is a necessary but not sufficient condition for the faithfulness of $\Phi$.
\begin{Lemma}
\label{idem_con}
If $\Phi$ is a peripherally automorphic idempotent UCP map, then it follows that
\begin{equation}
\Nstar = \mathcal{N}.
\end{equation}
\end{Lemma}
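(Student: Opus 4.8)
The plan is to establish the reverse inclusion $\Nstar \subseteq \mathcal{N}$, since $\mathcal{N} \subseteq \Nstar$ holds for every UCP map by Proposition~\ref{prop_relation_N_Nstar}; equality then follows immediately.

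The crucial preliminary observation I would make is that for an idempotent UCP map the peripheral projection coincides with the map itself, $\mathcal{P}_{P} = \Phi$. Indeed, $\Phi^2 = \Phi$ forces the spectrum into $\{0,1\}$ with no eigennilpotents, so the peripheral spectrum is $\{1\}$ and $\Attr(\Phi) = \Fix(\Phi) = \Ran(\Phi)$; the Jordan decomposition~\eqref{Jordan_dec} reduces to $\Phi = \mathcal{P}_1$, the eigenprojection onto $\Fix(\Phi)$, which is precisely $\mathcal{P}_{P}$.

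With this identification in hand, I would invoke the structure theorem, Theorem~\ref{sum_dec_N_th}, which gives $\Nstar = \Attr(\Phi) \oplus \mathcal{K}(\mathcal{P}_{P})$, and check that each summand sits inside $\mathcal{N}$. For the first summand, peripheral automorphy yields $\Attr(\Phi) \subseteq \mathcal{N}$ directly, by condition \textit{b)} of Theorem~\ref{mul_ch_pa_UCP}. For the second summand, the identification $\mathcal{P}_{P} = \Phi$ is exactly what is needed: by definition~\eqref{eq:definition_K_idempotent_channel} one has $\mathcal{K}(\mathcal{P}_{P}) = \mathcal{K}(\Phi) = \Ker(\Phi) \cap \mathcal{N}(\Phi)$, and since $\mathcal{N}(\Phi)$ is the decoherence-free algebra of $\Phi$, namely $\mathcal{N}$ itself, this gives $\mathcal{K}(\mathcal{P}_{P}) \subseteq \mathcal{N}$. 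Because $\mathcal{N}$ is a linear space (indeed a $C^\ast$-algebra), the sum $\Attr(\Phi) + \mathcal{K}(\mathcal{P}_{P})$ lies in $\mathcal{N}$, so $\Nstar \subseteq \mathcal{N}$ and the claim follows.

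Both hypotheses enter in an essential and localized manner, and I would want the write-up to make this transparent. Idempotence is what collapses $\mathcal{P}_{P}$ to $\Phi$, thereby turning $\mathcal{N}(\mathcal{P}_{P})$ into $\mathcal{N}$ — precisely the step that fails in general, where $\mathcal{N} \subsetneq \mathcal{N}(\mathcal{P}_{P})$ may occur (cf.\ the Remark following Corollary~\ref{N_star_in_N_star_P}); peripheral automorphy is what upgrades the universally valid $\Attr(\Phi) \subseteq \Nstar$ to $\Attr(\Phi) \subseteq \mathcal{N}$. The only point demanding care — the main, though mild, obstacle — is the clean justification of $\mathcal{P}_{P} = \Phi$ and the consequent replacement of $\mathcal{N}(\mathcal{P}_{P})$ by $\mathcal{N}$; once that is settled, the argument is a one-line containment of each direct summand.
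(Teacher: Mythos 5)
Your proof is correct, but it takes a different route from the paper's. The paper argues element-wise: for $X \in \Nstar$, the defining identity at $n=1$, combined with $\mathcal{P}_P = \Phi$ (idempotency) and condition \textit{b)} of Theorem~\ref{ch_per_aut_1} (closure of $\Attr(\Phi)=\Fix(\Phi)$ under composition), collapses to $\Phi(X^\ast X) = \Phi\bigl(\Phi(X)^\ast\Phi(X)\bigr) = \Phi(X)^\ast\Phi(X)$, and since $\Phi^n = \Phi$ this already puts $X$ in $\mathcal{N}$ via Proposition~\ref{ch_N}. You instead invoke the structure theorem, Theorem~\ref{sum_dec_N_th}, and verify that each summand of $\Nstar = \Attr(\Phi)\oplus\mathcal{K}(\mathcal{P}_P)$ lands in $\mathcal{N}$: the first via Theorem~\ref{mul_ch_pa_UCP}~\textit{b)}, the second via $\mathcal{K}(\mathcal{P}_P)=\mathcal{K}(\Phi)=\Ker(\Phi)\cap\mathcal{N}$, both hinging on the identification $\mathcal{P}_P=\Phi$, which you justify correctly (the minimal polynomial of an idempotent divides $x^2-x$, so $\Phi$ is diagonalizable with peripheral spectrum $\{1\}$ and $\Phi=\mathcal{P}_1=\mathcal{P}_P$). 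Both arguments need the same two hypotheses and both get the reverse inclusion $\mathcal{N}\subseteq\Nstar$ from Proposition~\ref{prop_relation_N_Nstar}; the trade-off is that the paper's computation is lighter and self-contained (it never needs the direct-sum decomposition), while yours is more modular and makes the role of each hypothesis — and the geography of $\Nstar$ inside $\mathcal{N}$ — completely explicit.
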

\begin{proof}
Taking $X\in\Nstar$, and using condition $b)$ of Theorem~\ref{ch_per_aut_1}, we have
\begin{equation}
\Phi(X^\ast X)= \Phi(\Phi(X)^\ast\Phi(X))=\Phi(X)^\ast\Phi(X).
\end{equation} 
Since $\Phi^n = \Phi$, we have $X\in \mathcal{N}$, i.e. the assertion.
\end{proof}

\subsection{The qubit case}
\label{sec:qubit}

We now discuss a couple of results on the asymptotics of qubit UCP maps (i.e.\ $d=\dim\mathcal{H}=2$), highlighting the peculiarities of the two-dimensional case.

\begin{Proposition}
Let $\Phi$ be a non-faithful qubit UCP map. Then its attractor subspace $\Attr(\Phi)$ is one-dimensional
\begin{equation}
\Attr(\Phi)=\{ c\mathbb{I} \,,\, c\in\mathbb{C} \}, 
\end{equation}
and its peripheral projection $\mathcal{P}_{\mathrm{P}}$ reads
\begin{equation}
\label{one_proj}
\mathcal{P}_{\mathrm{P}}(X)=\tr(\rho X)\mathbb{I},
\end{equation}
where $\Phi^\dagger(\rho)=\rho$ is the (non-invertible) invariant state of the adjoint map $\Phi^\dagger$.
\end{Proposition}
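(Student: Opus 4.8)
The plan is to prove the single, stronger statement that $\mathcal{P}_{P}$ collapses every observable onto a scalar multiple of $\mathbb{I}$; both the one-dimensionality of $\Attr(\Phi)=\Ran(\mathcal{P}_{P})$ and the explicit formula for $\mathcal{P}_{P}$ then fall out immediately. The first step is to locate the asymptotic state. Recall from the proof of Theorem~\ref{faith_mul_ch} that $\Phi$ is faithful exactly when $\supp(\mathcal{P}_{P}^\dagger(\mathbb{I}))=\mathcal{H}$. Since $\mathcal{P}_{P}$ is unital, $\tr(\mathcal{P}_{P}^\dagger(\mathbb{I}))=\tr(\mathcal{P}_{P}(\mathbb{I}))=\tr(\mathbb{I})=2$, so the positive operator $\mathcal{P}_{P}^\dagger(\mathbb{I})$ is nonzero; non-faithfulness forces its support to be a proper, hence one-dimensional, subspace of the qubit space. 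Thus $\rho:=\tfrac12\mathcal{P}_{P}^\dagger(\mathbb{I})$ is a rank-one state, and I would fix an orthonormal basis $\{e_1,e_2\}$ diagonalizing it, writing $\rho=\ketbra{e_1}{e_1}=E_{11}$ so that $\mathcal{P}_{P}^\dagger(\mathbb{I})=2E_{11}$.

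Next I would evaluate $\mathcal{P}_{P}$ on the matrix units $E_{ij}$. Positivity of $\mathcal{P}_{P}$ gives $\mathcal{P}_{P}(E_{22})\geqslant 0$, while $\tr(\mathcal{P}_{P}(E_{22}))=\tr(\mathcal{P}_{P}^\dagger(\mathbb{I})E_{22})=2\tr(E_{11}E_{22})=0$; a positive operator of vanishing trace is zero, so $\mathcal{P}_{P}(E_{22})=0$. This is the pivotal identity, and everything after it is multiplicative bookkeeping. Since $E_{22}$ is a projection with $\mathcal{P}_{P}(E_{22})=0$, the reduced criterion of Proposition~\ref{ch_N}, applied to the idempotent UCP map $\mathcal{P}_{P}$ (for which $\mathcal{P}_{P}^{\,n}=\mathcal{P}_{P}$), places $E_{22}$ in the decoherence-free algebra $\mathcal{N}(\mathcal{P}_{P})$. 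By the defining multiplicativity of $\mathcal{N}(\mathcal{P}_{P})$ in Eq.~\eqref{dec_def}, I can then factor $\mathcal{P}_{P}$ across products involving $E_{22}$: using $E_{22}E_{21}=E_{21}$ and $E_{12}E_{22}=E_{12}$, one gets $\mathcal{P}_{P}(E_{21})=\mathcal{P}_{P}(E_{22}E_{21})=\mathcal{P}_{P}(E_{22})\mathcal{P}_{P}(E_{21})=0$ and likewise $\mathcal{P}_{P}(E_{12})=0$. Finally, unitality gives $\mathcal{P}_{P}(E_{11})=\mathcal{P}_{P}(\mathbb{I})-\mathcal{P}_{P}(E_{22})=\mathbb{I}$.

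Assembling these four evaluations by linearity yields $\mathcal{P}_{P}(X)=X_{11}\mathbb{I}=\tr(\rho X)\mathbb{I}$ for every $X\in\mathcal{B}(\mathcal{H})$, so $\Attr(\Phi)=\Ran(\mathcal{P}_{P})=\mathbb{C}\mathbb{I}$ is one-dimensional and $\mathcal{P}_{P}$ has the asserted form. It then remains only to check that $\rho$ is $\Phi^\dagger$-invariant and non-invertible. The latter holds by construction (rank one). For the former, note that $\Attr(\Phi)=\mathbb{C}\mathbb{I}\subseteq\Fix(\Phi)$, so the range of $\mathcal{P}_{P}$ consists of fixed points and $\Phi\mathcal{P}_{P}=\mathcal{P}_{P}$; taking Hilbert–Schmidt adjoints gives $\Phi^\dagger\mathcal{P}_{P}^\dagger=\mathcal{P}_{P}^\dagger$, and evaluating at $\mathbb{I}$ produces $\Phi^\dagger(\rho)=\rho$.

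The main obstacle I anticipate is precisely the passage from the single scalar identity $\mathcal{P}_{P}(E_{22})=0$ to the annihilation of the off-diagonal images $\mathcal{P}_{P}(E_{12})$ and $\mathcal{P}_{P}(E_{21})$. Positivity and the trace argument dispose of the diagonal unit $E_{22}$, but $E_{12}$ and $E_{21}$ are not positive and cannot be controlled that way; the essential input is the multiplicative-domain/decoherence-free structure of $\mathcal{P}_{P}$ furnished by Proposition~\ref{ch_N} together with Eq.~\eqref{dec_def}, which lets the vanishing of $\mathcal{P}_{P}(E_{22})$ propagate to the off-diagonal blocks through factorization. Everything else is routine, and the peculiarity of $d=2$ enters only in that a proper support of the asymptotic state is necessarily one-dimensional.
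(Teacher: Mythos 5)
Your argument is correct in substance, but it follows a genuinely different route from the paper's. The paper's proof is a two-line dimension count: it invokes the structure theorem for attractor subspaces \cite[Theorem 6.16]{wolf2012quantum} to conclude $\dim\Attr(\Phi)\in\{1,2\}$, with the value $2$ admissible only for faithful maps, and then identifies $\mathcal{P}_P^\dagger$ with the unique one-dimensional idempotent channel $X\mapsto\tr(X)\rho$. You instead reconstruct $\mathcal{P}_P$ by hand: non-faithfulness plus the support criterion ($\Phi$ faithful iff $\supp(\mathcal{P}_P^\dagger(\mathbb{I}))=\mathcal{H}$, the same fact the paper quotes in the proof of Theorem~\ref{faith_mul_ch}) forces $\rho=\tfrac{1}{2}\mathcal{P}_P^\dagger(\mathbb{I})$ to be rank one; positivity and the trace pairing kill $\mathcal{P}_P(E_{22})$; and---this is the key step---the multiplicative-domain characterization (Proposition~\ref{ch_N} applied to the idempotent UCP map $\mathcal{P}_P$, combined with the defining multiplicativity~\eqref{dec_def}) propagates that single identity to $\mathcal{P}_P(E_{12})=\mathcal{P}_P(E_{21})=0$, whence $\mathcal{P}_P(X)=\tr(\rho X)\mathbb{I}$ by unitality and linearity. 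Your proof is longer but self-contained: it avoids the external structure theorem entirely, it produces the formula for $\mathcal{P}_P$ directly rather than inferring it from a classification, and it makes explicit where $d=2$ enters (a nonzero proper support must be one-dimensional). The paper's proof is shorter but leaves the exclusion of $\dim\Attr(\Phi)=2$ for non-faithful maps as a black box.

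One step is stated incorrectly, though it is easily repaired. At the end you write that $\Phi\mathcal{P}_P=\mathcal{P}_P$ and that ``taking Hilbert--Schmidt adjoints gives $\Phi^\dagger\mathcal{P}_P^\dagger=\mathcal{P}_P^\dagger$''. The adjoint of $\Phi\mathcal{P}_P$ is $\mathcal{P}_P^\dagger\Phi^\dagger$, not $\Phi^\dagger\mathcal{P}_P^\dagger$, so what you actually obtain is $\mathcal{P}_P^\dagger\Phi^\dagger=\mathcal{P}_P^\dagger$, which evaluated at $\mathbb{I}$ is useless because $\Phi^\dagger$ is trace-preserving, not unital. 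The fix is one line: by Eq.~\eqref{P_p_for} the map $\mathcal{P}_P$ commutes with $\Phi$, so $\mathcal{P}_P\Phi=\Phi\mathcal{P}_P=\mathcal{P}_P$, and taking adjoints of $\mathcal{P}_P\Phi=\mathcal{P}_P$ does give $\Phi^\dagger\mathcal{P}_P^\dagger=\mathcal{P}_P^\dagger$; evaluating at $\mathbb{I}$ then yields $\Phi^\dagger(\rho)=\rho$ as you claim. (Alternatively: $\Ran(\mathcal{P}_P^\dagger)=\mathbb{C}\rho$ is $\Phi^\dagger$-invariant and $\Phi^\dagger$ preserves the trace, so $\Phi^\dagger(\rho)=\rho$.) With this patch your proof is complete.
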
 
\begin{proof}
Consider a qubit evolving under a UCP map $\Phi$, and call $m_{\mathrm{P}}=\dim(\Attr(\Phi))$. In this case, as a consequence of~\cite[Theorem 2]{AF_bounds}, we have
\begin{equation}
m_{\mathrm{P}}=1,2,
\end{equation}
where the latter value is admissible only for faithful UCP maps. The second part of the assertion follows by taking the adjoint of the one-state contraction channel
\begin{equation}
\mathcal{P}_{\mathrm{P}}^\dagger(X)=\tr(X)\rho,
\end{equation}
which is the only admissible one-dimensional idempotent quantum channel.
\end{proof}
An immediate consequence of the latter result is that the attractor subspace of any qubit UCP map is a $C^\ast$-algebra with respect to the composition product.
\begin{Corollary} 
\label{qubit_pa}
Let $\Phi$ be a qubit UCP map. Then $\Phi$ is peripherally automorphic.
\end{Corollary}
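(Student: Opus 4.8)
The plan is to split the argument into the two mutually exclusive cases provided by the faithfulness dichotomy, since the preceding results already pin down the structure of $\Attr(\Phi)$ in each case.

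\textbf{Faithful case.} If $\Phi$ is a faithful qubit UCP map, there is nothing qubit-specific to do: by Corollary~\ref{faih_impl_per_aut}, every faithful UCP map is peripherally automorphic, so the claim holds immediately.

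\textbf{Non-faithful case.} Here I would invoke the preceding Proposition, which asserts that for a non-faithful qubit UCP map the attractor subspace collapses to the one-dimensional space $\Attr(\Phi)=\{c\mathbb{I}\,,\,c\in\mathbb{C}\}$. This is the crux of the argument: the dimension count for qubits (coming from \cite[Theorem~6.16]{wolf2012quantum}) forbids any intermediate possibility, so a non-faithful map is forced into the trivial situation where $\Attr(\Phi)$ is merely the multiples of the identity. Such a space is manifestly closed under the composition product, since $(a\mathbb{I})(b\mathbb{I})=ab\,\mathbb{I}\in\Attr(\Phi)$. By the equivalence of conditions \textit{a)} and \textit{b)} in Theorem~\ref{ch_per_aut_1}, closure of $\Attr(\Phi)$ under composition is exactly the statement that $\Phi$ is peripherally automorphic.

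Combining the two cases exhausts all qubit UCP maps, so every such map is peripherally automorphic. I do not expect a genuine obstacle here: the entire content is that in dimension two the attractor subspace is either all of $\Attr(\Phi)$ for a faithful map (handled by the general Corollary~\ref{faih_impl_per_aut}) or one-dimensional, and one-dimensionality makes closure under the composition product automatic. The only point requiring care is ensuring that no third case is possible, which is precisely what the qubit dimension bound $m_P\in\{1,2\}$ guarantees.
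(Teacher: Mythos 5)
Your proof is correct and follows essentially the same route as the paper: the corollary is stated there as an immediate consequence of the preceding proposition, with the faithful case covered by Corollary~\ref{faih_impl_per_aut} and the non-faithful case reduced, via the one-dimensionality $\Attr(\Phi)=\{c\mathbb{I}\}$, to closure under the composition product and condition \textit{b)} of Theorem~\ref{ch_per_aut_1}.
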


\section{Examples}
\label{examples_sec}
We will now consider some interesting examples in the qubit and qutrit cases in order to illustrate and clarify the previous results and remarks. 

First, observe that, by combining Theorem~\ref{sum_up_incl} and Corollary~\ref{qubit_pa}, we have $\Attr(\Phi) \subseteq \mathcal{N} \subseteq \mathcal{N}_{\star}$ for qubit UCP maps. Using the following two-dimensional examples, we will now discuss the strictness of these inclusions.
\begin{Example}
\label{ex_1_qubit}
Take the idempotent UCP map $\mathcal{P}$ on the space $\mathcal{M}_2$ of complex matrices of order 2
\begin{equation}
\label{1_qubit}
\mathcal{P}(X) = 
\begin{pmatrix}
x_{11} & \\
& x_{11}
\end{pmatrix} 
= \tr(\rho X) \mathbb{I}, \quad \rho =  \begin{pmatrix}
1 & \\
& 0
\end{pmatrix},
\end{equation}
where henceforth the unspecified elements of a matrix are zeros. Again, $\mathcal{P}$ is peripherally automorphic by Corollary~\ref{qubit_pa}, while it is not faithful since the only stationary state of $\mathcal{P}^{\dagger}$ is $\rho$. Moreover, 
\begin{equation}
\Attr(\mathcal{P}) = \Fix(\mathcal{P}) = \mathbb{C} \mathbb{I},
\end{equation}
whereas by direct computation the decoherence-free algebra $\mathcal{N}$ reads
\begin{equation}
\mathcal{N} = \left\{  X =  
\begin{pmatrix}
x_{11} &   \\
 & x_{22}  
\end{pmatrix} \,,\, x_{ii}\in\mathbb{C},\, i=1,2 \right\}.
\end{equation}
Finally, $\mathcal{N} = \mathcal{N}_{\star}$ by Lemma~\ref{idem_con}, and thus,
\begin{equation}
\label{sc_1}
\Attr(\mathcal{P}) \subset \mathcal{N} = \mathcal{N_{\star}}.
\end{equation} 
\end{Example}
\begin{Example}
\label{ex_2_qubit}
Consider the Markovian UCP map on $\mathcal{M}_2$ defined as
\begin{equation}
\Phi = e^{\mathcal{P} - \mathsf{1}},
\end{equation} 
with $\mathcal{P}$ given by~\eqref{1_qubit}.
Again, $\Phi$ is peripherally automorphic and, indeed, 
\begin{equation}
\Attr(\Phi) = \Attr(\mathcal{P}) = \mathbb{C} \mathbb{I}.
\end{equation}
In addition, it is easy to check that $\mathcal{N}=\Attr(\Phi)$, while 
\begin{equation}
\mathcal{N}_{\star} = \left\{  X =  
\begin{pmatrix}
x_{11} &   \\
 & x_{22}  
\end{pmatrix} \,,\, x_{ii}\in\mathbb{C},\, i=1,2 \right\}.
\end{equation}
Thus, 
\begin{equation}
\label{sc_2}
\Attr(\mathcal{P}) = \mathcal{N} \subset \mathcal{N_{\star}}.
\end{equation}
\end{Example}
As a consequence of Corollary~\ref{faith_cor}, in the faithful case $\Attr(\Phi) = \mathcal{N} = \mathcal{N}_{\star}$. We can show that the latter scenario together with the ones in Examples~\ref{ex_1_qubit} and~\ref{ex_2_qubit} [see Eqs.~\eqref{sc_1} and~\eqref{sc_2}] exhaust all the possibilities in the qubit case. To this purpose, we have to show that for a qubit UCP map we cannot have $\Attr(\Phi) \subset \mathcal{N} \subset \mathcal{N}_{\star}$, viz. the two inclusions are strict. 

Let $\Phi$ be a non-faithful UCP map. According to Theorem~\ref{sum_dec_N_th}, $\mathcal{N}_{\star}$ is a $C^\ast$-subalgebra of $\mathcal{B}(\mathcal{H})$, and its only possible dimensions are $1,2,4$ when $d=2$ by~\cite[p. 74]{Davidson1996}. However, the inclusions can be strict only when $\dim (\mathcal{N}_{\star}) = 4$, viz.\ $\mathcal{N}_{\star} = \mathcal{B}(\mathcal{H})$. Let us now show that, for any $d \geqslant 2$, the latter equality necessarily implies that $\Phi$ is unitary, in contrast with the non-faithfulness of $\Phi$.  
In fact, we have that
\begin{equation}
X \star Y = \mathcal{P}_{\mathrm{P}}(XY) = \mathcal{P}_{\mathrm{P}}(X) \star \mathcal{P}_{\mathrm{P}}(Y) 
, \quad \forall X,Y \in \mathcal{B}(\mathcal{H}),
\end{equation}
where we combined Lemma~\ref{lemma_lemme} 
and Corollary~\ref{N_star_in_N_star_P} in the second step. Thus, $\mathcal{P}_{\mathrm{P}}$ is a homomorphism between the C$^\ast$-algebras $(\mathcal{B}(\mathcal{H}) , \cdot , \| \cdot \|)$ and $(\Ran(\mathcal{P}_{\mathrm{P}}) , \star , \| \cdot \|)$, implying that $\Ker(\mathcal{P}_{\mathrm{P}}) \subseteq \mathcal{B}(\mathcal{H})$ is an ideal. Since $\mathcal{B}(\mathcal{H})$ has no proper ideals and $\mathcal{P}_{\mathrm{P}}(\mathbb{I}) = \mathbb{I}$, we conclude that  $\mathcal{P}_{\mathrm{P}}$ is invertible. Thus, this means that 
$\mathcal{P}_{\mathrm{P}} = \mathsf{1}$ 
and, consequently, $\Phi$ is unitary by~\cite[Theorem 6.1]{wolf2012quantum}. 

Let us now turn our attention to the qutrit case.
The first qutrit example is inspired by~\cite[Example 5.3]{fidaleo2022spectral}, and illustrates what happens for non-peripherally automorphic UCP maps.
\begin{Example}
\label{ex_1}
Consider an idempotent UCP map $\mathcal{P}$ on $\mathcal{M}_3$, the space of $3\times 3$ complex matrices 
\begin{equation}
\mathcal{P}(X)=\begin{pmatrix}
x_{11} & x_{12} & \\
x_{21} & x_{22} & \\
 &  & x_{11}
\end{pmatrix}.
\end{equation}
Clearly,
\begin{equation}
\label{ex_1_Attr}
\Attr(\mathcal{P})=\left\{ X = \begin{pmatrix}
x_{11} & x_{12} & \\
x_{21} & x_{22} & \\
 &  & x_{11}
\end{pmatrix} \,, \,  x_{ij}\in\mathbb{C}, i,j=1,2  \right\},
\end{equation}
from which it follows that $\mathcal{P}$ is not peripherally automorphic, as
\begin{equation}
\mathcal{P}(X^\ast X) \neq X^\ast X,
\end{equation}
whenever $X \in \Attr(\mathcal{P})$ with $x_{21}\neq 0$.

A straightforward computation shows that its decoherence-free algebra reads
\begin{align}
\mathcal{N} &= \left\{ X =  
\begin{pmatrix}
x_{11} &  & \\
 & x_{22} & \\
 &  & x_{33}
\end{pmatrix} \,,\, x_{ii}\in\mathbb{C},\, i=1,2,3
\right\}, 
\end{align} 
while its Choi-Effros decoherence-free algebra is
\begin{align}
\Nstar &= \left\{ X = 
\begin{pmatrix}
x_{11} & x_{12} & \\
x_{21} & x_{22} & \\
 &  & x_{33}
\end{pmatrix} \,,\,   x_{ij}\in\mathbb{C},\, i,j=1,2,\, x_{33} \in\mathbb{C} \right\}.
\label{ex_1_Nstar}
\end{align} 
Thus, $\mathcal{N}\subset \Nstar$ and $\Attr(\mathcal{P}) \subset \Nstar$, implying that the inclusions stated in property~\textit{a)} of Proposition~\ref{ch_N_star_th} and condition \textit{b)} of Theorem~\ref{mul_ch_UCP} can be strict.

According to Remark~\ref{Nstar_Cast_rem}, $\Nstar$ is not a ${C}^{\ast}$-algebra with respect to the Choi-Effros product and the operator norm $\| \cdot \|$, since
\begin{equation}
\| \mathcal{P}(X^\ast X) \| < \| X \|^2,
\end{equation}
whenever $X\in\mathcal{N}\subset \Nstar$ and $\abs{x_{33}}>\abs{x_{11}},\abs{x_{22}}$.

Also, the subspace $\mathcal{K}({\mathcal{P}_{\mathrm{P}}})$ in~\eqref{sum_dec_N_star} coincides with $\mathcal{K}({\mathcal{P}})$ and takes the form  
\begin{equation}
\label{ex_1_KPp}
\mathcal{K}({\mathcal{P}_{\mathrm{P}}})= \left\{ 
 \begin{pmatrix}
& & & \\
& & &\\
& & & x_{33}
\end{pmatrix}  \,, \, x_{33}\in\mathbb{C}
\right\},
\end{equation}
and we find again~\eqref{ex_1_Nstar}, using~\eqref{ex_1_Attr} and decomposition~\eqref{sum_dec_N_star}. 

Since 
\begin{equation}
\mathcal{K}({\mathcal{P}_{\mathrm{P}}}) \subseteq \Ker(\mathcal{P})\cap \Nstar,
\end{equation}
$\mathcal{P}\vert_{\Nstar}$ is a non-injective $\ast$-homomorphism with respect to the Choi-Effros product~\eqref{star_prod_H}, so it is not a $\ast$-automorphism, in line with Remark~\ref{Phi_N_not_aut}. 

\end{Example}
The last example, treating the peripherally automorphic case when $d=3$, is quite similar to Examples~\ref{ex_1_qubit} and~\ref{ex_2_qubit}.
\begin{Example}
\label{ex_2}
Let us consider the Markovian UCP map on $\mathcal{M}_3$
\begin{equation}
\label{Mark_pa_ex}
\Phi=e^{\mathcal{P}-\mathsf{1}},
\end{equation}
where the idempotent UCP map $\mathcal{P}$, already considered in~\cite[Example 2.11]{rajaramaperipheral_2}, reads
\begin{equation}
\label{P_def}
\mathcal{P}(X)=
\begin{pmatrix}
x_{11} & &\\
& x_{22} &\\
& & x_{11}
\end{pmatrix}.
\end{equation}
Immediately, one has
\begin{equation}
\Attr(\Phi)=\Attr(\mathcal{P})= \left\{ 
\begin{pmatrix}
x_{11} & &\\
& x_{22} &\\
& & x_{11}
\end{pmatrix} \,, \, x_{11},x_{22}\in\mathbb{C}
\right\},
\end{equation}
therefore $\Phi$ as well as $\mathcal{P}$ is peripherally automorphic because of condition \textit{b)} of Theorem~\ref{ch_per_aut_1}, although it is not faithful~\cite{rajaramaperipheral_2}. Note also that $\mathcal{P}$ is the peripheral projection of $\Phi$.

Moreover, we obtain
\begin{equation}
\Attr(\Phi)=\mathcal{N},
\end{equation}
which proves that
\begin{equation}
\label{attr_N_no_faith}
\Attr(\Phi)=\mathcal{N}\not\Rightarrow \Phi \mbox{ faithful}.
\end{equation}

Also, it turns out that
\begin{equation}
\label{ex_2-Nstar}
\Nstar=\left\{ X = \begin{pmatrix}
x_{11} & & \\
& x_{22} & \\
& & x_{33}
\end{pmatrix} \,,\, x_{ii}\in\mathbb{C}, i=1,2,3 
\right\},
\end{equation}
so, as we expect from Theorem~\ref{faith_mul_ch}, $\mathcal{N}\subset \Nstar$. Moreover, the ideal $\mathcal{K}({\mathcal{P}_{\mathrm{P}}})$ has exactly the form~\eqref{ex_1_KPp} and consistently we recover again~\eqref{ex_2-Nstar} with the aid of the direct-sum decomposition \eqref{sum_dec_N_star}. 

Finally, if we take the UCP map $\mathcal{P}$ defined by Eq.~\eqref{P_def}, then its decoherence-free algebra $\mathcal{N}({\mathcal{P}})$ has the structure~\eqref{ex_2-Nstar} by Lemma~\ref{idem_con}, so
\begin{equation}
\mathcal{N}=\Attr(\mathcal{P}) \subset \mathcal{N}({\mathcal{P}}).
\end{equation}  
Therefore the inclusions \textit{b)} of Theorem~\ref{mul_ch_pa_UCP} and Eq.~\eqref{N_in_N_P} can be strict.
\end{Example}

\section{Is complete positivity needed?}
\label{Schw_CP}
The answer is no: in this section we will generalize Theorem~\ref{C_star_UCP} and Theorem~\ref{aut_th_UCP} to Schwarz maps, showing that all the previous results are valid for this larger class of maps. Therefore, complete positivity and, more generally, $k$-positivity with $k\geqslant 2$ is not used in their full power for characterizing the asymptotic dynamics of an open quantum system, see~\cite[Section 1]{Frigerio_Verri_82} and~\cite[Section V]{wolf2010inverse} for an analogous comment in the Schr\"odinger picture of the evolution.

A positive unital map $\Phi$ is called strongly positive~\cite{robinson1982strongly}, 1-1/2 positive~\cite{kielanowicz2017spectral}, or \textit{Schwarz}~\cite{wolf2010inverse}, if and only if it satisfies the operator Schwarz inequality~\eqref{op_Schw_ineq}.
(Note that in the literature the unitality assumption is sometimes dropped in the definition of Schwarz maps, see e.g.~\cite{siudzinska2021interpolating}). Similarly, we will call a positive unital map obeying the $\star$-operator Schwarz inequality~\eqref{star_op_Schw_ineq} a \emph{$\star$-Schwarz map}. 

The transposition map is an example of a positive unital map not obeying Eq.~\eqref{op_Schw_ineq}. As underlined by Choi~\cite[Corollary 2.8]{choi1974schwarz}, any 2-positive unital map is a Schwarz map, and, more generally, a 2-positive map satisfies a generalized operator Schwarz inequality, see~\cite[Corollary 7]{carlen2022characterizing}. However, 2-positivity is not necessary for~\eqref{op_Schw_ineq} to hold, see~\cite{choi1980some,chruscinski2020kadison}. Furthermore, a concept lying between the operator Schwarz inequality and 2-positivity was recently introduced~\cite{carlen2022characterizing}. See~\cite{AFM_2025} for a hierarchy of intermediate notions between positivity and complete positivity.
\subsection{Choi-Effros product for Schwarz maps}
\label{C_E_Schwarz}
According to~\cite[Theorem 3.1]{choi1977injectivity}, the attractor subspace $\Attr(\Phi)$ of any unital 2-positive map is a ${C}^\ast$-algebra with respect to the Choi-Effros product~\eqref{star_prod_H} and the operator norm $\| \cdot \|$, where the projection $\mathcal{P}_{\mathrm{P}}$ is again defined by~\eqref{P_p_def} and admits the limit representation~\eqref{P_p_for}, see~\cite[Proposition 6.3]{wolf2012quantum}. Indeed, the result holds for any Schwarz map $\Phi$ (see~\cite[Theorem 2.3]{hamana1979injective} or~\cite[Corollary  2.3.69]{garcia2014non}). See also~\cite[Theorem 7]{AFK_asympt_4} for the explicit structure of the attractor subspace of Schwarz maps.
\begin{Proposition}
\label{Schw_alg_attr}
Let $\Phi$ be a Schwarz map. Then its attractor subspace $\Attr(\Phi)$ defined by~\eqref{attr_def} is a ${C}^\ast$-algebra with respect to the Choi-Effros product $\star$~\eqref{star_prod_H} and the operator norm $\| \cdot \|$, or in short $(\Attr(\Phi) , \star , \| \cdot \|)$ is a $C^\ast$-algebra.
\end{Proposition}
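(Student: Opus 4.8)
The plan is to follow exactly the strategy behind Theorem~\ref{C_star_UCP}, reducing the statement to the idempotent case via the peripheral projection $\mathcal{P}_P$, and then to check that every ingredient of that reduction survives when complete positivity is weakened to the operator Schwarz inequality~\eqref{op_Schw_ineq}. The decisive structural input will be the generalized Choi--Effros theorem for idempotent unital \emph{Schwarz} projections, available in \cite[Theorem 2.3]{hamana1979injective} and \cite[Corollary 2.3.69]{garcia2014non}, so the main task is to produce an idempotent unital Schwarz map whose range is $\Attr(\Phi)$.

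First I would construct $\mathcal{P}_P$ for a Schwarz map. By Remark~\ref{spectr_UP} the spectral properties a)--c) hold for any positive unital map, and such a map is a contraction for the operator norm, hence power-bounded; consequently its peripheral eigenvalues are semisimple and, exactly as in~\eqref{P_p_for}, the peripheral projection can be written as a limit $\mathcal{P}_P=\lim_{i\to\infty}\Phi^{n_i}$ along some increasing sequence $(n_i)_{i\in\mathbb{N}}$, with $\Ran(\mathcal{P}_P)=\Attr(\Phi)$ as defined in~\eqref{attr_def}. Thus $\mathcal{P}_P$ is a unital, idempotent, positive map projecting onto the attractor subspace.

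Next I would verify that $\mathcal{P}_P$ is itself a Schwarz map. The key observation is that the class of Schwarz maps is closed under composition: if $\Psi$ and $\Phi$ both satisfy~\eqref{op_Schw_ineq}, then for every $X$ one has
\begin{equation}
\Phi\Psi(X^\ast X)\geqslant \Phi(\Psi(X)^\ast\Psi(X))\geqslant \Phi\Psi(X)^\ast\,\Phi\Psi(X),
\end{equation}
where the first inequality uses the Schwarz property of $\Psi$ together with the monotonicity of the positive map $\Phi$, and the second uses the Schwarz property of $\Phi$ applied to $\Psi(X)$. Hence every power $\Phi^{n_i}$ is Schwarz. Since the operator Schwarz inequality is a closed condition (it is preserved under the pointwise limit of maps on the finite-dimensional space $\mathcal{B}(\mathcal{H})$), it passes to $\mathcal{P}_P=\lim_i\Phi^{n_i}$. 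Therefore $\mathcal{P}_P$ is an idempotent unital Schwarz map.

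Finally I would conclude by invoking the structure theorem for such projections: the range of an idempotent unital Schwarz map, equipped with the Choi--Effros product~\eqref{star_prod_H} and the operator norm $\|\cdot\|$, is a $C^\ast$-algebra. Applying this to $\mathcal{P}_P$ yields that $(\Attr(\Phi),\star,\|\cdot\|)$ is a $C^\ast$-algebra, which is the claim. The subtle point, and the place where the proof genuinely differs from the UCP case, is the last step: the classical Choi--Effros construction was stated for completely positive projections, and one must be sure that the $C^\ast$-identity and the associativity of $\star$ on $\Ran(\mathcal{P}_P)$ follow from the operator Schwarz inequality alone rather than from full complete positivity. This is precisely the content of the cited results \cite{hamana1979injective,garcia2014non}, which establish the Choi--Effros structure for idempotent Schwarz maps, so the argument is complete once their applicability to $\mathcal{P}_P$ has been justified as above.
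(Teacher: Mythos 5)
Your proposal is correct and follows essentially the same route as the paper: the paper's proof consists precisely of invoking the idempotent-Schwarz version of the Choi--Effros theorem (\cite[Theorem 2.3]{hamana1979injective}, \cite[Corollary 2.3.69]{garcia2014non}) applied to the peripheral projection $\mathcal{P}_P$. Your additional verifications --- that $\mathcal{P}_P$ exists as a limit of powers for a positive unital map and that the Schwarz property survives composition and pointwise limits, so $\mathcal{P}_P$ is itself an idempotent unital Schwarz map --- are exactly the steps the paper leaves implicit, and they are argued correctly.
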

\begin{Remark}
Note that the operator Schwarz inequality is only needed to prove the associativity of the Choi-Effros product and the ${C}^\ast$-identity on $\Attr(\Phi)$. Indeed, even the weaker $\star$-operator Schwarz inequality~\eqref{star_op_Schw_ineq} is sufficient in order to prove associativity. 
\end{Remark}
\begin{Remark}
As already observed, a Schwarz map is not generally 2-positive. Indeed, even if we focus our attention on unital positive projections, 2-positivity is equivalent to the Schwarz property only when $d=2,3$~\cite[Example pg 77, Theorem 3.1]{osaka1991positive}.

\end{Remark}
\subsection{Asymptotics of Schwarz maps}
According to~\cite[Theorem 2.12]{rajarama2022peripheral}, the asymptotic map $\Phi_{\mathrm{as}}$ of a UCP map $\Phi$ defined by Eq.~\eqref{as_map_def} is a $\ast$-automorphism. Indeed, the result may be generalized to Schwarz maps.
\begin{Theorem}
Let $\Phi$ be a Schwarz map. Then its asymptotic map $\Phi_{\mathrm{as}}: \Attr(\Phi) \mapsto \Attr(\Phi)$ defined by Eq.~\eqref{as_map_def} is a $\ast$-automorphism, i.e. $\Phi_{\mathrm{as}}$ is invertible and
\begin{equation}
\Phi_{\mathrm{as}}(X \star Y)=\Phi_{\mathrm{as}}(X) \star \Phi_{\mathrm{as}}(Y), \quad X,Y\in \Attr(\Phi).
\end{equation}
\end{Theorem}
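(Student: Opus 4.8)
The plan is to show that $\Phi_{\mathrm{as}}$ is a bijective, unital, $\star$-positive map on the $C^\ast$-algebra $(\Attr(\Phi),\star,\|\cdot\|)$ (which is a $C^\ast$-algebra by Proposition~\ref{Schw_alg_attr}) that \emph{saturates} the $\star$-operator Schwarz inequality, and then to deduce multiplicativity by polarization. First, bijectivity of $\Phi_{\mathrm{as}}$ is immediate: $\Attr(\Phi)$ is finite-dimensional and $\Phi$-invariant with $\Phi\,\Attr(\Phi)=\Attr(\Phi)$ by Eq.~\eqref{Attr_inv_sub}, so $\Phi_{\mathrm{as}}$ is surjective, hence injective. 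The conceptual point to settle at the outset is that the order inherited from $\mathcal{B}(\mathcal{H})$ and the intrinsic order of the $C^\ast$-algebra $(\Attr(\Phi),\star)$ coincide: positivity in a unital $C^\ast$-algebra is fixed by the unit and the norm, and here both the unit $\mathbb{I}$ and the operator norm $\|\cdot\|$ are the ones inherited from $\mathcal{B}(\mathcal{H})$. Consequently $\Phi_{\mathrm{as}}$ is $\star$-positive (it sends $X^\ast\star X\geqslant 0$ to $\Phi(X^\ast\star X)\geqslant0$ by positivity of $\Phi$), and restricting~\eqref{star_op_Schw_ineq} to $X\in\Attr(\Phi)$ (where every term lies in $\Attr(\Phi)$ by invariance) it satisfies $\Phi_{\mathrm{as}}(X^\ast\star X)\geqslant\Phi_{\mathrm{as}}(X)^\ast\star\Phi_{\mathrm{as}}(X)$ in $(\Attr(\Phi),\star)$.

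Next I would show that $\Phi_{\mathrm{as}}^{-1}$ is also $\star$-positive and $\star$-Schwarz. Since $\mathcal{P}_P=\lim_{i}\Phi^{n_i}$ by Eq.~\eqref{P_p_for} and $\mathcal{P}_P$ acts as the identity on $\Attr(\Phi)=\Ran(\mathcal{P}_P)$, we have $\Phi_{\mathrm{as}}^{n_i}\rightarrow\mathrm{id}_{\Attr(\Phi)}$, whence $\Phi_{\mathrm{as}}^{\,n_i-1}=\Phi_{\mathrm{as}}^{-1}\Phi_{\mathrm{as}}^{n_i}\rightarrow\Phi_{\mathrm{as}}^{-1}$. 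Each power $\Phi_{\mathrm{as}}^{\,n_i-1}$ is a composition of $\star$-positive, $\star$-Schwarz maps, hence is itself $\star$-positive and $\star$-Schwarz; since the positive cone is closed and both properties pass to limits, $\Phi_{\mathrm{as}}^{-1}$ inherits them.

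With $\Phi_{\mathrm{as}}$ and $\Phi_{\mathrm{as}}^{-1}$ both Schwarz and $\Phi_{\mathrm{as}}$ order-preserving, I would force equality. Applying the $\star$-Schwarz inequality for $\Phi_{\mathrm{as}}^{-1}$ to $\Phi_{\mathrm{as}}(X)$ gives $\Phi_{\mathrm{as}}^{-1}\big(\Phi_{\mathrm{as}}(X)^\ast\star\Phi_{\mathrm{as}}(X)\big)\geqslant X^\ast\star X$; applying the order-preserving map $\Phi_{\mathrm{as}}$ yields $\Phi_{\mathrm{as}}(X)^\ast\star\Phi_{\mathrm{as}}(X)\geqslant\Phi_{\mathrm{as}}(X^\ast\star X)$, which together with the Schwarz inequality for $\Phi_{\mathrm{as}}$ gives the saturation $\Phi_{\mathrm{as}}(X^\ast\star X)=\Phi_{\mathrm{as}}(X)^\ast\star\Phi_{\mathrm{as}}(X)$ for every $X\in\Attr(\Phi)$. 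Finally, inserting $X\rightarrow X+tY$ and $X\rightarrow X+itY$ with $X,Y\in\Attr(\Phi)$, $t\in\mathbb{R}$, into this identity and collecting the terms linear in $t$ — exactly the polarization used in the proof of Lemma~\ref{lemma_lemme}, now carried out entirely inside $\Attr(\Phi)$ — yields $\Phi_{\mathrm{as}}(X\star Y)=\Phi_{\mathrm{as}}(X)\star\Phi_{\mathrm{as}}(Y)$. Since $\Phi_{\mathrm{as}}$ is positive it also preserves the involution, and being bijective it is a $\ast$-automorphism of $(\Attr(\Phi),\star,\|\cdot\|)$.

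The step I expect to require the most care is the identification of the two positivity structures on $\Attr(\Phi)$: the $\star$-operator Schwarz inequality~\eqref{star_op_Schw_ineq} is derived in the operator order of $\mathcal{B}(\mathcal{H})$, whereas the two-Schwarz argument manipulates inequalities in the intrinsic order of the abstract $C^\ast$-algebra $(\Attr(\Phi),\star)$. Making explicit that these orders agree — because both are determined by the common unit and norm — is what legitimizes applying $\Phi_{\mathrm{as}}$ as an order-preserving map to a $\star$-Schwarz inequality, and is the crux on which the whole argument rests. The remaining ingredients (invertibility, and closedness of the Schwarz and positive cones under composition and limits) are routine in the finite-dimensional setting.
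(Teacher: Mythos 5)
Your proposal is correct and follows essentially the same route as the paper's proof: realize $\Phi_{\mathrm{as}}^{-1}$ as a limit of powers of $\Phi_{\mathrm{as}}$ so that it is also $\star$-Schwarz, play the two Schwarz inequalities against each other to force saturation, and then conclude multiplicativity by the polarization of Lemma~\ref{lemma_lemme}. The only differences are organizational: the paper establishes saturation just at peripheral eigenvectors (where $\abs{\lambda}=1$ makes $\Phi_{\mathrm{as}}(X)^\ast\star\Phi_{\mathrm{as}}(X)=X^\ast\star X$ automatic) and extends by linearity, whereas you obtain it directly for every $X\in\Attr(\Phi)$; and the order-identification you single out as the crux is genuine but is sidestepped in the paper, which works throughout with the operator order of $\mathcal{B}(\mathcal{H})$ inherited by $\Attr(\Phi)$ rather than the intrinsic order of $(\Attr(\Phi),\star,\|\cdot\|)$.
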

\begin{proof}
First, notice that the asymptotic map $\Phi_{\mathrm{as}}$ is a $\star$-Schwarz map, since $\Phi$ is. Given $\Phi_{\mathrm{as}}(X)=\Phi(X)=\lambda X$, with $\lambda \in \spec_{\mathrm{P}}(\Phi)$, by using the $\star$-operator Schwarz inequality~\eqref{star_op_Schw_ineq} we have 
\begin{equation}
\Phi_{\mathrm{as}}(X^\ast \star X) \geqslant \Phi_{\mathrm{as}}(X)^\ast \star \Phi_{\mathrm{as}} (X) = X^\ast \star X.
\end{equation}
By~\eqref{P_p_for} the inverse map $\Phi_{\mathrm{as}}^{-1}: \Attr(\Phi) \mapsto \Attr(\Phi)$ of $\Phi_{\mathrm{as}}$ may be obtained as a limit
\begin{equation}
\label{as_for_inverse}
\Phi_{\mathrm{as}}^{-1}= \lim_{i\rightarrow \infty} \Phi_{\mathrm{as}}^{n_i-1},
\end{equation}
which makes it a $\star$-Schwarz map too. As a result
\begin{equation}
\Phi_{\mathrm{as}}^{-1}(X^\ast \star X) \geqslant \Phi_{\mathrm{as}}^{-1}(X)^\ast \star \Phi_{\mathrm{as}}^{-1} (X) = X^\ast \star X \Rightarrow X^\ast \star X \geqslant \Phi_{\mathrm{as}}(X^\ast \star X),
\end{equation}
and thus
\begin{equation}
\Phi_{\mathrm{as}} (X^\ast \star X)= X^\ast \star X =\Phi_{\mathrm{as}}(X)^\ast\star \Phi_{\mathrm{as}}(X),
\end{equation}
i.e.\ $X$ saturates the $\star$-operator Schwarz inequality~\eqref{star_op_Schw_ineq}, implying the assertion by Lemma~\ref{lemma_lemme}. 
\end{proof}

An immediate consequence of the latter theorem is the following corollary.
\begin{Corollary}
Let $\Phi$ be a Schwarz map with asymptotic map $\Phi_{\mathrm{as}}$. Then
\begin{equation}
\Phi_{\mathrm{as}} \mbox{ is a $\star$-Schwarz map } \Leftrightarrow\; \Phi_{\mathrm{as}} \mbox{ is a UCP map} .
\end{equation}
Here, complete positivity is meant with respect to the $C^\ast$-algebra $(\Attr(\Phi) , \star , \| \cdot \|)$.
\end{Corollary}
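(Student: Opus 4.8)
The plan is to extract both implications directly from the theorem just proved, which already identifies $\Phi_{\mathrm{as}}$ as a $\ast$-automorphism of the $C^\ast$-algebra $(\Attr(\Phi) , \star , \| \cdot \|)$. The guiding observation is that, on any $C^\ast$-algebra, the $\star$-Schwarz and UCP conditions are nested, with UCP the a priori stronger one; the corollary asserts their coincidence for $\Phi_{\mathrm{as}}$, and this will follow precisely because $\Phi_{\mathrm{as}}$ is an automorphism.

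For the implication $\Leftarrow)$ I would note that a UCP map with respect to $(\Attr(\Phi) , \star , \| \cdot \|)$ is in particular $2$-positive and unital on this algebra, so by Choi's inequality \cite[Corollary 2.8]{choi1974schwarz} it obeys the operator Schwarz inequality for the $\star$ product, which is exactly the $\star$-operator Schwarz inequality~\eqref{star_op_Schw_ineq}. Hence $\Phi_{\mathrm{as}}$ is $\star$-Schwarz, using nothing beyond the generic chain (complete positivity) $\Rightarrow$ ($2$-positivity) $\Rightarrow$ (Schwarz) applied to the Choi-Effros structure.

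For the implication $\Rightarrow)$ I would invoke the preceding theorem: $\Phi_{\mathrm{as}}$ is a $\ast$-automorphism, hence a bijective $\ast$-homomorphism between $C^\ast$-algebras. Any $\ast$-homomorphism of $C^\ast$-algebras is automatically completely positive, which is the same fact already used in Corollary~\ref{CP_as_map} via \cite[Lemma 1.2.2 (ii)]{stormer2012positive}, and it is unital since $\Phi_{\mathrm{as}}(\mathbb{I})=\mathbb{I}$, the unit of $(\Attr(\Phi) , \star , \| \cdot \|)$. Thus $\Phi_{\mathrm{as}}$ is UCP. I would explicitly remark that the $\star$-Schwarz hypothesis is in fact redundant here: the automorphism property alone forces UCP, so both sides of the stated equivalence hold unconditionally once $\Phi$ is assumed Schwarz.

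I do not expect a genuine obstacle; the only point demanding care is the bookkeeping that complete positivity is to be understood relative to the Choi-Effros $C^\ast$-structure on $\Attr(\Phi)$ rather than the ambient one on $\mathcal{B}(\mathcal{H})$. Once this is pinned down, both directions are immediate consequences of the automorphism property established in the theorem together with standard facts about $\ast$-homomorphisms of $C^\ast$-algebras.
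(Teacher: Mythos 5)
Your proposal is correct and follows essentially the same route as the paper, which states the corollary as an ``immediate consequence'' of the preceding theorem: the automorphism property yields UCP via \cite[Lemma 1.2.2 (ii)]{stormer2012positive} (exactly as in Corollary~\ref{CP_as_map}), while the converse is the standard chain UCP $\Rightarrow$ $2$-positive $\Rightarrow$ Schwarz applied to $(\Attr(\Phi), \star, \|\cdot\|)$. Your observation that the $\star$-Schwarz hypothesis is redundant---so that both sides of the equivalence hold unconditionally once $\Phi$ is Schwarz---is also precisely the point the paper makes in the remark following the corollary.
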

The latter corollary states precisely that the $\star$-operator Schwarz inequality and the complete positivity (with respect to the algebraic structure of $\Attr(\Phi)$) are equivalent at the level of the asymptotic map!
\begin{Remark}
Theorem~\ref{C_star_UCP} and Theorem~\ref{aut_th_UCP} cannot be generalized to unital positive maps, as it will be discussed in depth in a forthcoming paper, see also \cite{idel2013structure} for the explicit structure of $\Attr(\Phi)$ and $\Phi_{\mathrm{as}}$ in such case. 
\end{Remark}
To sum up, the operator Schwarz inequality is sufficient in order to describe the large-time dynamics of an open quantum system. In other words, complete positivity should be relevant in order to study the \emph{transient} evolution of the system, i.e. how the system approaches the asymptotic states of the dynamics. This still remains a big challenge, as confirmed by the few general works on the topic~\cite{chruscinski2021universal,chruscinski2021constraints}.

Indeed, although completely positive maps are well characterized by a variety of representations, it is still hard to unveil physically transparent information from this property. In particular, as already underlined in Remark~\ref{spectr_UP}, the spectral properties discussed in Subsection~\ref{asymptotic_definition_section} still hold for unital positive maps. Thus we still need to fully understand the spectral implications of complete positivity.

\section{Conclusions}
\label{concl}
In this Article we have explored the large-time dynamics of finite-dimensional open quantum systems in the Heisenberg picture. In particular, following an algebraic approach, we introduced a new space $\Nstar$, called the Choi-Effros decoherence-free algebra of a UCP map $\Phi$, obtained from the standard one, see Eq.~\eqref{dec_def}, by replacing the composition product with the Choi-Effros product~\eqref{star_prod_H}. Such definition was motivated by Theorem~\ref{aut_th_UCP}, which establishes that the appropriate product in the asymptotic limit is the Choi-Effros one. The two most important features of this space, which do not hold for the standard decoherence-free algebra $\mathcal{N}$, are the following:
\begin{enumerate}
\item it has a direct-sum decomposition (Theorem~\ref{sum_dec_N_th}), implying that, in particular, $\Nstar \supseteq \Attr(\Phi)$, the attractor subspace of $\Phi$ [condition \textit{b)} of Theorem~\ref{mul_ch_UCP}];  
\item $\Attr(\Phi)=\Nstar$ is equivalent to the faithfulness of $\Phi$ (Theorem~\ref{faith_mul_ch}).
\end{enumerate}
In particular, the clear relationship between $\Nstar$ and $\Attr(\Phi)$ sheds light on the interplay between the spectral and multiplicative approaches to the asymptotics. As a consequence of Theorem~\ref{sum_dec_N_th}, $\Nstar$ is a ${C}^\ast$-algebra with respect to the composition product $\cdot$, the operator norm $\| \cdot \|$, and the adjoint $\ast$ of $\mathcal{B}(\mathcal{H})$ and, indeed, by definition, $\Nstar$ is the largest ${C}^\ast$-subalgebra of $\mathcal{B}(\mathcal{H})$ on which $(\Phi^n)_{n\in \mathbb{N}}$ acts as a $\ast$-homomorphism with respect to the Choi-Effros product.

Finally, as shown in Section~\ref{Schw_CP}, our results do not depend on the complete positivity of the open-system Heisenberg dynamics $\Phi$, but only on the operator Schwarz inequality~\eqref{op_Schw_ineq}, a quite less strict requirement. 
%
%
Therefore, given the different applications of this property in quantum dynamics, also mentioned in the Introduction, it would be interesting to understand the physical interpretation of the operator Schwarz inequality, beyond its interesting mathematical form and its physical implications.

%
%
%

\bmhead{Acknowledgments}
We acknowledge support from INFN through the project ``QUANTUM'', from the Italian National Group of Mathematical Physics (GNFM-INdAM), and from the Italian funding within the ``Budget MUR - Dipartimenti di Eccellenza 2023--2027''  - Quantum Sensing and Modelling for One-Health (QuaSiModO).\ P.F. acknowledges support from Italian PNRR MUR project CN00000013 -``Italian National Centre on HPC, Big Data and Quantum Computing''.\ D.A. acknowledges support from PNRR MUR project PE0000023-NQSTI.\ A.K. acknowledges the support by the National Science Centre (Poland) through the SONATA BIS project No. 019/34/E/ST2/00369. 

%
%
%
%
%
%

\begin{appendices}
\section{Choi-Effros product and decoherence-free algebra}
\label{Nstar_seminorm}
In this appendix we will discuss again the relation between the Choi-Effros decoherence-free algebra $\Nstar$ and the attractor subspace $\Attr(\Phi)$ of a UCP map $\Phi$. 

Remember that a $B^\ast$-algebra is not generally a $C^\ast$-algebra, as the $C^\ast$-identity is in general violated. Nevertheless, there are situations where the first can be equipped with a $C^\ast$-seminorm. This is exactly what happens for $\Nstar$, which is a $B^\ast$-algebra with respect to the Choi-Effros product $\star$, see Theorem~\ref{C*-dec_th}. 

First, let us recall the definition of $C^\ast$-seminorm~\cite[§ 39 Definition 1]{bonsall2012complete}.
	\begin{definition}
		 A $C^\ast$-seminorm $p$ on a $B^\ast$-algebra $\mathcal{A}$ is a seminorm satisfying for all $a,b\in \mathcal{A}$
		\begin{align}
			&p(a \cdot b) \leqslant p(a)p(b),\label{submul_prop}\\
			& p(a^\ast \cdot a)=p(a)^2. \label{Cstar_prop}
		\end{align}
		A non-degenerate $C^\ast$-seminorm is called a $C^\ast$-norm. 
	\end{definition}
	According to Theorem~\ref{C*-dec_th}, $(\Nstar , \star , \| \cdot \|)$ is a $B^\ast$-algebra.
	Now, we will prove that such a structure cannot be promoted to a $C^\ast$-algebra. 
	
	Indeed, motivated by~\cite[Corollary 2.2.6]{bratteli2012operator}, 
	let us define the following functional on $\Nstar$
	\begin{equation}
\label{spec_seminorm}
\omega(X)=\rho(X^\ast \star X)^{\frac{1}{2}},
\end{equation}
with $\rho(Y)$ denoting the spectral radius of $Y\in\mathcal{B}(\mathcal{H})$. In the following lemma we will prove that the functional~\eqref{spec_seminorm} is a $C^\ast$-seminorm on $\Nstar$.
\begin{Lemma}
Let $\Phi$ be a UCP map with Choi-Effros decoherence-free algebra $\Nstar$. Then the functional $\omega$ on $\Nstar$ defined by~\eqref{spec_seminorm} is a $C^\ast$-seminorm. 
\end{Lemma}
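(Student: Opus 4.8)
The plan is to show that $\omega$ coincides with the much more transparent functional $X \mapsto \| \mathcal{P}_{P}(X) \|$, and then to recognise the latter as the pullback of the $C^\ast$-norm of $(\Attr(\Phi),\star,\|\cdot\|)$ along the restriction $\mathcal{P}_{P}\vert_{\Nstar}$, which I will argue is a $\ast$-homomorphism onto $\Attr(\Phi)$. Since the pullback of a $C^\ast$-norm along any $\ast$-homomorphism of $\star$-algebras is automatically a $C^\ast$-seminorm, this reduces the whole statement to one norm identity together with one structural remark.

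First I would compute $\omega$ explicitly. For $X \in \Nstar$ the element $X^\ast \star X = \mathcal{P}_{P}(X^\ast X)$ is a genuinely positive operator of $\mathcal{B}(\mathcal{H})$, because $X^\ast X \geqslant 0$ and $\mathcal{P}_{P}$ is positive; hence its spectral radius in $\mathcal{B}(\mathcal{H})$ equals its operator norm, so $\omega(X)^2 = \rho(X^\ast \star X) = \| X^\ast \star X \|$. On the other hand \eqref{N_in_M_P} gives $X^\ast \star X = \mathcal{P}_{P}(X)^\ast \star \mathcal{P}_{P}(X)$ with $\mathcal{P}_{P}(X) \in \Attr(\Phi)$, and the $C^\ast$-identity on $(\Attr(\Phi),\star,\|\cdot\|)$ (Theorem~\ref{C_star_UCP}) yields $\| \mathcal{P}_{P}(X)^\ast \star \mathcal{P}_{P}(X) \| = \| \mathcal{P}_{P}(X) \|^2$. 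Taking square roots, $\omega(X) = \| \mathcal{P}_{P}(X) \|$.

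It then remains to observe that $\mathcal{P}_{P}\vert_{\Nstar} : (\Nstar,\star) \to (\Attr(\Phi),\star)$ is a unital $\ast$-homomorphism: it is $\ast$-preserving because $\mathcal{P}_{P}$ is positive, and it is multiplicative for $\star$ because, combining Corollary~\ref{N_star_in_N_star_P} with Lemma~\ref{lemma_Hamana} exactly as in the proof of Lemma~\ref{asso_N_CE_prop}, one has $\mathcal{P}_{P}(X \star Y) = \mathcal{P}_{P}(X) \star \mathcal{P}_{P}(Y)$ for $X,Y \in \Nstar$. Consequently, using the linearity of $\mathcal{P}_{P}$ together with the seminorm, submultiplicativity and $C^\ast$-identity of $\|\cdot\|$ on $\Attr(\Phi)$, the three defining properties follow at once: $\omega$ is a seminorm; $\omega(X \star Y) = \| \mathcal{P}_{P}(X) \star \mathcal{P}_{P}(Y) \| \leqslant \| \mathcal{P}_{P}(X) \| \, \| \mathcal{P}_{P}(Y) \| = \omega(X)\omega(Y)$; and $\omega(X^\ast \star X) = \| \mathcal{P}_{P}(X)^\ast \star \mathcal{P}_{P}(X) \| = \| \mathcal{P}_{P}(X) \|^2 = \omega(X)^2$. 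The one delicate point, and the only place where the hypotheses are really used, is the first step: reconciling the spectral radius taken in $\mathcal{B}(\mathcal{H})$ with the $C^\ast$-norm living on $\Attr(\Phi)$. Everything hinges on noticing that $X^\ast \star X$ is already a positive operator lying in $\Attr(\Phi)$, so that the two norms agree on it; once $\omega = \|\cdot\|\circ\mathcal{P}_{P}\vert_{\Nstar}$ is established the remaining verifications are purely formal.
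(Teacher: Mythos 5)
Your proof is correct, and its core coincides with the paper's: both arguments rest on the saturation identity \eqref{N_in_M_P} (restated as \eqref{aut_P_p} in the appendix), $X^\ast\star X=\mathcal{P}_{P}(X)^\ast\star\mathcal{P}_{P}(X)$ for $X\in\Nstar$, together with the $C^\ast$-structure of $(\Attr(\Phi),\star,\|\cdot\|)$ from Theorem~\ref{C_star_UCP}, and both effectively reduce everything to the identity $\omega(X)=\|\mathcal{P}_{P}(X)\|$ on $\Nstar$. The differences are organizational but worth recording. First, you obtain $\omega=\|\cdot\|\circ\mathcal{P}_{P}$ by noting that $X^\ast\star X$ is a positive operator, so its spectral radius in $\mathcal{B}(\mathcal{H})$ equals its operator norm, and then applying the $C^\ast$-identity on $\Attr(\Phi)$; the paper instead derives the corresponding statement \eqref{C*_norm_Attr} on $\Attr(\Phi)$ from the uniqueness of the $C^\ast$-norm, and deploys your positivity/normality argument only later, in Lemma~\ref{cons_norm_Q} --- your route is the more elementary and self-contained of the two. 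Second, you isolate the structural fact that $\mathcal{P}_{P}\vert_{\Nstar}$ is a unital $\ast$-homomorphism for $\star$, which indeed follows from Corollary~\ref{N_star_in_N_star_P} and Lemma~\ref{lemma_Hamana} exactly as in the proof of Lemma~\ref{asso_N_CE_prop}, and then all three axioms drop out as pullbacks of the $C^\ast$-norm; the paper uses the same ingredients piecemeal inside its chains of (in)equalities without ever naming the homomorphism property. Third, and this is a genuine gain in completeness: you explicitly verify the $C^\ast$-identity $\omega(X^\ast\star X)=\omega(X)^2$, whereas the paper's proof only treats non-negativity, homogeneity, the triangle inequality and submultiplicativity, leaving that defining axiom implicit. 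So your proposal is sound and, on this last point, slightly more thorough than the printed proof.
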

\begin{proof}
Non-negativity and homogeneity of $\omega$ are easy to verify. Now, let us prove that $\omega$ satisfies the triangle inequality, it is submultiplicative [property~\eqref{submul_prop}], and it satisfies the $C^\ast$-identity~\eqref{Cstar_prop}.

\textit{Triangle inequality:} The proof is based on 
\begin{equation}
\label{aut_P_p}
X\in \Nstar \Rightarrow X^\ast \star X = \mathcal{P}_{\mathrm{P}}(X^\ast \star X)=\mathcal{P}_{\mathrm{P}}(X)^\ast\star \mathcal{P}_{\mathrm{P}}(X).
\end{equation}
Moreover, as a consequence of the uniqueness of the ${C}^\ast$-norm for a $\ast$-algebra, and according to Theorem~\ref{C_star_UCP}, we obtain
\begin{equation}
\label{C*_norm_Attr}
\omega(X)=\| X \|, \quad X\in\Attr(\Phi).
\end{equation}
Thus, given $X,Y\in \Nstar$, we have
\begin{equation}
\begin{split}
 \omega(X+Y)&=\rho((X+Y)^\ast \star (X+Y))^{\frac{1}{2}} = \rho(\mathcal{P}_{\mathrm{P}}(X+Y)^\ast \star \mathcal{P}_{\mathrm{P}}(X+Y))^{\frac{1}{2}} \\&= \omega(\mathcal{P}_{\mathrm{P}}(X+Y)) =  \| \mathcal{P}_{\mathrm{P}}(X+Y) \| \leqslant \| \mathcal{P}_{\mathrm{P}}(X) \| + \| \mathcal{P}_{\mathrm{P}}(Y) \| \\&= \omega(\mathcal{P}_{\mathrm{P}}(X)) + \omega(\mathcal{P}_{\mathrm{P}}(Y)) \\&= \rho(\mathcal{P}_{\mathrm{P}}(X)^\ast \star \mathcal{P}_{\mathrm{P}}(X))^{\frac{1}{2}} + \rho(\mathcal{P}_{\mathrm{P}}(Y)^\ast \star \mathcal{P}_{\mathrm{P}}(Y))^{\frac{1}{2}} \\&
 = \rho(X^\ast \star X)^{\frac{1}{2}} + \rho(Y^\ast \star Y)^{\frac{1}{2}} =  \omega(X) + \omega(Y). 
\end{split}
\end{equation}

\textit{Submultiplicativity:} 
Given $X,Y \in \Nstar$, one has 
\begin{equation}
\begin{split}
\omega(X \star Y) &=  \| X \star Y \| \leqslant \| \mathcal{P}_{\mathrm{P}}(X) \| \| \mathcal{P}_{\mathrm{P}}(Y) \| \\&=\rho(\mathcal{P}_{\mathrm{P}}(X)^\ast \star \mathcal{P}_{\mathrm{P}}(X))^{\frac{1}{2}}\rho(\mathcal{P}_{\mathrm{P}}(Y)^\ast \star \mathcal{P}_{\mathrm{P}}(Y))^{\frac{1}{2}}\\&=\omega(X)\omega(Y),
\end{split}
\end{equation}
where we used again~\eqref{aut_P_p} and~\eqref{C*_norm_Attr}.

\textit{$C^\ast$-identity:}
Given $X \in \Nstar$, one obtains
\begin{equation}
\omega(X) = \rho(\mathcal{P}_{\mathrm{P}}(X)^\ast \star \mathcal{P}_{\mathrm{P}}(X))^{1/2} = \omega(\mathcal{P}_{\mathrm{P}}(X)) = \| \mathcal{P}_{\mathrm{P}}(X) \|,
\end{equation}
and, analogously, 
\begin{equation}
\omega(X^\ast \star X)^{1/2} =  \omega(\mathcal{P}_{\mathrm{P}}(X)^\ast \star \mathcal{P}_{\mathrm{P}}(X))^{1/2} = \| \mathcal{P}_{\mathrm{P}}(X)^\ast \star \mathcal{P}_{\mathrm{P}}(X) \|^{1/2}.
\end{equation}
Thus, property~\eqref{Cstar_prop} follows from the $C^\ast$-identity within $(\Attr(\Phi) , \star , \| \cdot \|)$.
\end{proof}
Unfortunately, $\omega$ is degenerate since
\begin{equation}
\omega(X)=\rho(X^\ast \star X)^{\frac{1}{2}}=\rho(\mathcal{P}_{\mathrm{P}}(X^\ast X))^{\frac{1}{2}}=0 \;\Leftrightarrow \;\mathcal{P}_{\mathrm{P}}(X^\ast X)=0,
\end{equation}
which does not imply that $X=0$ in general, see Example~\ref{ex_1}. However, it is possible to re-obtain, up to a $\ast$-automorphism, the attractor subspace from the algebra $\Nstar$ as a quotient space, i.e.\ to have an alternative proof of Theorem~\ref{sum_dec_N_th} using the formalism of $C^\ast$-seminorms.

\begin{proof}[Second proof of Theorem~\ref{sum_dec_N_th}]
Given the $B^\ast$-algebra $\Nstar$, it is standard to obtain a $C^\ast$-algebra by taking the quotient~\cite[§ 39 Corollary 4]{bonsall2012complete}
\begin{equation}
	\label{quotient_star}
\mathcal{Q}_{\star}:= \Nstar / \Ker(\omega)
\end{equation}
with respect to the kernel of $\omega$
\begin{equation}
\Ker(\omega)=\{ X \in \Nstar \,,\, \mathcal{P}_{\mathrm{P}}(X^\ast X)=0 \},
\end{equation}
which is a $\ast$-ideal of $\Nstar$. Also, it is easy to check that
\begin{equation}
\Ker(\omega)=\mathcal{K}({\mathcal{P}_{\mathrm{P}}}),
\end{equation}
with $\mathcal{K}({\mathcal{P}_{\mathrm{P}}})$ given by Eq.~\eqref{eq:definition_K_idempotent_channel}.
The inclusion $\mathcal{K}({\mathcal{P}_{\mathrm{P}}}) \subseteq \Ker(\omega)$ is equivalent to $\mathcal{K}({\mathcal{P}_{\mathrm{P}}})\subseteq \Nstar$, which was already proved in Eqs.~\eqref{aut_K_P} and~\eqref{n_Schwarz_arg}. 
Conversely, take $X$ in $\Ker(\omega)$ and observe that from the operator Schwarz inequality~\eqref{op_Schw_ineq} for the peripheral projection $\mathcal{P}_{\mathrm{P}}$
\begin{equation}
0=\mathcal{P}_{\mathrm{P}}(X^\ast X)\geqslant \mathcal{P}_{\mathrm{P}}(X)^\ast \mathcal{P}_{\mathrm{P}}(X) \Rightarrow \mathcal{P}_{\mathrm{P}}(X)^\ast \mathcal{P}_{\mathrm{P}}(X)=0,
\end{equation}
implying $\mathcal{P}_{\mathrm{P}}(X)=0$. Thus, since $X\in\Nstar$, we have
\begin{equation}
	\mathcal P_{\mathrm{P}}(XX^\ast)=\mathcal P_{\mathrm{P}}(X)\star\mathcal P_{\mathrm{P}}(X)^\ast=0,
\end{equation} 
so $X\in\mathcal K(\mathcal P_{\mathrm{P}})$. As a result, the quotient space $\mathcal{Q}_{\star}$ coincides (as a vector space) to the quotient $\mathcal Q$ given by Eq.~\eqref{Q_def}.

A product can naturally be introduced on $\mathcal{Q}_{\star}$ as
\begin{equation}
\label{Choi_quot_prod}
	\lfloor X \rfloor \star \lfloor Y  \rfloor = \{ X \star Y + K\,,\, K \in \Ker(\omega) \},
\end{equation}

as well as a norm $\| \cdot \|_{\star}$ defined via
\begin{equation}
	\label{norm_star_def}
	\norm{\lfloor X \rfloor}_\star = \omega(Z),\quad Z  \in \lfloor X \rfloor.
\end{equation}
Note that this norm is well-defined as, given $X\in \Nstar$ and $K\in\mathcal K({\mathcal P_{\mathrm{P}}})$, we have
\begin{equation}
\begin{split}
\omega(X+K)^2&=\rho(\mathcal P_{\mathrm{P}}((X+K)^\ast(X+K))) \\&=\rho(\mathcal P_{\mathrm{P}}(X^\ast X)+\mathcal P_{\mathrm{P}}(X^\ast K)+\mathcal P_{\mathrm{P}}(K^\ast X)+\mathcal P_{\mathrm{P}}(K^\ast K))
 \\ &= \rho(\mathcal P_{\mathrm{P}}(X^\ast X))=\omega(X)^2.
\end{split}
\end{equation}
In the third equality, we used the fact that $\mathcal{K}({\mathcal{P}_{\mathrm{P}}})$ is a $\ast$-ideal in $\Nstar$.
Since $\omega$ and, consequently, $\| \cdot \|_{\star}$ satisfies the ${C}^\ast$-identity, we can conclude that $(\mathcal{Q}_{\star} , \star , \| \cdot \|_{\star})$
is a ${C}^\ast$-algebra, $\ast$-automorphic to $\Attr(\Phi)$ via the map~\eqref{iso_Attr}.

Let us now prove that the quotient spaces $\mathcal{Q}$ and $\mathcal{Q}_{\star}$ coincide as $C^\ast$-algebras.
Specifically, as it is clear from~\eqref{phi_auto},
\begin{equation}
\lfloor X \rfloor \star \lfloor Y  \rfloor = \lfloor X \rfloor \cdot \lfloor Y  \rfloor, \quad X,Y \in \Attr(\Phi),
\end{equation}
Indeed, it holds for any pair $X,Y \in \Nstar$ since
\begin{equation}
\label{cl_attr}
 \lfloor X \rfloor  =  \{ X+K , K \in \mathcal{K}({\mathcal{P}_{\mathrm{P}}}) \} =  \{  \mathcal{P}_{\mathrm{P}}(X)+K ,K \in \mathcal{K}({\mathcal{P}_{\mathrm{P}}}) \} =  \lfloor \mathcal{P}_{\mathrm{P}}(X) \rfloor .
\end{equation}
Therefore, by the uniqueness of the norm making a $\ast$-algebra a ${C}^\ast$-algebra~\cite[Corollary 2.2.6]{bratteli2012operator}, the norms $\| \cdot \|$ and $\| \cdot \|_{\star}$ defined by~\eqref{norm_Q_def} and~\eqref{norm_star_def}, respectively, coincide. This concludes the alternative proof 
of Theorem~\ref{sum_dec_N_th}.
\end{proof}
\end{appendices}
\section*{Declarations}
\begin{itemize}
\item \textbf{Conflict of interest:} On behalf of all authors, the corresponding author states that there is no Conflict of interest.
\item \textbf{Data availability:} This manuscript has no associated data.
\end{itemize}








\end{document}